\theoremstyle{plain}
\newtheorem{theorem}{Theorem}[section]
\newtheorem{proposition}[theorem]{Proposition}
\newcommand*{\altqed}{\hfill\small\ensuremath{\triangleleft}}
\newtheorem{lemma}[theorem]{Lemma}
\newtheorem{corollary}[theorem]{Corollary}
\theoremstyle{definition}
\newtheorem{example}[theorem]{Example}
\newenvironment{definition}
  {\pushQED{\qed}\definitionx}
  {\popQED\enddefinitionx}
\newcommand*{\vv}[1]{\vec{\mkern1mu#1}}
\newcolumntype{L}{>{$}l<{$}}
\newcolumntype{C}{>{$}c<{$}}
\newtheorem*{rep@theorem}{\rep@title}
\newcommand{\newreptheorem}[2]{
\newenvironment{rep#1}[1]{
 \def\rep@title{#2 \ref{##1}}
 \begin{rep@theorem}}
 {\end{rep@theorem}}}
\let\oldnl\nl
\newcommand{\nonl}{\renewcommand{\nl}{\let\nl\oldnl}}
\newcommand{\citet}[2][]{\citeauthor{#2}~\cite[#1]{#2}}
\newcommand{\ie}{i.e.\@\xspace}
\newcommand{\eg}{e.g.\@\xspace}
\newcommand{\wrt}{w.\,r.\,t.\@\xspace}
\newcommand{\wloss}{w.l.o.g.\@\xspace}
\newcommand{\Wloss}{W.l.o.g.\@\xspace}
\DeclareMathOperator*{\bigovee}{\scalerel*{\ovee}{\sum}}
\newcommand{\pow}[1]{\ensuremath{\mathfrak{P}}(#1)}
\newcommand{\size}[1]{{\ensuremath{\vert\nobreak#1\nobreak\vert}}}
\newcommand{\imp}{\rightarrow}
\newcommand{\arity}[1]{\mathsf{arity}(#1)}
\providecommand{\dfn}{\vcentcolon=}
\providecommand{\ddfn}{\vcentcolon\vcentcolon=}
\newcommand{\N}{\ensuremath{\mathbb{N}}\xspace}
\newcommand{\E}{\ensuremath\mathsf{E}}
\newcommand{\PS}{\ensuremath\mathcal{PS}}
\newcommand{\MC}{\ensuremath{\mathsf{MC}}}
\newcommand{\VAL}{\ensuremath{\mathsf{VAL}}}
\newcommand{\SAT}{\ensuremath{\mathsf{SAT}}}
\newcommand{\negg}{{\sim}}
\newcommand{\depop}{{=\!\!(\cdot,\cdot)}}
\newcommand{\dep}[1]{{=\!\!(#1)}}
\newcommand{\Var}{\mathsf{Var}}
\newcommand{\Prop}{\mathsf{Prop}}
\newcommand{\Fr}{\mathsf{Fr}}
\DeclareMathOperator{\dom}{\mathsf{dom}}
\newcommand{\hook}{\hookrightarrow}
\newcommand{\md}{\mathsf{md}}
\newcommand{\qr}{\mathsf{qr}}
\newcommand{\wi}{\mathsf{w}}
\newcommand{\logic}[1]{\ensuremath{\mathsf{#1}}\xspace}
\newcommand{\compclass}[1]{\ensuremath{\mathrm{#1}}\xspace}
\newcommand{\ML}{\logic{ML}}
\newcommand{\PTL}{\logic{PTL}}
\newcommand{\FO}{\logic{FO}}
\newcommand{\FOC}{\logic{FOC}}
\newcommand{\FON}{\ensuremath{\logic{FO}(\negg)}}
\newcommand{\FONN}{\ensuremath{\logic{FO}^2(\negg)}}
\newcommand{\FODK}[2]{\ensuremath{\logic{FO}^{#1}_{#2}(\calD)}}
\newcommand{\FONK}[2]{\ensuremath{\logic{FO}^{#1}_{#2}(\negg)}}
\newcommand{\FONKD}[2]{\ensuremath{\logic{FO}^{#1}_{#2}(\negg,\calD)}}
\newcommand{\FOND}{\ensuremath{\logic{FO}(\negg,\calD)}}
\newcommand{\FOD}{\ensuremath{\logic{FO}(\calD)}}
\newcommand{\SO}{\logic{SO}}
\newcommand{\MTL}{\logic{MTL}}
\newcommand{\D}{\logic{D}}
\newcommand{\TL}{\logic{TL}}
\newcommand{\AEXPPOLY}{\ensuremath{\compclass{ATIME}\text{-}\compclass{ALT}(\mathrm{exp},\mathrm{poly})}}
\newcommand{\TOWERPOLY}{\ensuremath{\compclass{TOWER}(\mathrm{poly})}}
\newcommand{\KAX}{\ensuremath{\compclass{ATIME}\text{-}\compclass{ALT}(\mathrm{exp}_k,\mathrm{poly})}}
\newcommand{\KAXX}[1]{\ensuremath{\compclass{ATIME}\text{-}\compclass{ALT}(\mathrm{exp}_{#1},\mathrm{poly})}}
\newcommand{\NEXPTIME}{\compclass{NEXPTIME}}
\newcommand{\PSPACE}{\compclass{PSPACE}}
\newcommand{\PTIME}{\compclass{PTIME}}
\newcommand{\calA}{\ensuremath{\mathcal{A}}}
\newcommand{\calB}{\ensuremath{\mathcal{B}}}
\newcommand{\calC}{\ensuremath{\mathcal{C}}}
\newcommand{\calD}{\ensuremath{\mathcal{D}}}
\newcommand{\calJ}{\ensuremath{\mathcal{J}}}
\newcommand{\calL}{\ensuremath{\mathcal{L}}}
\newcommand{\calK}{\ensuremath{\mathcal{K}}}
\newcommand{\calR}{\ensuremath{\mathcal{R}}}
\newcommand{\calX}{\ensuremath{\mathcal{X}}}
\newcommand{\bigO}[1]{\ensuremath{{\mathcal{O}(#1)}}}
\newcommand{\leqlogm}{\ensuremath{\leq^\mathrm{log}_\mathrm{m}}}
\newcommand{\rest}[2]{#1{\upharpoonright}#2}
\newcommand{\vval}[2]{\vv{#1}\langle{}#2\rangle{}}
\newcommand{\stx}{\mathsf{st}_x}
\newcommand{\sty}{\mathsf{st}_y}
\author{\large Martin Lück\\\large Leibniz Universität Hannover, Germany\\ \large \texttt{lueck@thi.uni-hannover.de}}
\date{\vspace{-5ex}}
\title{On the Complexity of Team Logic and its Two-Variable Fragment}
\begin{document}
\maketitle

\begin{abstract}
\textbf{Abstract.} We study the logic FO(${\sim}$), the extension of first-order logic with team semantics by unrestricted Boolean negation.
It was recently shown axiomatizable, but otherwise has not yet received much attention in questions of computational complexity.

In this paper, we consider its two-variable fragment FO$^2$(${\sim}$) and prove that its satisfiability problem is decidable, and in fact complete for the recently introduced non-elementary class TOWER(poly).

Moreover, we classify the complexity of model checking of FO(${\sim}$) with respect to the number of variables and the quantifier rank, and prove a dichotomy between PSPACE- and ATIME-ALT(exp, poly)-completeness.

To achieve the lower bounds, we propose a translation from modal team logic MTL to FO$^2$(${\sim}$) that extends the well-known standard translation from modal logic ML to FO$^2$.
For the upper bounds, we translate to a fragment of second-order logic.
\end{abstract}

\noindent\textit{Keywords:} team semantics, two-variable logic, complexity, satisfiability, model checking

\medskip

\noindent\textit{2012 ACM Subject Classification:} Theory of computation $\to$ Complexity theory and logic; Logic;

\section{Introduction}

In the last decades, the work of logicians has unearthed a plethora of decidable fragments of first-order logic $\FO$.
Many of these cases are restricted quantifier prefixes, such as the BSR-fragment which contains only $\exists^*\forall^*$-sentences \cite{Ramsey1987}.
Others include the guarded fragment $\mathsf{GF}$ \cite{guarded}, the recently introduced separated fragment $\mathsf{SF}$ \cite{separated1,separated2}, or the two-variable fragment $\FO^2$ \cite{Mortimer75,scott1962decision,fo2}.
See also \citet{BorgerGG1997} for a comprehensive classification of decidable and undecidable cases.

The above fragments all have been subject to intensive research with the purpose of further pushing the boundary of decidability.
One example is the extension of $\FO^2$ with counting quantifiers, $\FOC^2$.
For both logics, the satisfiability problem is $\NEXPTIME$-complete \cite{foc2,PrattHartmann10}.

\smallskip

Another actively studied extension of first-order logic is \emph{team semantics}, first studied by \citet{hodges_compositional_1997}, which at its core is the simultaneous evaluation of formulas on whole \emph{sets} of assignments, called \emph{teams}.
Based on team semantics, \citet{vaananen_dependence_2007} introduced dependence logic $\D$, which extends $\FO$ by atomic formulas $\dep{x_1,\ldots,x_n,y}$ called \emph{dependence atoms}.
Intuitively, these atoms state that the value of the variable $y$ in the team functionally depends on the values of $x_1,\ldots,x_n$.

On the level of expressive power, $\D$ coincides with existential second-order logic $\SO(\exists)$, and consequently is undecidable as well \cite{KontinenV09,vaananen_dependence_2007}.
Nonetheless, its two-variable fragment $\D^2$, as well as its $\exists^*\forall^*$-fragment, were recently proven by \citeauthor{d2nexp} to have a decidable satisfiability problem.
In particular, satisfiability of $\D^2$ is $\NEXPTIME$-complete due to a satisfiability-preserving translation to $\FOC^2$ \cite{d2nexp}.
Curiously, the corresponding \emph{validity} problem of $\D^2$ is undecidable~\cite{il2}, which is possible since dependence logic lacks a proper negation operator.
\citet{vaananen_dependence_2007} also considered \emph{team logic} $\TL$, the extension of $\D$ by a Boolean negation $\negg$.
This logic is equivalent to full second-order logic $\SO$~\cite{tl-to-so}, and the satisfiability problem is undecidable already for $\TL^2$.

\smallskip

We study the extension $\FO(\negg)$ of $\FO$ with team semantics and negation $\negg$, but without dependence atoms.
It was introduced by \citet{abramsky_strongly_2016}, and is incomparable to $\D$ in terms of expressive power.
Regarding its complexity, however, $\FO(\negg)$ is much weaker than $\D$.
In particular, its validity problem is---as for ordinary $\FO$---complete for $\Upsigma^0_1$ \cite{axiom-fo}, whereas it is non-arithmetical for $\D$ and $\TL$~\cite{vaananen_dependence_2007}.

In this paper, we show that its two-variable fragment $\FONN$ is decidable.
More precisely, its satisfiability and validity problem of $\FONN$ are complete for the recently introduced non-elementary complexity class $\TOWERPOLY$~\cite{mtl_report}.

Our approach for the satisfiability problem is to establish both a finite model property of $\FONN$ and an upper bound for model checking.
In fact, the complexity of the model checking problem has received much less attention in team semantics so far; recent progress focused on the propositional~\cite{hannula_complexity_2018} and modal~\cite{emdl,julian,yang_extensions_2014} variants of team logic and dependence logic.

In the first-order case, \citet{mc-games} proved that model checking for $\D$ is $\NEXPTIME$-complete.
Extending this result, we show that model checking is $\AEXPPOLY$-complete for $\FO(\negg,\calD)$, where $\calD$ is any set of $\FO$-definable dependencies.
Examples for such dependencies are the aforementioned dependence atom $\depop$, but also atoms such as independence $\perp$~\cite{indep} or inclusion $\subseteq$~\cite{galliani_inex}.

\begin{table*}\centering
\begin{tabular}{LCCl}
\toprule
\text{Logic} & \text{Satisfiability} & \text{Validity} & References\\
\midrule
\FO^2 & \NEXPTIME & \text{co-}\NEXPTIME & \cite{fo2}\\
\FO^2(\negg) & \TOWERPOLY & \TOWERPOLY & Theorem~\ref{thm:main-sat} \\
\D^2 & \NEXPTIME & \Upsigma^0_1\text{-hard} & \cite{il2}\\
\TL^2 & \Uppi^0_1\text{-hard} & \Upsigma^0_1\text{-hard} & \cite{il2}\\
\midrule
& \multicolumn{2}{c}{\text{Model Checking}}  & \\
\midrule
\FO_k,\FO^n & \multicolumn{2}{c}{$\in \PTIME$} & \cite{fmt}\\[.8mm]
\FO & \multicolumn{2}{c}{\PSPACE} & \cite{fmt}\\
\midrule
\FONKD{}{k}, \FONKD{n}{} & \multicolumn{2}{c}{\PSPACE} & Theorem~\ref{thm:mc-completeness}\\[.8mm]
\FONKD{}{} & \multicolumn{2}{c}{\AEXPPOLY} & Theorem~\ref{thm:mc-completeness} \\
\bottomrule
\end{tabular}

\medskip
\caption{Complexity of logics with team semantics. Completeness unless stated otherwise.\label{tab:table-sat}}
\end{table*}

\medskip

\textbf{Organization.}
In Section~\ref{sec:prelim}, we introduce first-order team logic $\FO(\negg,\calD)$ for arbitrary dependencies $\calD$, as well as the fragment $\FONKD{n}{k}$ of bounded width $n$ and quantifier rank $k$.
Moreover, we restate the definition of modal team logic $\MTL$, which is used for several lower bounds.
In Section~\ref{sec:upper}, we prove the upper bounds for model checking by a translation to second-order logic $\SO$, which admits model checking in $\AEXPPOLY$.
Afterwards, the upper bound for the satisfiability problem follows in Section~\ref{sec:upper-sat}.
Finally, the lower bounds are proved in Section~\ref{sec:lower}, where we propose a translation from $\MTL$ to $\FONN$ that extends the well-known \emph{standard translation} from $\ML$ to $\FO^2$.

For better readability, some proofs are moved to the appendix and marked with $(\star)$.

\section{Preliminaries}
\label{sec:prelim}

The domain of a function $f$ is $\dom f$.
For $f \colon X \to Y$ and $Z \subseteq X$, $\rest{f}{Z}$ is the restriction of $f$ to the domain $Z$.
The power set of $X$ is $\pow{X}$.

Given a logic $\calL$, the sets of all satisfiable resp.\ valid formulas of $\calL$ are denoted by $\SAT(\calL)$ and $\VAL(\calL)$, respectively.
Likewise, the model checking problem $\MC(\calL)$ is the set of all tuples $(A,\varphi)$ such that $\varphi$ is an $\calL$-formula and $A$ is a model of $\varphi$.
The notation $\size{\varphi}$ means the length of $\varphi$ in a suitable encoding.
$\omega$ is the cardinality of the natural numbers.

\smallskip

We assume the reader to be familiar with alternating Turing machines \cite{alternation}.
When stating that a problem is hard or complete for a complexity class $\calC$, we refer to $\leqlogm$-reductions.
We use the notation $\exp_k$ for the tetration operation, with $\exp_0(n) \dfn n$ and $\exp_{k+1}(n) \dfn 2^{\exp_{k}(n)}$.\label{p:exp}
We simply write $\exp(n)$ instead of $\exp_1(n)$.

\begin{definition}[\cite{mtl_report}]
For $k \geq 0$, $\KAX$ is the class of problems decided by an alternating Turing machine with at most $p(n)$ alternations and runtime at most $\exp_k(p(n))$, for a polynomial $p$.
\end{definition}
\begin{definition}[\cite{mtl_report}]
$\TOWERPOLY$ is the class of problems that are decided by a deterministic Turing machine in time $\exp_{p(n)}(1)$ for some polynomial $p$.
\end{definition}

The reader may verify that both $\KAX$ and $\TOWERPOLY$ are closed under polynomial time reductions (and in particular $\leqlogm$).

\subsection*{Modal Team Logic}

We fix a set $\Phi$ of propositional symbols.
\emph{Modal team logic} $\MTL$, introduced by \citet{julian}, extends classical modal logic $\ML$ via the following grammar, where $\varphi$ denotes an $\MTL$-formula, $\alpha$ an $\ML$-formula, and $p$ a propositional variable.
\begin{align*}
\varphi &\ddfn \negg \varphi \mid \varphi \land \varphi \mid \varphi \lor \varphi \mid \Box \varphi \mid \Diamond \varphi \mid \alpha\\
\alpha &\ddfn \neg \alpha \mid \alpha  \land \alpha  \mid \alpha  \lor \alpha \mid \Box\alpha \mid \Diamond\alpha \mid p
\end{align*}

For easier distinction, we usually call classical formulas $\alpha,\beta,\gamma,\ldots$ and reserve $\varphi,\psi,\vartheta,\ldots$ for general formulas.

The \emph{modal depth} $\md(\varphi)$ of formulas $\varphi$ is defined as usual:
\begin{alignat*}{3}
&\md(p) &&\dfn 0 &&\\
&\md(\negg \varphi) &&\dfn \md(\neg\varphi) &&\dfn \md(\varphi)\\
&\md(\varphi \land \psi) &&\dfn \md(\varphi \lor \psi) &&\dfn \mathrm{max}\{\md(\varphi),\md(\psi)\}\\
&\md(\Diamond \varphi) && \dfn \md(\Box\varphi) &&\dfn \md(\varphi) + 1
\end{alignat*}
The set of propositional variables occurring in $\varphi \in \MTL$ is denoted by $\Prop(\varphi)$.
$\MTL_k$ is the fragment of $\MTL$ with modal depth at most $k$.

\medskip

Let $X \subseteq \Phi$ be a finite set of propositions.
A \emph{Kripke structure} (over $X$) is a tuple $\calK = (W, R, V)$, where $W$ is a set of \emph{worlds}, $(W,R)$ is a directed graph, and $V \colon X \to \pow{W}$ is the \emph{valuation}.
If $w \in W$, then $(\calK,w)$ is called \emph{pointed Kripke structure}.
$\ML$ is evaluated on pointed structures in the classical Kripke semantics, whereas
$\MTL$ is evaluated on pairs $(\calK,T)$, \emph{Kripke structure with teams}, where $T \subseteq W$ is called \emph{team} (in $\calK$).

The team $RT \dfn \{ v \in W \mid \exists w \in T \colon Rwv \}$ is called \emph{image} of $T$, and we write $Rw$ instead of $R\{w\}$ for brevity.
A \emph{successor team} of $T$ is a team $S$ such that $S \subseteq RT$ and $T \subseteq R^{-1}S$, where $R^{-1} \dfn \{(v,w) \in W^2\mid Rwv\}$.
Intuitively, every $w\in T$ has at least one successor in $S$, and every $v \in S$ has at least one predecessor in $T$.

The semantics of formulas $\varphi \in \MTL$ is defined as
\begin{alignat*}{3}
&(\calK, T) \vDash \alpha && \Leftrightarrow\;\forall w \in T \colon (\calK, w) \vDash \alpha \; \text{ if } \alpha \in \ML\text{, and otherwise as}\\
&(\calK, T) \vDash \negg \psi &&\Leftrightarrow\;(\calK,T) \nvDash \psi\text{,}\\
&(\calK,T) \vDash \psi \land \theta&&\Leftrightarrow\;(\calK,T) \vDash \psi \text{ and }(\calK,T) \vDash \theta\text{,}\\
&(\calK,T) \vDash \psi \lor \theta&&\Leftrightarrow\;\exists S, U \subseteq T \text{ such that }T = S \cup U\text{, }(\calK,S) \vDash \psi\text{, and }(\calK,U)\vDash \theta\text{,}\\
&(\calK,T)\vDash \Diamond \psi&&\Leftrightarrow\;(\calK, S)\vDash \psi \text{ for some successor team }S\text{ of }T\text{,}\\
&(\calK,T)\vDash \Box\psi&&\Leftrightarrow\;(\calK,RT) \vDash \psi\text{.}
\end{alignat*}
A formula $\varphi \in \MTL$ is \emph{satisfiable} if $(\calK,T) \vDash \varphi$ for some Kripke structure $\calK$ over $X\supseteq \Prop(\varphi)$ and team $T$ in $\calK$.
Likewise, it is \emph{valid} if it is true in every such pair.

\smallskip

The modality-free fragment $\MTL_0$ syntactically coincides with \emph{propositional team logic} $\PTL$ \cite{gandalf,hannula_complexity_2018,YANG20171406}.
The usual interpretations of the latter, \ie, sets of Boolean assignments, can easily be represented as teams in Kripke structures.
For this reason, we identify $\PTL$ and $\MTL_0$ in this paper.

\smallskip

It is easy to check that the following lower bounds due to \citet{hannula_complexity_2018} are logspace reductions.

\begin{theorem}[\cite{hannula_complexity_2018}]
\label{thm:mc-ptl}
$\MC(\PTL)$ is $\PSPACE$-complete.
\end{theorem}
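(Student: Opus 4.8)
The plan is to prove the two directions separately, taking care that the hardness reduction is a $\leqlogm$-reduction, as the surrounding remark in the text requires. For the upper bound I would show $\MC(\PTL) \in \APTIME$ and invoke $\APTIME = \PSPACE$~\cite{alternation}. Given an input $(\calK,T)$ and a formula $\varphi \in \PTL$, an alternating machine recurses on the structure of $\varphi$ while carrying the current subteam and a \emph{polarity} bit recording the parity of the Boolean negations $\negg$ encountered so far. Under positive polarity it treats a split disjunction $\psi \lor \theta$ by existentially guessing a cover $T = S \cup U$ and then universally branching into the checks $(\calK,S)\vDash\psi$ and $(\calK,U)\vDash\theta$; a conjunction $\psi \land \theta$ by a universal branch; and a classical atom $\alpha \in \ML$ by a universal branch over the worlds $w$ of the current team, verifying $(\calK,w)\vDash\alpha$ in polynomial time. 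A Boolean negation $\negg\psi$ merely flips the polarity bit, and under negative polarity every branching is dualized (existential $\leftrightarrow$ universal) and atoms are tested for failure; this is exactly the De Morgan dual of each clause of the team semantics. Since each connective is processed once, the recursion depth is at most $\size{\varphi}$, and the team stays of polynomial size, every computation branch halts in polynomial time, witnessing membership in $\APTIME$.

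For the lower bound I would give a $\leqlogm$-reduction from the quantified Boolean formula problem $\mathrm{TQBF}$. A quantifier prefix $Q_1 x_1 \cdots Q_n x_n$ of a formula $\Psi$ is translated into $n$ nested team operations: an existential $\exists x_i$ becomes a split disjunction (the existential choice of how to distribute the team), while a universal $\forall x_i$ becomes its Boolean-negation dual of the form $\negg(\negg\,\cdots \lor \negg\,\cdots)$, whose semantics quantifies universally over all covers of the team. The team is built to have size polynomial in $n$, containing for each variable representatives of its two truth values, so that the $i$-th (dualized) split commits the value of $x_i$ exactly as in the evaluation game of the quantified formula; the quantifier-free matrix is then rendered as a classical $\ML$-formula evaluated flatly over the committed team. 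One then checks that $(\calK,T)\vDash\theta_\Psi$ holds iff $\Psi$ is true, and that $\calK$, $T$, and $\theta_\Psi$ are produced from $\Psi$ by a logspace transducer, which is precisely the point emphasized in the statement.

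The main obstacle, in both directions, is the split disjunction $\lor$: evaluated naively it inspects exponentially many decompositions $T = S \cup U$. In the upper bound this is exactly what forces alternation rather than determinism, and it is contained by a single existential guess per occurrence together with the polynomial recursion depth; in the lower bound it is the very mechanism that lets a polynomially-sized team simulate the exponential game tree of $\mathrm{TQBF}$ through the alternation of existential and dualized universal splits. The one point demanding care is the interaction of $\negg$ with $\lor$ and with the classical atoms, which must be dualized correctly, and this is handled uniformly by the polarity bit described above.
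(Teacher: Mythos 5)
The paper itself offers no proof of Theorem~\ref{thm:mc-ptl}: it is imported wholesale from \cite{hannula_complexity_2018}, and the only thing checked locally is that the reductions there are logspace. So your from-scratch argument cannot be compared to an in-paper proof, only judged on its own merits. Your upper bound is fine: the polarity-bit alternating recursion processes each connective once, guesses covers $T = S \cup U$ of polynomial size, dualizes branching under negative polarity, and halts in polynomial time on every branch, so membership follows from $\APTIME = \PSPACE$. This is the standard argument.

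Your hardness reduction, however, has one step that fails as stated: rendering the quantifier-free matrix \enquote{as a classical $\ML$-formula evaluated flatly over the committed team}. With one surviving representative per variable, flat evaluation cannot express even a single disjunctive clause. Concretely, let $u_i^b$ be the representative of $x_i = b$ and take the matrix $x_1 \lor x_2$. If some classical $\alpha$ worked, then since the committed teams $\{u_1^0, u_2^1\}$ and $\{u_1^1, u_2^0\}$ must satisfy it, flatness forces all four worlds $u_1^0, u_1^1, u_2^0, u_2^1$ to satisfy $\alpha$ individually, whence the falsifying team $\{u_1^0, u_2^0\}$ satisfies $\alpha$ as well --- a contradiction. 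The repair is easy precisely because $\negg$ is in the language: encode the literal $x_i$ by the non-flat atom $\E(q_i \land p)$ (\enquote{the value-$1$ representative of $x_i$ survives}) and $\neg x_i$ by $\E(q_i \land \neg p)$, and assemble the matrix from these atoms with $\land$ and $\ovee$. A second, smaller gap is in the universal step: the symmetric instantiation $\negg(\negg \psi \lor \negg \psi)$ with $\psi$ a guarded continuation $\negg\,\mathrm{shape}_i \ovee (\mathrm{shape}_i \land \mathrm{cont})$ is unsound, since a cover may place the correctly shaped subteam on one side and junk on the other, and the junk side discharges its guard vacuously, satisfying the dual split without ever checking the continuation. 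You need the asymmetric pattern $\negg(\top \lor \negg\,\delta)$, which says that \emph{every} subteam of the current team satisfies $\delta$, with $\delta$ the guarded continuation; both value-commitments are subteams of the required shape, so this correctly simulates $\forall x_i$. With these two local fixes your reduction goes through, is computable in logspace, and recovers the cited result.
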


\begin{theorem}[\cite{hannula_complexity_2018}]
\label{thm:sat-ptl}
$\SAT(\PTL)$ and $\VAL(\PTL)$ are $\AEXPPOLY$-complete.
\end{theorem}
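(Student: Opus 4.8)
The statement bundles two problems; since $\AEXPPOLY$ is closed under complement and $\varphi \in \VAL(\PTL)$ holds iff $\negg\varphi \notin \SAT(\PTL)$, the map $\varphi \mapsto \negg\varphi$ reduces each problem to the complement of the other in logarithmic space. The plan is therefore to prove membership and hardness for $\SAT(\PTL)$ only, and to obtain the statement for $\VAL(\PTL)$ by this (clearly logspace) complementation.

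For the upper bound I would give an alternating algorithm realising the semantic evaluation game. On input $\varphi$ over the proposition set $\Phi' = \Prop(\varphi)$, the machine first existentially guesses a team $T \subseteq \{0,1\}^{\Phi'}$; as $|\Phi'| \le |\varphi|$, this object has at most $2^{|\varphi|}$ assignments and is written in exponential time. It then verifies $T \vDash \varphi$ by recursion on the formula: a split $\psi \lor \theta$ is an existential move (guess $S,U$ with $T = S \cup U$), followed by a universal move choosing which disjunct to continue with on the corresponding part; a conjunction $\psi \land \theta$ is a universal move; a $\PL$-atom $\alpha$ is settled by the polynomial-time test $\forall w \in T\colon w \vDash \alpha$; and a Boolean negation $\negg\psi$ simply swaps existential and universal states. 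Every configuration handles teams of size at most $2^{|\varphi|}$ and the recursion depth is at most $|\varphi|$, so the total time is exponential. The key point is that the number of state switches along any branch is bounded by a linear function of the nesting depth of $\varphi$, hence is polynomial, which places the problem in $\AEXPPOLY$.

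For the lower bound I would reduce from the acceptance problem of an alternating Turing machine $M$ running in time $\exp(p(n))$ with at most $p(n)$ alternations on input $x$, a problem that is $\AEXPPOLY$-complete essentially by definition. The central device is that one team encodes an exponentially long configuration succinctly: using $p(n)$ propositions as a cell address and a further block for cell contents, a single assignment describes one tape cell, so a team of up to $2^{p(n)}$ assignments is one full configuration of $M$. Boolean negation is then used to pass between the existential and universal states of $M$ — one $\negg$ per alternation, matching the polynomial alternation bound — while split disjunction together with $\PL$-atoms expresses the local transition constraint "$C'$ is a legal $M$-successor of $C$". The reduction outputs a formula $\varphi_{M,x}$ of size polynomial in $|x|$ asserting the existence of an accepting computation tree, and one checks that $M$ accepts $x$ iff $\varphi_{M,x} \in \SAT(\PTL)$. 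I expect the main obstacle to be the faithful propositional encoding of the successor relation and of the alternation structure using only polynomially many symbols; a secondary but essential point, and the one the surrounding text relies on, is to verify that this whole construction is computable in logarithmic space rather than merely in polynomial time, so that the bounds are genuine $\leqlogm$-reductions.
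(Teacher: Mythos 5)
First, a point of comparison: the paper does not prove this theorem at all — it is imported verbatim from \cite{hannula_complexity_2018}, and the only thing checked in the present paper is the remark that the hardness reductions there are logspace, so your sketch has to stand on its own. Its first two parts do. Since $\AEXPPOLY$ is closed under complement and under $\leqlogm$-reductions, the map $\varphi \mapsto \negg\varphi$ correctly transfers both membership and hardness between $\SAT(\PTL)$ and $\VAL(\PTL)$. Your upper bound is also sound: by locality one may restrict to teams $T \subseteq \{0,1\}^{\Prop(\varphi)}$ of size at most $2^{\size{\varphi}}$, the guess is written in exponential time, and the recursive evaluation (existential guess of the parts for a split, universal branching for $\land$, mode swap for $\negg$, a direct exponential-time check for classical atoms) uses at most $\size{\varphi}$ alternations, which is polynomial in the input. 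This is the standard membership argument.

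The genuine gap is in the lower bound, specifically in the device ``one team encodes one configuration, one $\negg$ per machine alternation.'' A single existential or universal phase of $M$ lasts up to $\exp(p(n))$ steps, and $\PTL$ has no team quantifier: the only team-producing connective is the split $\lor$, which merely partitions the \emph{current} team. With your encoding you would therefore need to chain exponentially many successor checks $C \to C'$, which no polynomial-size formula can do; worse, a $\PTL$ formula is evaluated on one team, and your sketch never explains how a formula simultaneously accesses the two distinct teams $C$ and $C'$ to compare them cell-wise. The encoding that actually works — and the technical heart of the cited proof — lets a single team encode an \emph{entire exponentially long computation segment}, spending $p(n)$ additional propositions on a time-step address, so that one assignment describes one cell of one configuration at one time; the transition constraint then becomes a uniform local condition relating assignments whose time addresses differ by one. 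Verifying that condition requires succinctly expressing dependence/equality patterns among assignments sharing an address (definable using $\negg$, but precisely where the work lies), together with polynomial-size formulas for $+1$ on $p(n)$-bit addresses. Until that machinery is supplied, your step ``split disjunction together with $\PL$-atoms expresses the local transition constraint'' is a restatement of the problem rather than a proof; the logspace-computability concern you raise at the end is, by comparison, routine.
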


For each increment in modal depth, the complexity of the satisfiability problem increases by an exponential, reaching a non-elementary class in the unbounded case:

\begin{theorem}[\cite{mtl_report}]\label{thm:mtl-sat}
$\SAT(\MTL)$ and $\VAL(\MTL)$ are $\TOWERPOLY$-complete.
For every finite $k \geq 0$, $\SAT(\MTL_k)$ and $\VAL(\MTL_k)$ are $\KAXX{k+1}$-complete.
\end{theorem}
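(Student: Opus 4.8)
The plan is to prove matching upper and lower bounds parametrized by the modal depth and then read off the unbounded case as a limit. Throughout I would exploit that $\MTL_k$ is invariant under $k$-step team bisimulation, which localizes satisfiability to a single canonical structure. The base case $k=0$ is already available: since $\KAXX{1}$ is exactly $\AEXPPOLY$ and $\MTL_0=\PTL$, the assertion for $k=0$ is precisely Theorem~\ref{thm:sat-ptl}.

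For the upper bound I would build, for each $k$, a canonical Kripke structure $\calK_k$ realizing every $\ML_k$-type of a world, and show that $\varphi\in\MTL_k$ is satisfiable if and only if it holds for some team in $\calK_k$. The key counting step is that the number of $\ML_k$-types over $\Prop(\varphi)$ is bounded by a tower: there are $2^{\size{\Prop(\varphi)}}$ propositional types, and each further modal level replaces the current type count $t$ by roughly $2^{t}$, so $\calK_k$ has $\exp_{k+1}(\mathrm{poly}(\size{\varphi}))$ worlds and a team in it is described by that many bits. An alternating procedure then evaluates $\varphi$ over $\calK_k$ by recursion on its structure: the splitting disjunction $\psi\lor\theta$ and the modality $\Diamond\psi$ induce existential choices (of a subteam split, respectively a successor team), conjunction and $\Box$ their universal or deterministic counterparts, and each Boolean negation $\negg$ swaps the two modes, hence triggers one alternation. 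Since the number of $\negg$ along any branch is at most $\size{\varphi}$, the procedure uses polynomially many alternations, while writing down each guessed team costs $\exp_{k+1}(\mathrm{poly})$ time; this places $\SAT(\MTL_k)$ in $\KAXX{k+1}$, and because $\KAXX{k+1}$ is closed under complementation and $\negg$ makes the two problems interreducible, the same bound holds for $\VAL(\MTL_k)$.

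For the lower bound I would reduce, for each $k$, a $\KAXX{k+1}$-complete alternating machine problem to $\SAT(\MTL_k)$. The core is a Stockmeyer-style encoding, verified by induction on the modal depth $d$, showing that a world of depth $d$ can encode a value in $[0,\exp_{d+1})$: at $d=0$ the value is just the set of propositions the world satisfies, and a depth-$(d{+}1)$ world is determined by the set of depth-$d$ values occurring among its successors, hence encodes a subset of $[0,\exp_{d+1})$, a value in $[0,\exp_{d+2})$. Given a machine with runtime $\exp_{k+1}(\mathrm{poly})$ and polynomially many alternations, I would address its $\exp_{k+1}$ tape cells and time steps by depth-$k$ values in $[0,\exp_{k+1})$, let the machine's alternations correspond to the nesting of $\negg$, and express the transition and acceptance conditions as $\MTL_k$-formulas over the resulting teams, obtaining hardness of $\SAT(\MTL_k)$ and, by $\negg$-interreducibility, of $\VAL(\MTL_k)$. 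I expect this encoding to be the main obstacle: the delicate points are forcing each address to be represented uniquely and verifiably, and expressing equality and the successor relation on addresses while keeping the number of $\negg$ (hence alternations) polynomial, so that the tower is carried by the modal nesting alone.

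Finally, the unbounded case follows by letting the modal depth range up to $\md(\varphi)\le\size{\varphi}$. The upper-bound procedure then runs in time $\exp_{\size{\varphi}+1}(\mathrm{poly})$ with polynomially many alternations; eliminating the alternations keeps us within a tower of polynomially many exponentials, so $\SAT(\MTL)$ and $\VAL(\MTL)$ lie in $\TOWERPOLY$. For hardness I would take the unbounded-depth version of the same encoding and reduce directly from a $\TOWERPOLY$-complete problem, realizing a deterministic $\exp_{p(n)}(1)$-time computation by a formula of modal depth $p(n)$, whose depth-$p(n)$ addresses range over the required $\exp_{p(n)}$ cells.
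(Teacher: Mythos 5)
The paper does not actually prove this statement: Theorem~\ref{thm:mtl-sat} is quoted verbatim from \cite{mtl_report} and is used in Section~\ref{sec:lower} only as the source of the lower bounds, so there is no in-paper proof to compare against. Your reconstruction follows essentially the route of that cited reference---a canonical structure realizing all $\ML_k$-types, of size $\exp_{k+1}(\mathrm{poly}(\size{\varphi}))$ by your (correct) type count, alternating team evaluation with alternations bounded by $\size{\varphi}$, closure of $\KAXX{k+1}$ under complementation to transfer between $\SAT$ and $\VAL$ via $\negg$, and a Stockmeyer-style tower encoding for hardness---with the right base case $\MTL_0 = \PTL$ via Theorem~\ref{thm:sat-ptl}. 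The one caveat: the hardness construction, which you yourself flag as the crux (unique and complete address representation, equality and successor on depth-$k$ addresses, keeping all alternations carried by polynomially many $\negg$ while an $\exp_{k+1}$-step computation segment is guessed as a single team and checked locally), is only sketched, so as a standalone argument the lower-bound half would still require that full encoding to be carried out.
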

 
\subsection*{First-order Team Logic}

A vocabulary is a (possibly infinite) set $\tau$ of function $f$ and predicate symbols $P$, each with arity $\arity{f}$ and $\arity{P}$, respectively.
$\tau$ is called \emph{relational} if it contains no function symbols.
We explicitly state ${=}\in \tau$ if we admit equality.
Moreover, we require that a vocabulary always contains at least one predicate or $=$.

We fix a set $\Var = \{x_1,x_2,\ldots\}$ of first-order variables.
If $\vec{t}$ is a tuple of $\tau$-terms, we let $\Var(\vv{t})$ denote the set of variables occurring in $\vv{t}$.
$\tau$-structures are pairs $\calA = (A,\tau^\calA)$, where $\dom \calA \dfn A$ is the domain of $\calA$.
We sometimes briefly write $\calA$ instead of $\dom\calA$ if the meaning is clear.
If $s \colon X \to \calA$ and $\dom s \supseteq \Var(\vv{t})$, then $\vv{t}\langle s\rangle \in \calA$ is the evaluation of the terms $\vv{t}$ in $\calA$ under $s$.

\smallskip

A \emph{team} $T$ (in $\calA$) is a set of assignments $s\colon X \to \calA$, where $X$ is the \emph{domain} of $T$.
If $X \supseteq \Var(\vv{t}\,)$ and $\vv{t}$ is a tuple of terms,  $\vv{t}\langle  T \rangle \dfn \{ \vv{t}\langle  s\rangle \mid s \in T \}$.
If $T$ is a team with domain $X \supseteq Y$, then its \emph{restriction} to $Y$ is $\rest{T}{Y}\dfn\{ \rest{s}{Y} \mid s \in T\}$.
In slight abuse of notation, we sometimes identify a tuple $\vv{x}$ with its underlying set, \eg, write $\rest{T}{\vv{x}}$ instead of $\rest{T}{\{x_1,\ldots,x_{\size{\vv{x}}}\}}$, or $\vv{x} \subseteq \vv{y}$ instead of $\{x_1,\ldots,x_{\size{\vv{x}}}\} \subseteq \{y_1,\ldots,y_{\size{\vv{y}}}\}$.

\begin{proposition}[$\star$]\label{prop:rel}
Let $\calA$ be a structure, $\vv{t}$ a tuple of terms, $X \supseteq \Fr(\vv{t}\,)$.
For $i \in \{1,2\}$, let $T_i$ be a team in $\calA$ with domain $X_i \supseteq X$.
Then $\rest{T_1}{X} = \rest{T_2}{X}$ implies $\vv{t}\langle T_1 \rangle = \vv{t}\langle T_2 \rangle$.
Furthermore, for any tuple $\vv{x} \subseteq X$ of variables, $\vval{x}{T_1} = \vval{x}{T_2}$ iff $\rest{T_1}{\vv{x}} = \rest{T_2}{\vv{x}}$.
\end{proposition}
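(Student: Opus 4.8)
The plan is to reduce both assertions to the standard \emph{coincidence property} of term evaluation: for any tuple $\vv{t}$ of terms and any two assignments $s,s'$ with $\rest{s}{\Fr(\vv{t}\,)} = \rest{s'}{\Fr(\vv{t}\,)}$, one has $\vv{t}\langle s\rangle = \vv{t}\langle s'\rangle$. This is proved by a routine induction on the structure of terms, and I would either cite it as standard or dispatch it in a line; it is the only real ingredient.

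For the first statement I would establish the two inclusions separately. Take any $a \in \vv{t}\langle T_1\rangle$, so $a = \vv{t}\langle s_1\rangle$ for some $s_1 \in T_1$. Then $\rest{s_1}{X} \in \rest{T_1}{X} = \rest{T_2}{X}$, so there is $s_2 \in T_2$ with $\rest{s_2}{X} = \rest{s_1}{X}$. Since $\Fr(\vv{t}\,) \subseteq X$, the coincidence property gives $\vv{t}\langle s_2\rangle = \vv{t}\langle s_1\rangle = a$, whence $a \in \vv{t}\langle T_2\rangle$. The reverse inclusion is symmetric, yielding $\vv{t}\langle T_1\rangle = \vv{t}\langle T_2\rangle$.

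For the second statement, $\vv{t} = \vv{x}$ is a tuple of variables $\vv{x} = (x_1,\ldots,x_m) \subseteq X$, so $\Fr(\vv{x}\,)$ is its underlying set and $\vval{x}{s} = (s(x_1),\ldots,s(x_m))$. The key observation is that the map $g$ sending an assignment $r$ with domain $\{x_1,\ldots,x_m\}$ to the tuple $(r(x_1),\ldots,r(x_m))$ is injective, even when $\vv{x}$ has repeated entries, since $g(r)=g(r')$ forces $r(x_i)=r'(x_i)$ for every $i$ and hence $r=r'$ on their common domain. Because $\vval{x}{s} = g(\rest{s}{\vv{x}})$, applying $g$ elementwise gives $\vval{x}{T_i} = g(\rest{T_i}{\vv{x}})$. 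The ``if'' direction is then immediate (indeed it is just the first part with $\vv{t} = \vv{x}$ and $X$ replaced by the set $\vv{x}$, which satisfies $\vv{x} \supseteq \Fr(\vv{x}\,)$), and the ``only if'' direction follows from injectivity of $g$: from $g(\rest{T_1}{\vv{x}}) = g(\rest{T_2}{\vv{x}})$ we recover $\rest{T_1}{\vv{x}} = \rest{T_2}{\vv{x}}$.

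The mathematics here is light, so I do not expect a genuine obstacle; the only care needed is bookkeeping: keeping straight the two readings of $\vv{x}$ (as an ordered tuple inside $\vval{x}{\cdot}$ versus as its underlying set inside $\rest{\cdot}{\vv{x}}$) so that the two carry the same information, and checking the injectivity of $g$ in the presence of repeated variables. The main point to get right is simply to state the coincidence lemma at the right generality (tuples of terms, restriction to a superset $X \supseteq \Fr(\vv{t}\,)$) so that both parts fall out cleanly.
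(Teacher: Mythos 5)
Your proposal is correct and follows essentially the same route as the paper's appendix proof: the first part is the identical elementwise argument via the coincidence property of term evaluation, and your second part merely repackages the paper's pointwise chase as injectivity of the map $g$ (the paper's step from $\vval{x}{s} = \vval{x}{s''}$ to $s = \rest{s''}{\vv{x}}$ is exactly your injectivity observation, and like you it dispatches the remaining direction via the first part with $\vv{t} = \vv{x}$).
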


\begin{proposition}[$\star$]\label{prop:iso}
Let $\calA$ be a structure, $\vv{x}$ a tuple of variables, and $V \dfn \{ s \colon \vv{x} \to \calA \}$.
Then $\pow{V}$ is the set of all teams in $\calA$ with domain $\vv{x}$, and the mapping $S \mapsto \vv{x}\langle S\rangle$ is an isomorphism
between the Boolean lattices $(\pow{V},\subseteq)$ and $(\pow{\calA^{\size{\vv{x}}}},\subseteq)$.
\end{proposition}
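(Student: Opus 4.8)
The plan is to factor the claimed isomorphism through the elementary bijection between assignments of domain $\vv{x}$ and tuples over $\calA$, and then transport the Boolean structure along it. Throughout I adopt the standing reading of the notation in which the components of $\vv{x}$ are pairwise distinct variables, so that the underlying set $\{x_1,\dots,x_{\size{\vv{x}}}\}$ has exactly $\size{\vv{x}}$ elements; this is the only point where care is needed, since with repeated entries the image of $\vv{x}\langle\cdot\rangle$ would be a diagonal copy of $\calA$ rather than all of $\calA^{\size{\vv{x}}}$.

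First I would dispatch the opening claim straight from the definitions. A team in $\calA$ with domain $\vv{x}$ is by definition a set whose elements are assignments $s$ with $\dom s = \vv{x}$, that is, a subset of $V = \{s \colon \vv{x} \to \calA\}$; conversely every subset of $V$ is such a team. Hence the collection of all teams with domain $\vv{x}$ is precisely $\pow{V}$.

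Next, write $n \dfn \size{\vv{x}}$ and consider the single-assignment map $\iota \colon V \to \calA^{n}$ given by $\iota(s) = \vv{x}\langle s\rangle = (s(x_1),\dots,s(x_n))$. I would check that $\iota$ is a bijection: since the $x_i$ are distinct, an assignment with domain $\vv{x}$ is uniquely determined by the tuple of its values, which gives injectivity, and every $(a_1,\dots,a_n) \in \calA^{n}$ is realized by the assignment $x_i \mapsto a_i$, which gives surjectivity. The map in question, $S \mapsto \vv{x}\langle S\rangle$, sends $S$ to the image $\{\iota(s) \mid s \in S\} \subseteq \calA^{n}$; that is, it is exactly the induced map on power sets of the bijection $\iota$, with codomain $\pow{\calA^{n}}$. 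It remains to verify that this map is an isomorphism between the Boolean lattices $(\pow{V},\subseteq)$ and $(\pow{\calA^{n}},\subseteq)$. Order preservation is immediate, and order reflection uses injectivity of $\iota$: if $\vv{x}\langle S\rangle \subseteq \vv{x}\langle S'\rangle$, then each $s \in S$ satisfies $\iota(s) = \iota(s')$ for some $s' \in S'$, whence $s = s' \in S'$, so $S \subseteq S'$. (Injectivity of the map itself also follows directly from the second part of Proposition~\ref{prop:rel} applied to teams of domain exactly $\vv{x}$.) Surjectivity is witnessed by the preimage team $\{s \in V \mid \iota(s) \in B\}$ for any $B \subseteq \calA^{n}$, and the inverse is the image map of $\iota^{-1}$.

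I do not expect any genuine obstacle: once $\iota$ is seen to be a bijection, everything reduces to the routine fact that the image map of a bijection is an order isomorphism of the power-set lattices, and an order isomorphism of Boolean lattices automatically respects meets, joins, $\bot$, $\top$, and complements. The only substantive point is the distinctness convention flagged at the outset, which I would state explicitly to justify that $\iota$ lands onto all of $\calA^{n}$.
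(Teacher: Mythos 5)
Your proof is correct and follows essentially the same route as the paper's: the paper directly verifies that $S \mapsto \vv{x}\langle S\rangle$ is surjective (via the preimage team $\{s \in V \mid \vv{x}\langle s\rangle \in A\}$) and preserves and reflects $\subseteq$, which is exactly your verification of the induced image map of the pointwise bijection $\iota$. Your explicit flagging that the entries of $\vv{x}$ must be pairwise distinct---needed precisely for surjectivity onto all of $\calA^{\size{\vv{x}}}$---is a point the paper leaves implicit (its variable tuples are kept repetition-free, cf.\ the $\vv{x}{;}y$ notation), so this is a sound clarification rather than a divergence.
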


If $s \colon X \to \calA$ and $x \in \Var$, then $s^x_a \colon X \cup \{x\} \to \calA$ maps $x$ to $a$ and $y \in X \setminus \{x\}$ to $s(y)$.
If $T$ is a team in $\calA$ with domain $X$, then $f \colon T \to \pow{\calA}\setminus \{\emptyset\}$ is called a \emph{supplementing function} of $T$.
It extends (or modifies) $T$ to the team $T^x_f \dfn \Set{ s^x_a | s \in T, a \in f(s)}$ with domain $X \cup \{x\}$.
For $f(s) = A$ is constant, we write $T^x_A$ for $T^x_f$.

\smallskip

As an alternative definition of supplementing functions, \citeauthor{galliani_general_2013} used \emph{$x$-variations} \cite{galliani_general_2013}, which are teams that "agree" on all variables but $x$.

\begin{proposition}[$\star$]\label{prop:def-of-supplement}
Let $T$ be a team with domain $X$ and $S$ a team with domain $X \cup \{x\}$ (with possibly $x \in X$), and let $X' \dfn X \setminus \{x\}$.
Then $\rest{S}{X'} = \rest{T}{X'}$ if and only if there is a supplementing function $f$ such that $S = T^x_f$.
\end{proposition}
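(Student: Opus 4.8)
The plan is to prove both directions directly from the definitions. The only subtlety is the admitted case $x \in X$: there $s^x_a$ does not merely extend $s$ but overwrites its value at $x$, so I must consistently compare $S$ and $T$ on $X' = X \setminus \{x\}$, which excludes $x$. The technical fact underpinning everything is the following characterisation: an assignment $t$ with domain $X \cup \{x\}$ satisfies $t = s^x_a$, for a given $s$ with domain $X$ and a given $a \in \calA$, if and only if $\rest{t}{X'} = \rest{s}{X'}$ and $t(x) = a$. Indeed, $s^x_a$ agrees with $s$ on $X \setminus \{x\} = X'$ and sends $x$ to $a$, and these two conditions pin down $t$ on all of $X \cup \{x\}$.

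For the direction ($\Leftarrow$), I would assume $S = T^x_f$ for a supplementing function $f$. Since $x \notin X'$, each element $s^x_a \in S$ restricts on $X'$ to $\rest{s}{X'}$, giving $\rest{S}{X'} \subseteq \rest{T}{X'}$; conversely every $s \in T$ has $f(s) \neq \emptyset$, so some $s^x_a$ lies in $S$ and witnesses $\rest{s}{X'} \in \rest{S}{X'}$. Hence $\rest{S}{X'} = \rest{T}{X'}$.

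For the direction ($\Rightarrow$), I would assume $\rest{S}{X'} = \rest{T}{X'}$ and define $f(s) \dfn \{\, t(x) \mid t \in S,\ \rest{t}{X'} = \rest{s}{X'}\,\}$ for each $s \in T$. Each value is nonempty, because $\rest{s}{X'} \in \rest{T}{X'} = \rest{S}{X'}$ supplies a witnessing $t \in S$; hence $f \colon T \to \pow{\calA} \setminus \{\emptyset\}$ is a supplementing function. The equality $T^x_f = S$ then falls out of the characterisation above. If $a \in f(s)$, the witnessing $t$ has $\rest{t}{X'} = \rest{s}{X'}$ and $t(x) = a$, so $t = s^x_a \in S$, whence $T^x_f \subseteq S$; and for any $t \in S$, picking $s \in T$ with $\rest{s}{X'} = \rest{t}{X'}$ (available since $\rest{t}{X'} \in \rest{S}{X'} = \rest{T}{X'}$) and setting $a \dfn t(x)$ gives $a \in f(s)$ and $t = s^x_a \in T^x_f$. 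The main obstacle is bookkeeping rather than conceptual: keeping the overwrite-versus-extend distinction straight when $x \in X$, which is exactly what phrasing everything through $X'$ handles, so that no explicit case split on whether $x \in X$ is ever required.
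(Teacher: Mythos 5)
Your proof is correct and takes essentially the same route as the paper: the $(\Leftarrow)$ direction is the same restriction computation, and in the $(\Rightarrow)$ direction your supplementing function $f(s) = \{\, t(x) \mid t \in S,\ \rest{t}{X'} = \rest{s}{X'} \,\}$ coincides with the paper's $f(s) = \{\, a \in \calA \mid s^x_a \in S \,\}$, precisely by the characterisation of $s^x_a$ you state at the outset. Your explicit statement of that characterisation is a minor presentational difference, not a different argument.
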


We are now finally ready to define first-order team logic.
First, the set of ordinary $\tau\text{-}\FO$-formulas is given by the grammar
\[
\alpha \ddfn P t_1\ldots t_{\arity{P}} \mid \neg \alpha \mid \alpha \land \alpha \mid \alpha \lor \alpha \mid  \exists x\, \alpha \mid \forall x\, \alpha\text{,}
\]
where $P \in \tau$ is a predicate (possibly ${=}$), $x \in \Var$ and $t_1,t_2,\ldots$ are $\tau$-terms.
If $\vv{t} = (t_1,\ldots,t_n)$ and $\vv{u} = (u_1,\ldots,u_n)$ are tuples of terms, then we use the shorthand $\vv{t} = \vv{u}$ for $\bigwedge_{i=1}^n t_i = u_i$.

\begin{definition}[Dependencies]
Let $\tau_P = \{ P\}$, where $P$ is a predicate of arity $k$.
Then $\delta \in \tau_P\text{-}\FO$ is a \emph{$k$-ary dependency}, and $\mathbf{D}_\delta \dfn \Set{ \tau_P\text{-structure }\calA \mid \calA \vDash \delta}$.
\end{definition}

Let $\calD = \{ \delta_1,\delta_2,\ldots\}$ be a (possibly infinite) set of dependencies.
The logic $\tau$-$\FOND$ then extends $\tau$-$\FO$ by
\[
\varphi \ddfn \alpha \mid A_i\vv{t} \mid \negg \varphi \mid \varphi \land \varphi \mid \varphi \lor \varphi \mid \exists x\, \varphi \mid \forall x\, \varphi\text{,}
\]
where $\alpha \in \tau\text{-}\FO$, $\delta_i \in \calD$ is a $k$-ary dependency, $\vv{t}$ is a $k$-tuple of $\tau$-terms, and $x \in \Var$.
The $A_i\vv{t}$ are called \emph{generalized dependency atoms} or simply \emph{dependency atoms}.

In what follows, we omit $\tau$ if it is not relevant.
The $\negg$-free fragment of $\FOND$ is $\FOD$, and we write $\FON$ instead of $\FOND$ if $\calD = \emptyset$.

\smallskip

\begin{example}
The set $\mathsf{dep} \dfn \{ \delta_1, \delta_2, \ldots \}$ of dependencies is defined by
\[
\delta_n(R) \dfn \forall x_1 \cdots \forall x_{n-1} \forall y \forall z (Rx_1\cdots x_{n-1}y \land Rx_1\cdots x_{n-1}z \imp y = z)\text{.}
\]
$A_n\vv{t}$ is called $n$-ary \emph{dependence atom} and is also written $\dep{t_1,\ldots,t_n}$.
\end{example}

In this notation, Väänänen's dependence logic $\D$ is $\FO(\mathsf{dep})$, and team logic $\TL$ is $\FO(\negg,\mathsf{dep})$ \cite{vaananen_dependence_2007}.
Other first-order-definable dependencies include the independence atom~\cite{gradel_dependence_2013} as well as the inclusion and exclusion atom \cite{galliani_inex} (see also~\citet{abramsky_expressivity_2016}).

\medskip

If $\varphi$ is a formula, $\Fr(\varphi)$ and $\Var(\varphi)$ denote the set of free resp.\ of all variables in $\varphi$,
defined in the usual manner.
We let $\Fr(A_i\vv{t}\,) \dfn \Var(A_i\vv{t}\,) \dfn \Var(\vv{t})$.
The \emph{width} $\wi(\varphi)$ of $\varphi$ is $\size{\Var(\varphi)}$.
The \emph{quantifier rank} $\qr(\varphi)$ of $\varphi$ is
\begin{alignat*}{3}
&\qr(\varphi) &&\dfn 0 && \text{ if $\varphi$ is a first-order atom or a dependency atom,}\\
&\qr(\negg \varphi) &&\dfn \qr(\neg\varphi) &&\dfn \qr(\varphi)\text{,}\\
&\qr(\varphi \land \psi) &&\dfn \qr(\varphi \lor \psi) &&\dfn \mathrm{max}\{\qr(\varphi),\qr(\psi)\}\text{,}\\
&\qr(\exists x \,\varphi) && \dfn \qr(\forall x\, \varphi) &&\dfn \qr(\varphi) + 1\text{.}
\end{alignat*}

The fragment of $\FO$ of formulas with width at most $n \leq \omega$ and quantifier rank at most $k \leq \omega$ is denoted by $\FO^n_k$, with $\FONKD{n}{k}$, $\FONK{n}{k}$ and $\FODK{n}{k}$ defined analogously.
$\tau$-$\FOND$-formulas $\varphi$ are evaluated on pairs $(\calA,T)$, where $\calA$ is a $\tau$-structure and $T$ a team $T$ in $\calA$ with domain $X \supseteq \Fr(\varphi)$:
\begin{alignat*}{3}
&(\calA,T) \vDash \varphi && \Leftrightarrow\;\forall s \in T : (\calA,s) \vDash \varphi \text{ (in Tarski semantics) if }\varphi \in \FO\text{,}\\
&(\calA,T) \vDash A_i\vv{t} && \Leftrightarrow\;\calA \vDash \delta_i(\vval{t}{T})\text{,}\\
&(\calA, T) \vDash \negg \psi &&\Leftrightarrow\;(\calA,T) \nvDash \psi\text{,}\\
&(\calA,T) \vDash \psi \land \theta&&\Leftrightarrow\;(\calA,T) \vDash \psi \text{ and }(\calA,T) \vDash \theta\text{,}\\
&(\calA,T) \vDash \psi \lor \theta&&\Leftrightarrow\;\exists S, U \subseteq T \text{ such that }T = S \cup U\text{, }(\calA,S) \vDash \psi\text{, and }(\calA,U)\vDash \theta\text{,}\\
&(\calA,T)\vDash \exists x \, \varphi\;&&\Leftrightarrow\; (\calA,T^x_f) \vDash \varphi \text{ for some }f \colon T \to \pow{\size{\calA}}  \setminus \{\emptyset\}\text{,}\\
&(\calA,T)\vDash \forall x \, \varphi\;&&\Leftrightarrow\;(\calA,T^x_{\size{\calA}}) \vDash \varphi\text{.}
\end{alignat*}

A $\tau$-formula $\varphi$ is \emph{satisfiable} if there exists a $\tau$-structure $\calA$ and team $T$ with domain $X \supseteq \Fr(\varphi)$ in $\calA$ such that $(\calA,T) \vDash \varphi$.
Likewise, $\varphi$ is \emph{valid} if $(\calA,T) \vDash \varphi$ for all $\tau$-structures $\calA$ and teams $T$ as above.

\medskip

For $\alpha \in \FO$ and $\varphi \in \TL$, define $\alpha \hook \varphi \dfn \neg \alpha \lor (\alpha \land \varphi)$, and if $T$ is a team in $\calA$, let $T_\alpha \dfn \Set{ s \in T \mid (\calA,s) \vDash \alpha }$.
This operator was introduced by \citet{galliani_upwards_2015} and \citet{tl-to-so}.

\begin{proposition}
$(\calA,T) \vDash \alpha \hook \varphi$ if and only if $(\calA,T_\alpha) \vDash \varphi$.
\end{proposition}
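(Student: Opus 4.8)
The plan is to exploit the fact that first-order formulas are evaluated flatly in team semantics, so that the split disjunction hidden in $\alpha\hook\varphi=\neg\alpha\lor(\alpha\land\varphi)$ collapses to a single canonical decomposition. Concretely, since $\alpha,\neg\alpha\in\FO$, for any subteam $S\subseteq T$ one has $(\calA,S)\vDash\neg\alpha$ iff no $s\in S$ satisfies $\alpha$, i.e. $S\subseteq T\setminus T_\alpha$, and likewise $(\calA,S)\vDash\alpha$ iff $S\subseteq T_\alpha$. I would record these two equivalences first, as they reduce the whole argument to elementary set manipulation on the subteams $S,U\subseteq T$.

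For the direction from right to left, I would assume $(\calA,T_\alpha)\vDash\varphi$ and exhibit the canonical split $S\dfn T\setminus T_\alpha$ and $U\dfn T_\alpha$. Then $T=S\cup U$, and by the first equivalence $(\calA,S)\vDash\neg\alpha$. Since every $s\in T_\alpha$ satisfies $\alpha$, we get $(\calA,U)\vDash\alpha$, and together with the hypothesis $(\calA,U)\vDash\varphi$ this yields $(\calA,U)\vDash\alpha\land\varphi$. Hence the pair $(S,U)$ witnesses $(\calA,T)\vDash\neg\alpha\lor(\alpha\land\varphi)$.

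For the converse, suppose $(\calA,T)\vDash\alpha\hook\varphi$ and fix any witnessing split $S,U\subseteq T$ with $T=S\cup U$, $(\calA,S)\vDash\neg\alpha$ and $(\calA,U)\vDash\alpha\land\varphi$. The key, and really the only, step is to show that such a split is forced to satisfy $U=T_\alpha$. From $(\calA,U)\vDash\alpha$ and flatness I obtain $U\subseteq T_\alpha$. For the reverse inclusion, I take $s\in T_\alpha$; since $s\in T=S\cup U$ but $s\in S$ would contradict $(\calA,S)\vDash\neg\alpha$, it must be that $s\in U$, so $T_\alpha\subseteq U$. Thus $U=T_\alpha$, and $(\calA,U)\vDash\varphi$ becomes $(\calA,T_\alpha)\vDash\varphi$, as required.

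The argument is essentially bookkeeping once flatness is invoked; the only place demanding care is the forcing $U=T_\alpha$ in the forward direction, where one must use \emph{both} halves of the split, namely that $S$ omits all $\alpha$-assignments and that $U$ consists only of them. I would also make sure the vacuous truth of first-order formulas on the empty subteam is in place, so that the boundary cases $T_\alpha=\emptyset$ and $T_\alpha=T$ are covered without a separate argument.
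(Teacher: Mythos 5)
Your proof is correct and is precisely the elaboration of the argument the paper leaves as "straightforward" (citing Galliani's Lemma~16): the canonical split $T = (T\setminus T_\alpha) \cup T_\alpha$ in one direction, and in the other the observation that flatness of $\alpha$ and $\neg\alpha$ forces any witnessing split to have $U = T_\alpha$. Your attention to the overlap-allowed semantics of $\lor$ and to the empty-team boundary cases is exactly the bookkeeping the paper elides.
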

\begin{proof}
Straightforward.
See also \citeauthor{galliani_upwards_2015} \cite[Lemma 16]{galliani_upwards_2015}.
\end{proof}

\begin{proposition}[Locality]\label{prop:free-vars}
Let $\varphi \in \FOND$ and $X \supseteq \Fr(\varphi)$.
If $T$ is a team in $\calA$ with domain $Y \supseteq X$, then $(\calA,T) \vDash \varphi$ if and only if $(\calA,\rest{T}{X})\vDash \varphi$.
\end{proposition}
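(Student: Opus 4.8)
The plan is to prove the equivalence by structural induction on $\varphi$, establishing it simultaneously for all sets $X \supseteq \Fr(\varphi)$ and all teams $T$ of domain $Y \supseteq X$; quantifying over all admissible $X$ in the induction hypothesis is essential, since the relevant restriction set changes under quantifiers. The two base cases are immediate from earlier results. For $\varphi \in \FO$, the Tarski truth of each assignment $s \in T$ depends only on $\rest{s}{\Fr(\varphi)}$, and since $\Fr(\varphi) \subseteq X$ we have $(\calA,s)\vDash\varphi$ iff $(\calA,\rest{s}{X})\vDash\varphi$; as every $s'\in\rest{T}{X}$ is the restriction of some $s\in T$ and conversely, the universal quantification over the team is unaffected by passing to $\rest{T}{X}$. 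For a dependency atom $A_i\vv{t}$ we have $\Fr(A_i\vv{t}) = \Var(\vv{t}) \subseteq X$, so Proposition~\ref{prop:rel} gives $\vval{t}{T} = \vval{t}{\rest{T}{X}}$, and both sides evaluate $\delta_i$ on the same structure.

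The Boolean connectives $\negg$ and $\land$ follow by a direct appeal to the induction hypothesis, using $\Fr(\negg\psi) = \Fr(\psi)$ and $\Fr(\psi\land\theta) = \Fr(\psi)\cup\Fr(\theta) \subseteq X$. The splitting disjunction $\psi \lor \theta$ requires transporting the witnessing split between $T$ and $\rest{T}{X}$. From left to right, given $T = S \cup U$ I would pass to $\rest{S}{X}$ and $\rest{U}{X}$ and use that restriction commutes with union, so $\rest{T}{X} = \rest{S}{X}\cup\rest{U}{X}$, then apply the induction hypothesis to $S$ and $U$. From right to left, given a split $\rest{T}{X} = S' \cup U'$ I would pull it back by setting $S \dfn \{s\in T \mid \rest{s}{X}\in S'\}$ and $U$ analogously; one checks $S\cup U = T$ and, using $S', U' \subseteq \rest{T}{X}$, that $\rest{S}{X} = S'$ and $\rest{U}{X} = U'$, after which the induction hypothesis closes the case.

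The quantifier cases are the crux, and I expect the existential case to be the main obstacle, because the team domain grows from $Y$ to $Y\cup\{x\}$ while the relevant restriction set becomes $X' \dfn X\cup\{x\}$, which still satisfies $X' \supseteq \Fr(\psi)$. The key tool is Proposition~\ref{prop:def-of-supplement}. I would first record the restriction identity $\rest{(T^x_f)}{X\setminus\{x\}} = \rest{T}{X\setminus\{x\}}$, which by Proposition~\ref{prop:def-of-supplement} guarantees that $\rest{(T^x_f)}{X'}$ is itself of the form $(\rest{T}{X})^x_g$ for some supplementing function $g$ of $\rest{T}{X}$; conversely, from a supplementing function $g$ of $\rest{T}{X}$ I would define $f(s) \dfn g(\rest{s}{X})$ on $T$ and verify $\rest{(T^x_f)}{X'} = (\rest{T}{X})^x_g$. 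In both directions the induction hypothesis, applied to $\psi$ with the set $X' \supseteq \Fr(\psi)$ and the team $T^x_f$ of domain $Y\cup\{x\}\supseteq X'$, transfers satisfaction between $(\calA, T^x_f)$ and $(\calA, (\rest{T}{X})^x_g)$. The universal case is the specialisation where $f$ and $g$ are the constant function $\size{\calA}$, so the same identity yields $\rest{(T^x_{\size{\calA}})}{X'} = (\rest{T}{X})^x_{\size{\calA}}$ directly. Throughout I would check that the computations $\rest{(s^x_a)}{X\setminus\{x\}} = \rest{s}{X\setminus\{x\}}$ and $\rest{(s^x_a)}{X'} = (\rest{s}{X})^x_a$ hold irrespective of whether $x \in X$, so that one argument covers both the extending and the modifying use of $T^x_f$.
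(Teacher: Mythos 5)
Your proof is correct and takes essentially the same route as the paper: a structural induction whose dependency-atom case coincides with the paper's argument (via Proposition~\ref{prop:rel}), and whose base case and $\land$, $\lor$, $\exists$, $\forall$ steps the paper delegates to Galliani's locality proof, which you instead carry out in full, using Proposition~\ref{prop:def-of-supplement} to transport supplementing functions across restriction. Your explicit checks — strengthening the induction hypothesis to all $X \supseteq \Fr(\varphi)$, pulling back a split of $\rest{T}{X}$ to a split of $T$, and verifying $\rest{(s^x_a)}{X\cup\{x\}} = (\rest{s}{X})^x_a$ regardless of whether $x \in X$ — are exactly the details the cited proof contains, correctly handled.
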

\begin{proof}
Proven by induction on $\varphi$.
The base case of $\FO$-formulas and the inductive step for $\land$, $\lor$, $\exists$ and $\forall$ works similarly as \citeauthor{galliani_inex}'s proof for inclusion/exclusion logic \cite[Theorem 4.22]{galliani_inex}, to which the $\negg$-case can be added in the obvious manner.

It remains to consider the dependence atoms $A_i\vv{t}$.
As $X \supseteq \Fr(A_i\vv{t}) = \Var(\vv{t})$, clearly $\vv{t}\langle s \rangle = \vec{t}\langle \rest{s}{X} \rangle$ for any $s \in T$, and consequently, $\vv{t}\langle T \rangle = \vv{t}\langle \rest{T}{X} \rangle$.
Hence, $\calA \vDash \delta_i(\vval{t}{T})$ iff $\calA \vDash \delta_i(\vval{t}{\rest{T}{X}})$.
\end{proof}

\subsection*{Second-Order Logic}

Second-order logic $\tau$-$\SO$ (or simply $\SO$) extends $\tau$-$\FO$ by an infinite set of function and predicate variables disjoint from $\tau$ and corresponding quantifiers:
\begin{align*}
\hspace{-1.2em}\alpha \ddfn \;&t_1 = t_2 \mid Pt_1\ldots t_n \mid \neg \alpha \mid \alpha \land \alpha \mid \alpha \lor \alpha \mid \exists x\, \alpha\mid \forall x\, \alpha \mid \exists f \, \alpha \mid  \forall f \, \alpha \mid \exists P \, \alpha  \mid \forall P \, \alpha \text{.}
\end{align*}
In second-order logic, $\Var(\alpha)$ resp.\ $\Fr(\alpha)$ refers to the set of all (free) element, function and predicate variables in $\alpha$.
Formulas are evaluated on pairs $(\calA,\calJ)$, where $\calA$ is a structure and $\calJ$ is a \emph{second-order assignment}, \ie, it maps first-order variables $x\in \Fr(\alpha)$ to elements $\calJ(x) \in \calA$, function variables $f\in\Fr(\alpha)$ to functions $\calJ(f) \colon \calA^{\arity{f}} \to \calA$, and predicate variables $P \in \Fr(\alpha)$ to relations $\calJ(P) \subseteq \calA^{\arity{P}}$.
The notations $\calJ^x_a$, $\calJ^f_F$ and $\calJ^P_R$ are defined like $s^x_a$ in the first-order setting.
Based on this, the semantics of $\SO$ are defined in the usual manner.
Instead of $(\calA,\calJ) \vDash \alpha(X_1,\ldots,X_n)$ and $\calJ(X_1) = \calX_1, \ldots, \calJ(X_n) = \calX_n$, we also write $\calA \vDash \alpha(\calX_1,\ldots,\calX_n)$.

Second-order model checking, $\MC(\SO)$, is straightforwardly decidable by evaluating formulas in recursive top-down manner, and using non-deterministic guesses for the quantified elements, functions and relations, which may be of at most exponential size.

\begin{proposition}[$\star$]\label{prop:so-mc}
$\MC(\SO)$ is decidable in time $2^{n^\bigO{1}}$ and with $\size{\alpha}$ alternations, where $(\calA,\calJ,\alpha)$ is the input.
\end{proposition}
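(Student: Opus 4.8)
The plan is to exhibit an alternating Turing machine that evaluates the input formula $\alpha$ top-down, processing one connective or quantifier per step while maintaining a current second-order assignment. Write $n \dfn \size{(\calA,\calJ,\alpha)}$ for the input length and note that $\size{\calA} \le n$, that $\size{\alpha}\le n$, and that every arity of a variable or symbol occurring in $\alpha$ is at most $n$.

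A configuration carries a subformula $\varphi$ of $\alpha$, the current assignment, and a mode $m\in\{+,-\}$, where $+$ instructs the machine to verify $\varphi$ and $-$ to verify $\neg\varphi$. The transitions are as follows. In mode $+$ the connective $\lor$ and the quantifiers $\exists x,\exists f,\exists P$ put the machine into an existential state and $\land,\forall x,\forall f,\forall P$ into a universal state; in mode $-$ these roles are swapped. A negation $\neg\varphi$ is handled by flipping the mode and recursing on $\varphi$ \emph{without} branching, and an atom $t_1=t_2$ or $Pt_1\ldots t_k$ is checked directly against the current assignment (verifying its complement in mode $-$). First-order quantifiers guess an element $a\in\calA$ and pass to $\calJ^x_a$; the second-order quantifiers guess a function $F\colon\calA^{\arity{f}}\to\calA$ resp.\ a relation $R\subseteq\calA^{\arity{P}}$ and pass to $\calJ^f_F$ resp.\ $\calJ^P_R$. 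Because negation is realized by this mode flip and not by a branching, a switch between an existential and a universal state can occur only at a genuine $\land/\lor$ or $\forall/\exists$; hence along any root-to-leaf path of the computation the number of alternations is at most the number of connectives and quantifiers on that path, which is bounded by $\size{\alpha}$.

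For the running time, the recursion depth equals the quantifier-and-connective nesting depth of $\alpha$ and is therefore at most $\size{\alpha}\le n$. The only data of superpolynomial size are the guessed second-order objects: a relation of arity $r\le n$ over a domain of size $\size{\calA}\le n$ is a subset of $\calA^{r}$ and is specified by at most $\size{\calA}^{r}\le n^{n}=2^{n\log n}$ bits, and a function of the same arity needs at most $\size{\calA}^{r}\cdot\lceil\log\size{\calA}\rceil$ bits; thus each guessed object, and the whole running assignment, has size $2^{n^{\bigO{1}}}$, and guessing one costs time $2^{n^{\bigO{1}}}$. The atomic steps amount to table lookups and function evaluations on arguments of this size and run in time polynomial in $2^{n^{\bigO{1}}}$. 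Summing over the at most $\size{\alpha}$ levels of the recursion gives total running time $2^{n^{\bigO{1}}}$.

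Correctness is a routine induction on $\varphi$: started in a configuration with subformula $\varphi$, assignment $\calJ'$, and mode $+$, the machine accepts iff $(\calA,\calJ')\vDash\varphi$, while the mode $-$ case is its contrapositive, and each inductive step is immediate from the $\SO$-semantics and the existential/universal branching rules. The one point that genuinely demands care is the size estimate for the guessed functions and relations, since their arities are not fixed in advance; it is exactly the bound $\size{\calA}^{\arity{P}}\le n^{n}$ that keeps both the space for an assignment and the time to guess it inside $2^{n^{\bigO{1}}}$, and thereby confines the entire model-checking cost to the exponential regime.
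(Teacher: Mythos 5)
Your proposal is correct and matches the paper's proof in essence: both evaluate $\alpha$ top-down with an alternating machine, guess quantified elements, functions, and relations of size at most $\size{\calA}^{\arity{X}} \leq 2^{n^{\bigO{1}}}$, and bound the alternations by the number of connectives and quantifiers, \ie, by $\size{\alpha}$. The only difference is cosmetic: the paper preprocesses $\alpha$ into negation normal form so that $\neg$ sits on atoms, whereas you propagate a polarity mode $\{+,-\}$ that flips at negations and dualizes the branching --- two standard, interchangeable ways to ensure negations cost no alternations.
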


If the arity of quantified functions and relations is bounded a priori, then the interpretation of each quantified function and relation contains $\size{\calA}^{\bigO{1}}$ tuples.

\begin{corollary}
Let $\tau$-$n$-$\SO$ be the fragment of $\tau$-$\SO$ where all quantified functions and relations have arity at most $n$.
Then $\MC(\tau\text{-}n\text{-}\SO)$ is $\PSPACE$-complete.
\end{corollary}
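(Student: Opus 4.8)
The plan is to prove the two bounds separately, obtaining membership in $\PSPACE$ from Proposition~\ref{prop:so-mc} and $\PSPACE$-hardness from the known complexity of first-order model checking.

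For the upper bound, the key observation is that the arity restriction makes all second-order witnesses polynomially small. A relation interpreting a predicate variable $P$ is a subset of $\calA^{\arity{P}}$, hence contains at most $\size{\calA}^{\arity{P}} \leq \size{\calA}^n$ tuples, and a function interpreting a function variable $f$ is determined by its table of $\size{\calA}^{\arity{f}} \leq \size{\calA}^n$ values; in both cases a full description has size $\size{\calA}^{\bigO{1}}$. This is exactly the remark preceding the corollary. I would then reuse the recursive top-down evaluation underlying Proposition~\ref{prop:so-mc}: the nondeterministic guesses for quantified elements, functions and relations now each cost only polynomial time and space, rather than the exponential cost forced by unbounded arities. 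As the recursion depth---and therefore the number of alternations---is bounded by $\size{\alpha}$, the procedure becomes an alternating polynomial-time computation, which yields $\MC(\tau\text{-}n\text{-}\SO) \in \APTIME = \PSPACE$.

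For the lower bound, I would note that $\FO$ is syntactically contained in $\tau\text{-}n\text{-}\SO$ for every $n \geq 0$, since an $\FO$-formula has no second-order quantifiers and so vacuously respects the arity bound. Thus the identity map is a (trivially $\leqlogm$-computable) reduction from $\MC(\FO)$ to $\MC(\tau\text{-}n\text{-}\SO)$, and since $\MC(\FO)$ is already $\PSPACE$-complete (Table~\ref{tab:table-sat}, \cite{fmt}), $\PSPACE$-hardness is immediate.

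I expect no real obstacle, as the argument is essentially bookkeeping on top of Proposition~\ref{prop:so-mc}. The one point deserving care is the final step of the upper bound: one must confirm that polynomial witness size combined with the $\size{\alpha}$-bounded alternation depth gives a genuine $\APTIME$ computation and then appeal to $\APTIME = \PSPACE$. Alternatively, one could describe a direct deterministic polynomial-space evaluation that keeps, on a stack of depth $\size{\alpha}$, the current polynomially-sized interpretations of the second-order variables in scope.
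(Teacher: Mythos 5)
Your proposal is correct and matches the paper's intended argument: the paper derives the upper bound from exactly the remark you cite (bounded arity makes each quantified function or relation polynomially small, so the recursive procedure of Proposition~\ref{prop:so-mc} becomes an alternating polynomial-time computation, and $\APTIME = \PSPACE$), and the hardness half, which the paper leaves implicit, is precisely the trivial inclusion of $\FO$ with its classically $\PSPACE$-complete model checking. Your closing remark about verifying the $\APTIME$ bookkeeping (or giving a direct polynomial-space stack-based evaluation) is a reasonable extra precaution but raises no actual obstacle.
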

 \section{From FO($\negg$) to SO: Upper bounds for model checking}

\label{sec:upper}

In this section, we present complexity upper bounds for the model checking problem of team logic.
On that account, we assume all first-order structures $\calA$ and teams $T$ to be finite.

Instead of deciding $\MC(\FOND)$ directly, we reduce to the corresponding problem of second-order logic, $\MC(\SO)$.
For this purpose, we use a result of \citet{vaananen_dependence_2007}, which states that, roughly speaking, $\TL$-formulas can be translated to $\SO$.
As a consequence, the upper bounds of $\SO$ carry over to $\TL$.

Väänänen's original translation \cite[Theorem 8.12]{vaananen_dependence_2007} from $\TL$ to $\SO$ assumes that all variables in a formula are quantified at most once.
However, since we also consider fragments $\FONKD{n}{k}$ of finite width $n$, re-quantification of bound variables may be necessary.
In what follows, we generalize the translation accordingly, and also extend it to include generalized dependence atoms.

\medskip

Suppose $\vv{x} = (x_1,\ldots,x_n)$ is a tuple of variables, $\varphi\in \FOND$ such that $\Fr(\varphi) \subseteq \vv{x}$, and $R$ is a $n$-ary predicate.
Then we recursively define the $\SO$-formula $\eta^{\vv{x}}_\varphi(R)$ as shown below.
In order to avoid repetitions in the tuple $\vv{x}$, we define the notation $\vv{x}{;}y$ as
$\vv{x}$ if $y = x_i$ for some $1 \leq i \leq n$, and as $(x_1,\ldots,x_n,y)$ if $y \notin \{x_1,\ldots,x_n\}$.

Next, we state the translation recursively.

\begin{itemize}
\item If $\varphi$ is first-order, then $\eta^{\vec{x}}_\varphi(R) \dfn \forall \vec{x} (R\vec{x} \imp \varphi)$.\smallskip
\item If $\varphi = A_i(\vv{t})$ and $\delta_i \in \calD$ is $k$-ary, then let $\vv{z} = (z_1,\ldots,z_k)$ be pairwise distinct variables disjoint from $\vv{x}$, and
 $\eta^{\vec{x}}_\varphi(R) \dfn \exists S\,\forall \vv{z} \, (S\vv{z} \leftrightarrow (\exists \vv{x}\, R\vv{x} \land \vv{t} = \vv{z}\,)) \land \delta_i(S)$.\smallskip
\item If $\varphi = \negg \psi$, then $\eta^{\vec{x}}_\varphi(R) \dfn \neg \eta^{\vec{x}}_\psi(R)$.\smallskip
\item If $\varphi = \psi \land \theta$, then
$\eta^{\vec{x}}_\varphi(R) \dfn \eta^{\vec{x}}_\psi(R) \land \eta^{\vec{x}}_\theta(R)$.\smallskip
\item If $\varphi = \psi \lor \theta$, then
$\begin{aligned}[t]
\eta^{\vec{x}}_\varphi(R) \dfn \exists S\, \exists U \,&\,  \forall \vec{x} (R\vec{x} \leftrightarrow (S\vec{x} \lor U\vec{x}))\land \eta^{\vec{x}}_\psi(S) \land \eta^{\vec{x}}_\theta(U)\text{.}
\end{aligned}$\smallskip

\item If $\varphi = \exists y \, \psi$
, then
$\begin{aligned}[t]
\eta^{\vec{x}}_\varphi(R) \dfn \exists S\, \forall \vec{x} ((\exists y R\vec{x}) \leftrightarrow (\exists y\, S\vec{x}{;}y))
\land \eta_\psi^{\vec{x}{;}y}(S)\text{.}
\end{aligned}$\smallskip

\item If $\varphi = \forall y \, \psi$, then
$\begin{aligned}[t]
\eta^{\vec{x}}_\varphi(R) \dfn \exists S\, &\forall \vec{x} ((\exists y R\vec{x}) \leftrightarrow (\exists y\, S\vec{x}{;}y))\land \eta_\psi^{\vec{x}{;}y}(S) \\
 \land\,&\forall \vec{x} (R\vec{x} \rightarrow \forall y\, S\vec{x}{;}y) \text{.}
\end{aligned}$
\end{itemize}

\begin{theorem}\label{thm:tl-to-so}
Let $\varphi \in \FOND$, let $\vv{x} \supseteq \Fr(\varphi)$ be a tuple of variables, and $T$ be a team in $\calA$ with domain $Y \supseteq \vv{x}$.
Then $(\calA,T) \vDash \varphi$ if and only if $\calA \vDash \eta^{\vec{x}}_\varphi(\vval{x}{T})$.
\end{theorem}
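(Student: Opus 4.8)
The plan is to prove the biconditional by structural induction on $\varphi$, with the tuple $\vv{x}$ and the team $T$ varying along with the formula. Before starting the induction I would normalize the team: since $\Fr(\varphi) \subseteq \vv{x}$, Locality (Proposition~\ref{prop:free-vars}) gives $(\calA,T) \vDash \varphi$ iff $(\calA,\rest{T}{\vv{x}}) \vDash \varphi$, while Proposition~\ref{prop:rel} gives $\vval{x}{T} = \vval{x}{\rest{T}{\vv{x}}}$; hence both sides of the claimed equivalence are unchanged when $T$ is replaced by $\rest{T}{\vv{x}}$, and I may assume $Y = \vv{x}$. With the domain pinned to $\vv{x}$, Proposition~\ref{prop:iso} turns $S \mapsto \vval{x}{S}$ into a Boolean-lattice isomorphism between subteams of $T$ and subsets of $\vval{x}{T} \subseteq \calA^{\size{\vv{x}}}$; this correspondence is what lets me translate team-level manipulations into statements about the second-order relation $R$ and back.

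For the base cases: if $\varphi \in \FO$, then $(\calA,T) \vDash \varphi$ means $(\calA,s) \vDash \varphi$ for all $s \in T$, which is literally $\calA \vDash \forall \vv{x}\,(\vval{x}{T}\vv{x} \imp \varphi)$ because the truth of $\varphi$ at $s$ depends only on $\vv{x}\langle s \rangle$. For a dependency atom $\varphi = A_i\vv{t}$ I would observe that the subformula $\exists \vv{x}\,(R\vv{x} \land \vv{t} = \vv{z}\,)$ defines exactly the relation $\vval{t}{T}$ (well defined as a function of $R = \vval{x}{T}$ since $\Var(\vv{t}) \subseteq \vv{x}$, by Proposition~\ref{prop:rel}); so the guessed $S$ is forced to equal $\vval{t}{T}$, and $\delta_i(S)$ is precisely the semantic condition $\calA \vDash \delta_i(\vval{t}{T})$. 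The cases $\varphi = \negg\psi$ and $\varphi = \psi \land \theta$ are immediate from the induction hypothesis. For the split $\varphi = \psi \lor \theta$, a decomposition $T = S' \cup U'$ maps under the isomorphism to $\vval{x}{T} = \vval{x}{S'} \cup \vval{x}{U'}$, and conversely every pair $S,U$ with $S \cup U = \vval{x}{T}$ arises from the subteams $\{s \in T \mid \vv{x}\langle s\rangle \in S\}$ and its $U$-analogue; applying the induction hypothesis to $\psi$ with $S = \vval{x}{S'}$ and to $\theta$ with $U = \vval{x}{U'}$ closes the case.

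The quantifier cases are the crux, and they are exactly where the translation departs from Väänänen's original (which forbade re-quantification). For $\varphi = \exists y\,\psi$ I would read the clause $\forall \vv{x}\,((\exists y\,R\vv{x}) \leftrightarrow (\exists y\,S\vv{x}{;}y))$ as asserting that $R$ and the relation $S$ agree after projecting away $y$; via Proposition~\ref{prop:def-of-supplement} (stated so as to allow $y \in \vv{x}$) this is equivalent to $S$ being $(\vv{x}{;}y)\langle T^y_f\rangle$ for some supplementing function $f$, uniformly covering both the case $y \notin \vv{x}$, where the domain genuinely grows, and the re-quantification case $y \in \vv{x}$, where $\exists y\,R\vv{x}$ is a genuine projection. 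Existence of such an $S$ then matches the existence of a supplementing function in the semantics of $\exists y$, and the induction hypothesis for $\psi$ applied to $T^y_f$ with tuple $\vv{x}{;}y$ finishes the step. The universal case $\varphi = \forall y\,\psi$ is analogous, with the extra conjunct $\forall \vv{x}\,(R\vv{x} \rightarrow \forall y\,S\vv{x}{;}y)$ forcing the guessed $S$ to be the \emph{full} extension $(\vv{x}{;}y)\langle T^y_{\size{\calA}}\rangle$ rather than an arbitrary variation, matching the constant supplementing function used in the semantics of $\forall y$.

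The main obstacle I anticipate is precisely verifying the quantifier cases: one must check that the projection encoding $(\exists y\,R\vv{x}) \leftrightarrow (\exists y\,S\vv{x}{;}y)$ corresponds exactly to the variation relation in both the domain-extending and the re-quantifying regime (using the restriction-to-projection dictionary of Proposition~\ref{prop:rel} together with Proposition~\ref{prop:def-of-supplement}), and that in the universal case the two conjuncts together pin $S$ down uniquely. Everything else reduces to routine bookkeeping through Propositions~\ref{prop:rel}, \ref{prop:iso}, and \ref{prop:def-of-supplement}.
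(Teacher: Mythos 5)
Your proposal is correct and follows essentially the same route as the paper's proof: normalization to domain $\vv{x}$ via Locality, structural induction, the lattice isomorphism of Proposition~\ref{prop:iso} for the $\lor$-split, the forcing argument for the guessed relation in the dependency-atom case, and Proposition~\ref{prop:def-of-supplement} for the quantifier cases. The only cosmetic difference is that you verify the projection clause $\forall \vv{x}\,((\exists y\,R\vv{x}) \leftrightarrow (\exists y\,S\vv{x}{;}y))$ inline, whereas the paper factors exactly this "restriction-to-projection dictionary" (including the re-quantification regime $y \in \vv{x}$) into its auxiliary Lemma~\ref{lem:supp-as-rest}.
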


In order to prove the $\exists$ and $\forall$ cases, we require the next lemma.
\begin{lemma}[$\star$]\label{lem:supp-as-rest}
Let $T$ have domain $\vv{x}$ and $S$ have domain $\vv{x} \cup \{y\}$ (with possibly $y \in X$), and let $X' \dfn \vv{x} \setminus \{x\}$.
Then $\rest{T}{X'} = \rest{S}{X'}$ if and only if $\calA \vDash \pi(\vv{x}\langle T\rangle,\vv{x}{;}y\langle S\rangle)$, where $\pi(T,S) \dfn  \forall \vv{x}((\exists y T \vv{x}) \leftrightarrow (\exists y S \vv{x}{;}y))$.
\end{lemma}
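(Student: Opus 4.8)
The plan is to unfold the second-order formula $\pi$ after instantiating its two relation variables with $R \dfn \vval{x}{T}$ and $R' \dfn \vv{x}{;}y\langle S\rangle$, and to recognise that it merely asserts that $R$ and $R'$ have the same projection onto the coordinates belonging to $X' = \vv{x}\setminus\{y\}$ (the variables of $\vv{x}$ distinct from the quantified $y$). Let $\vv{z}$ be a tuple enumerating $X'$. Since that common projection equals $\vv{z}\langle T\rangle$ on the $R$-side and $\vv{z}\langle S\rangle$ on the $R'$-side, the lemma will follow immediately from the ``furthermore'' clause of Proposition~\ref{prop:rel}, which yields $\vv{z}\langle T\rangle = \vv{z}\langle S\rangle$ iff $\rest{T}{X'} = \rest{S}{X'}$.

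First I would note that the free variables of the biconditional $B \dfn (\exists y\, R\vv{x}) \leftrightarrow (\exists y\, R'\vv{x}{;}y)$ are contained in $X'$, since in either reading of $\vv{x}{;}y$ the variable $y$ occurs only under the scope of an $\exists y$. Hence, over the nonempty universe $\calA$, every quantifier of the outer $\forall \vv{x}$ that ranges over a variable not free in $B$ is vacuous, and $\calA \vDash \pi(R,R')$ is equivalent to the assertion that $B$ holds under every assignment of $X'$ into $\calA$.

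Next I would evaluate the two sides of $B$ along the case distinction built into the notation $\vv{x}{;}y$. If $y \notin \vv{x}$, then $\vv{z} = \vv{x}$, the quantifier $\exists y$ on the left is vacuous so the left side defines $R = \vval{x}{T}$, while $\exists y\, R'\vv{x}{;}y$ projects the $(n{+}1)$-ary relation $R' = \vv{x}{;}y\langle S\rangle$ onto its first $n$ coordinates, i.e.\ onto $\vv{x}\langle S\rangle$; thus $B$ expresses $\vval{x}{T} = \vv{x}\langle S\rangle$. If $y = x_j \in \vv{x}$, then $\vv{x}{;}y = \vv{x}$ and both $\exists y\, R\vv{x}$ and $\exists y\, R'\vv{x}{;}y$ project the $n$-ary relations $R,R'$ away from their $j$-th coordinate, so $B$ expresses the equality of these two projections, namely $\vv{z}\langle T\rangle = \vv{z}\langle S\rangle$. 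In both cases $\calA \vDash \pi(R,R')$ is equivalent to $\vv{z}\langle T\rangle = \vv{z}\langle S\rangle$, and Proposition~\ref{prop:rel} closes the argument. The one delicate point is this second case, where the inner $\exists y = \exists x_j$ rebinds a variable already bound by the outer $\forall\vv{x}$; there one has to follow the variable shadowing carefully to confirm both that $\exists x_j\, R\vv{x}$ really denotes the projection of $R$ forgetting its $j$-th coordinate and that the outer $\forall x_j$ over this now $x_j$-free subformula is indeed vacuous.
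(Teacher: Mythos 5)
Your proof is correct and takes essentially the same route as the paper's appendix proof: both unfold $\pi$ into the assertion that $\vv{x}\langle T\rangle$ and $\vv{x}{;}y\langle S\rangle$ have equal projections onto the coordinates of $X'$, split into the cases $y \notin \vv{x}$ and $y = x_j \in \vv{x}$ (where the inner $\exists y$ shadows the outer universal quantifier, making it vacuous), and then apply the second part of Proposition~\ref{prop:rel} to convert the projection equality into $\rest{T}{X'} = \rest{S}{X'}$. Your explicit attention to the shadowing subtlety and to vacuous quantification over a nonempty universe, as well as your silent reading of the statement's typo $X' \dfn \vv{x}\setminus\{x\}$ as $\vv{x}\setminus\{y\}$, match what the paper's proof does implicitly.
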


\begin{proof}[Proof of Theorem]
Note that $(\calA,T) \vDash \varphi \Leftrightarrow (\calA,\rest{T}{\vv{x}}) \vDash \varphi$ by Proposition~\ref{prop:free-vars}, and $\vval{x}{T} = \vval{x}{\rest{T}{\vv{x}}}$.
For this reason, we can assume that $T$ has domain $\vv{x}$.
The proof is now by induction on $\varphi$.
\begin{itemize}
\item  If $\varphi$ is first-order, clearly $(\calA,T)\vDash \varphi$ iff $\calA \vDash \varphi(\vv{a})$ for all $\vv{a} \in \vval{x}{T}$ iff $\calA \vDash \eta^{\vv{x}}_\varphi(\vval{x}{T})$.

\item
If $\varphi = A_i(\vv{t})$ and $\delta_i \in \calD$ is a $k$-ary dependency, then $(\calA,T) \vDash A_i(\vv{t})$ iff $\calA \vDash \delta_i(\vval{t}{T})$.
We prove that this is again equivalent to $\calA \vDash \exists S \,\rho(R,S) \land \delta_i(S)$, where $R = \vval{x}{T}$ and $\rho(R,S) \dfn \forall \vv{z} \, (S\vv{z} \leftrightarrow (\exists \vv{x}\, R\vv{x} \land \vv{t} = \vv{z}\,))$.

It suffices to show that $\calA \vDash \rho(R,S)$ implies $S = \vval{t}{T}$.
Recall that $\vv{x} \cap \{z_1,\ldots,z_k\} = \emptyset$ and that the $z_i$ are pairwise distinct.
On that account, suppose $\calA \vDash \rho(R,S)$ and $\vv{a} = (a_1,\ldots,a_k)\in \calA^k$.
By definition of the formula, $\vv{a} \in S$ iff $\calA \vDash \exists \vv{x}\, R\vv{x} \land \vv{t} = \vv{a}$.
However, this is the case iff $\vv{t}\langle s \rangle = \vv{a}$ for some $s \in T$, \ie, $\vv{a} \in \vv{t}\langle T\rangle$.

\item
The cases $\varphi = \negg \psi$ and $\varphi = \psi \land \theta$ immediately follow by induction hypothesis.

\item
If $\varphi = \psi \lor \theta$, then by induction hypothesis, $(\calA,T) \vDash \varphi$ iff there are $S,U\subseteq T$ such that $T = S \cup U$ and $\calA \vDash \eta^{\vv{x}}_\psi(\vv{x}\langle S \rangle)\land \eta^{\vv{x}}_\theta(\vv{x}\langle U \rangle)$.
Let $R \dfn \vval{x}{T}$.

Then due to Proposition~\ref{prop:iso}, the above is equivalent to the existence of $P,Q \subseteq \calA^n$ such that $R = P \cup Q$ and $\calA \vDash \eta^{\vv{x}}_\psi(P)\land \eta^{\vv{x}}_\theta(Q)$, and consequently to $\calA \vDash \exists S \, \exists U\, \forall \vv{x}( R\vv{x} \leftrightarrow (S\vv{x} \lor U\vv{x})) \land \eta^{\vv{x}}_\psi(S) \land \eta^{\vv{x}}_\theta(U)$.

\item If $\varphi = \exists y\, \psi$, by Proposition~\ref{prop:def-of-supplement}, then $(\calA,T)\vDash \varphi$ iff $(\calA,S) \vDash \psi$ for some team $S$ with domain $\vv{x}\cup\{y\}$ such that $\rest{T}{X'}=\rest{S}{X'}$.
By Lemma~\ref{lem:supp-as-rest} and by induction hypothesis, this is the case iff $(\calA,\vval{x}{T}) \vDash \exists S\, \forall \vec{x}((\exists y R \vec{x}) \leftrightarrow (\exists y S \vv{x};y)) \land \eta^{\vec{x};y}_\psi(S)$.

\item The case $\varphi = \forall y\, \psi$ is proven analogously to $\exists$.
The additional clause $(R\vec{x} \rightarrow \forall y\, S\vec{x}{;}y)$ ensures that the supplementing function is constant and $f(s) = \dom\calA$. \qedhere
\end{itemize}
\end{proof}

\begin{definition}
Let $\calD = \{ \delta_1,\delta_2,\ldots\}$ be a set of dependencies.
$\calD$ is called \emph{p-uniform} if there is a polynomial time algorithm that for all $i$, when given $A_i\vv{t}$, computes $\delta_i$.
\end{definition}

\begin{corollary}\label{cor:fo-mc}
Let $\calD$ be a p-uniform set of dependencies.
Then $\MC(\FOND)$ is decidable in time $2^{n^{\bigO{1}}}$ and with $\size{\varphi}^\bigO{1}$ alternations, where $(\calA,T,\varphi)$ is the input.
\end{corollary}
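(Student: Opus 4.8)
The plan is to reduce $\MC(\FOND)$ to $\MC(\SO)$ by means of the translation $\eta^{\vec{x}}_\varphi$ defined above and then invoke Proposition~\ref{prop:so-mc}. Given an input $(\calA,T,\varphi)$, I would fix a tuple $\vv{x}$ enumerating the free variables $\Fr(\varphi)$ (which are contained in the domain of $T$ by assumption), compute the $\SO$-formula $\eta^{\vec{x}}_\varphi(R)$ for a fresh predicate $R$, and form the second-order assignment $\calJ$ that maps $R$ to the relation $\vval{x}{T}$. By Theorem~\ref{thm:tl-to-so} we then have $(\calA,T)\vDash\varphi$ if and only if $\calA\vDash\eta^{\vec{x}}_\varphi(\vval{x}{T})$, so the algorithm simply decides $(\calA,\calJ)\vDash\eta^{\vec{x}}_\varphi(R)$ using the procedure from Proposition~\ref{prop:so-mc}.

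It remains to check that this reduction keeps the relevant parameters polynomial. The relation $\vval{x}{T}$ consists of at most $\size{T}$ tuples of arity $\size{\vv{x}}\le\size{\varphi}$, so $\calJ$ is encodable in size polynomial in $\size{(\calA,T)}$. For the formula, inspection of the recursive definition of $\eta$ shows that each clause adds only a bounded number of second-order quantifiers together with a first-order part of size $\bigO{\size{\vv{x}}}$; since the recursion mirrors the syntax tree of $\varphi$ and $\size{\vv{x}}\le\size{\varphi}$, the total size $\size{\eta^{\vec{x}}_\varphi}$ is $\size{\varphi}^{\bigO{1}}$, and in particular the number of second-order quantifier alternations is likewise bounded by $\size{\varphi}^{\bigO{1}}$.

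The one clause whose size is not immediately apparent is the dependency-atom case $\exists S\,\forall\vv{z}\,(S\vv{z}\leftrightarrow(\exists\vv{x}\,R\vv{x}\land\vv{t}=\vv{z}))\land\delta_i(S)$, which embeds a copy of $\delta_i$; controlling it is the one point where the hypothesis is genuinely used, and I expect it to be the main (if modest) obstacle. By p-uniformity, $\delta_i$ is computable from $A_i\vv{t}$ in polynomial time, so $\size{\delta_i}=\size{\varphi}^{\bigO{1}}$, the dependency clauses stay polynomial, and the whole translation is computable in polynomial time. Combining these bounds, Proposition~\ref{prop:so-mc} applied to the resulting instance $(\calA,\calJ,\eta^{\vec{x}}_\varphi)$, whose size is $n^{\bigO{1}}$ for $n\dfn\size{(\calA,T,\varphi)}$, decides membership in time $2^{(n^{\bigO{1}})^{\bigO{1}}}=2^{n^{\bigO{1}}}$ and with $\size{\eta^{\vec{x}}_\varphi}=\size{\varphi}^{\bigO{1}}$ alternations, as required.
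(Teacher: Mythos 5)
Your proposal is correct and follows essentially the same route as the paper: compute $\vv{x} = \Fr(\varphi)$, $\eta^{\vv{x}}_\varphi$ and $\vval{x}{T}$ in polynomial time, invoke Theorem~\ref{thm:tl-to-so} for correctness, and apply Proposition~\ref{prop:so-mc} to the resulting $\SO$ instance to obtain time $2^{n^{\bigO{1}}}$ and $\size{\varphi}^{\bigO{1}}$ alternations. Your explicit verification that the translation stays polynomial---in particular that p-uniformity controls the embedded copies of $\delta_i$ in the dependency-atom clause---is exactly the detail the paper's one-line "in polynomial time" leaves implicit.
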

\begin{proof}
We compute $\vv{x} \dfn \Fr(\varphi)$, $\eta^{\vv{x}}_\varphi$ and $\vval{x}{T}$ from $\varphi$ and $T$ in polynomial time.
Afterwards, we accept if and only if $\calA \vDash \eta^{\vv{x}}_\varphi(\vval{x}{T})$.

By Proposition~\ref{prop:so-mc}, the latter can be checked by an algorithm with $\size{\eta^{\vv{x}}_\varphi}$ alternations and time exponential in $(\calA,\vval{x}{T},\eta^{\vv{x}}_\varphi)$.
This corresponds to $\size{\varphi}^{\bigO{1}}$ alternations and runtime exponential in $(\calA,T,\varphi)$.
\end{proof}
 
\subsection*{Bounded quantifier rank or width}

The complexity of the model checking problem of team logic significantly drops if either the number of variables or the quantifier rank is bounded by an arbitrary constant.
To prove this, we require a special fragment of $\SO$ that corresponds to such fragments.
We call this fragment \emph{sparse second-order logic}, which uses \emph{sparse quantifiers} $\exists^p$ and $\forall^p$:
\begin{align*}
\!\!\!\!\!\!\!\!\!(\calA,\calJ) \vDash \exists^p P\, \psi &\Leftrightarrow \exists R\subseteq \calA^{\arity{P}} \text{ such that }\size{R} \leq p(\size{\calA}) \text{ and }(\calA,\calJ^P_R) \vDash \psi\text{,}\\
\!\!\!\!\!\!\!\!\!(\calA,\calJ) \vDash \forall^p P\, \psi &\Leftrightarrow \forall R \subseteq \calA^{\arity{P}} \text{ such that }\size{R} \leq p(\size{\calA}) \text{ it holds }(\calA,\calJ^P_R) \vDash \psi\text{,}
\end{align*}
where $p\colon \N \to \N$ and $\size{\calA} \dfn \size{\!\dom\calA} + \sum_{X \in \tau}\size{X^\calA}$.
In other words, all quantified relations are bounded in their size relative to the underlying $\tau$-structure.
For obvious reason, there are no sparse function quantifiers.

The logic $\SO[p]$ is now defined as $\SO$, but with only $\exists^p$ and $\forall^p$ as permitted second-order quantifiers.

Consider the case where $p$ is bounded by a polynomial.
The interpretation of each quantified relation then contains at most $\size{\calA}^{\bigO{1}}$ tuples.
Consequently, on $\SO[p]$-formulas, the recursive model checking algorithm from Proposition~\ref{prop:so-mc} then runs in alternating polynomial time instead of exponential time.

\begin{corollary}\label{cor:bounded-so-pspace}
If $p$ is bounded by a polynomial, then
$\MC(\SO[p])$ is decidable in polynomial time with $\size{\alpha}$ alternations, where $(\calA,\calJ,\alpha)$ is the input.
\end{corollary}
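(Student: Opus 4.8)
The plan is to reuse the recursive, top-down model-checking procedure underlying Proposition~\ref{prop:so-mc} essentially verbatim, and to re-examine only its resource consumption under the sparse semantics. First I would recall how that procedure treats quantifiers: it traverses $\alpha$ from the outside in, realizing each quantifier as a branching step of an alternating machine, where a first-order quantifier guesses an element of $\dom\calA$ and a second-order quantifier guesses a relation (or function) over $\calA$, while Boolean connectives and atomic formulas are evaluated directly. The number of alternations is therefore bounded by the number of quantifiers in $\alpha$, hence by $\size{\alpha}$, and this count is manifestly unaffected by passing to sparse quantifiers.

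The key observation is that the cost of each second-order guess drops from exponential to polynomial. For an ordinary quantifier $\exists P$ the machine must write down an arbitrary relation $R \subseteq \calA^{\arity{P}}$, which may contain up to $\size{\dom\calA}^{\arity{P}}$ tuples and so be of exponential size; this is precisely the source of the $2^{n^{\bigO{1}}}$ running time in Proposition~\ref{prop:so-mc}. Under a sparse quantifier $\exists^p P$ or $\forall^p P$, however, the machine need only consider relations $R$ with $\size{R} \leq p(\size{\calA})$. Since $p$ is polynomially bounded, such an $R$ has at most $\size{\calA}^{\bigO{1}}$ tuples; as each tuple lies in $\calA^{\arity{P}}$ with $\arity{P} \leq \size{\alpha}$, the entire relation admits an encoding of size polynomial in the input $(\calA,\alpha)$.

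With this in hand, the bookkeeping at each level of the recursion---guessing the sparse $R$, writing it to the tape, and updating the assignment to $\calJ^P_R$---takes only polynomial time; first-order guesses are already polynomial, as is the evaluation of atoms and connectives. Summing the polynomial per-level cost over the at most $\size{\alpha}$ levels of recursion yields an alternating algorithm that runs in polynomial time overall while using at most $\size{\alpha}$ alternations, which is exactly the claimed bound.

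The only point requiring care---and it is a minor one---is the universal case $\forall^p P$: the universal branches of the machine must range over exactly the (polynomially many representable) relations of size at most $p(\size{\calA})$, rather than over all relations, so that the cardinality constraint is enforced on every branch. I expect this enforcement of the size bound in the branching step to be the sole modification of the algorithm of Proposition~\ref{prop:so-mc}; once the branching generates only size-bounded candidates, the $\exists^p$ and $\forall^p$ cases contribute the same polynomial per-level cost and the corollary follows.
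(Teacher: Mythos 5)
Your proposal is correct and matches the paper's argument, which likewise justifies the corollary by observing that under sparse quantifiers each guessed relation contains at most $\size{\calA}^{\bigO{1}}$ tuples, so the recursive algorithm of Proposition~\ref{prop:so-mc} runs in alternating polynomial time with at most $\size{\alpha}$ alternations. Your added remark about restricting the universal branching to size-bounded relations is a correct reading of the $\forall^p$ semantics and is implicit in the paper's version.
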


Remarkably, the translation from team logic with bounded width or quantifier rank to $\SO$ takes place in this fragment.
For $p \colon \N \to \N$, define the $\SO[p]$-formula $\zeta^{\vv{x},p}_\varphi$ similarly to $\eta^{\vv{x}}_\varphi$, but with all second-order quantifiers replaced by $\exists^{p}$.

Intuitively, every quantified relation in $\eta^{\vv{x}}_\varphi$ is either dividing or supplementing an existing team resp.\ relation.
For this reason, they grow at most by a factor of $\size{\!\dom\calA}$ for every quantifier.

\begin{theorem}\label{thm:sparse-translation}
Let $\varphi \in \FOND$, let $\vv{x} \supseteq \Fr(\varphi)$ be a tuple of variables, and $T$ be a team in $\calA$ with domain $Y \supseteq \vv{x}$.
If $p(n) \geq \size{T}\cdot n^{\qr(\varphi)}$ or $p(n) \geq n^{\wi(\varphi)}$, then $(\calA,T)\vDash \varphi$ if and only if $\calA \vDash \zeta^{\vv{x},p}_\varphi(\vval{x}{T})$.
\end{theorem}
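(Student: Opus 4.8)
The plan is to prove the biconditional by induction on $\varphi$, reusing verbatim the structure of the proof of Theorem~\ref{thm:tl-to-so} and merely checking that, wherever $\eta^{\vec{x}}_\varphi$ existentially quantifies a relation, the witness can always be chosen of size at most $p(\size{\calA})$. Since $\zeta^{\vv{x},p}_\varphi$ arises from $\eta^{\vec{x}}_\varphi$ by replacing each $\exists S$ with $\exists^{p} S$, it then suffices to show that the relations relevant to the unrestricted formula are exactly those over which the sparse quantifiers still range. I would phrase the induction hypothesis as the full equivalence $(\calA,S)\vDash\psi \Leftrightarrow \calA\vDash\zeta^{\vv{x},p}_\psi(\vval{x}{S})$ for every subformula $\psi$ and every team $S$ respecting the size bound. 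The two-sided form is essential, because the translation of $\negg\psi$ is $\neg\zeta^{\vv{x},p}_\psi$, and only a biconditional hypothesis lets one conclude $\neg\zeta^{\vv{x},p}_\psi \equiv \neg\eta^{\vec{x}}_\psi$; combining the resulting equivalence $\zeta^{\vv{x},p}_\varphi\equiv\eta^{\vec{x}}_\varphi$ with Theorem~\ref{thm:tl-to-so} yields the statement.

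The heart of the argument is a size-bookkeeping lemma: every team arising during the evaluation of $\varphi$ on $T$ has size at most $\size{T}\cdot n^{\qr(\varphi)}$, where $n=\size{\!\dom\calA}$. This follows by a routine induction, since splitting for $\lor$ only produces subteams $S,U\subseteq T$, the supplementing functions for $\exists$ and $\forall$ enlarge a team by a factor of at most $n$ (each assignment gains at most $n$ extensions), and $\negg$ and $\land$ leave the team unchanged. Every existentially quantified relation in $\eta^{\vec{x}}_\varphi$ is the image $\vval{x}{S}$ of such a team $S$, or in the dependency-atom case the image $\vval{t}{T}$ of size at most $\size{T}$, so Propositions~\ref{prop:rel} and~\ref{prop:iso} bound its cardinality by the corresponding team size. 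Tracking the recurrence shows the invariant $\size{S}\cdot n^{\qr(\psi)}\le\size{T}\cdot n^{\qr(\varphi)}$ is preserved when descending through a quantifier, so the single bound $p(n)\ge\size{T}\cdot n^{\qr(\varphi)}$ suffices throughout. For the width case I would instead observe that, thanks to the re-quantification built into the $\vv{x}{;}y$ notation, every quantified relation has arity at most $\wi(\varphi)$ and hence at most $n^{\wi(\varphi)}$ tuples, so $p(n)\ge n^{\wi(\varphi)}$ works independently of $\size{T}$.

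The main obstacle is that all second-order quantifiers in the translation are existential, yet they occur under arbitrarily many negations introduced by $\negg$, so under an odd number of negations a sparse $\exists^{p}$ behaves as a bounded universal quantifier. One must therefore argue not merely that an existential witness of size $\le p(n)$ exists, but also that no counterexample of size exceeding $p(n)$ can defeat such an effectively universal occurrence. The biconditional induction hypothesis dissolves this: since $\zeta^{\vv{x},p}_\psi$ and $\eta^{\vec{x}}_\psi$ agree on all admissible teams, they remain equivalent under any number of negations, and the only remaining task at each quantifier step is to invoke the size bound above. I expect this interplay between negation-induced polarity flips and the uniform size guarantee to be the delicate point, whereas the remaining cases transfer directly from the proof of Theorem~\ref{thm:tl-to-so}.
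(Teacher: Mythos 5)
Your quantifier-rank half is essentially the paper's own argument in different clothing: the paper also runs a two-sided induction, proving that $\eta^{\vv{x}}_\psi$ and $\zeta^{\vv{x},p}_\psi$ agree on every interpretation in which all relations have size at most $\size{T}\cdot\size{\calA}^{\ell}$ whenever $\qr(\psi)\leq \qr(\varphi)-\ell$, and it defuses the polarity problem created by $\negg$ exactly as you propose, via the biconditional hypothesis. Your team-level bookkeeping ($\lor$ only splits into subteams, a supplementation multiplies the team size by at most $\size{\calA}$) is the semantic shadow of the paper's injections $f\colon S\to R$ and $f\colon S^*\to R\times\calA$. One point you elide, though: a sparse witness for $\exists^{p} S$ need not be the image $\vval{(\vv{x}{;}y)}{S'}$ of any team once the argument tuple $\vv{x}{;}y=(v_1,\ldots,v_r)$ contains repetitions, because tuples $(a_1,\ldots,a_r)$ with $a_i\neq a_j$ where $v_i=v_j$ are never touched by the atom $S\vv{x}{;}y$ and may be arbitrary junk inflating $\size{S}$. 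The paper handles this by an explicit pruning step $S^*\dfn \vv{x}{;}y\langle V\rangle\cap S$ together with the observation that $\rho'(R,S)$ and $\rho'(R,S^*)$ are equivalent; in your setup you must either do the same or argue that $\vv{x}$ may be assumed repetition-free, since the theorem allows arbitrary tuples $\vv{x}\supseteq\Fr(\varphi)$.

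The genuine gap is in your width case: the claim that "every quantified relation has arity at most $\wi(\varphi)$" is false over \FOND. For a $k$-ary dependency $\delta_i\in\calD$, the translation of $A_i\vv{t}$ quantifies a $k$-ary relation $S$, and $k=\arity{\delta_i}$ is unrelated to $\wi(\varphi)$: the terms $t_1,\ldots,t_k$ use at most $\wi(\varphi)$ variables, but $k$ itself can be arbitrarily large, so the naive bound $\size{\calA}^{\text{arity}}$ gives nothing (and this case cannot be dropped, since Corollary~\ref{cor:mcfo2-upper1} applies the theorem to logics with dependencies). The repair is precisely what the paper's second proof does: $S$ occurs only in atoms with the fixed argument tuple $\vv{t}$, indeed the clause $\forall\vv{z}\,(S\vv{z}\leftrightarrow\cdots)$ pins $S$ down to be exactly $\vval{t}{T}$, so it suffices to let $\exists S$ range over subsets of $\vv{t}\langle V\rangle$ where $V=\{s\colon \Var(\vv{t})\to\calA\}$; since $\size{\Var(\vv{t})}\leq\wi(\varphi)$, this set has at most $\size{\calA}^{\wi(\varphi)}\leq p(\size{\calA})$ elements. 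In short, the correct invariant bounds the number of \emph{realizable} argument tuples, not the arity; with that substitution (and the pruning point above) your plan coincides with the paper's proof.
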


We distinguish two cases.

\begin{proof}[Proof for $p(n) \geq \size{T}\cdot n^{\qr(\varphi)}$]

Assume $\calA$, $T$ as above, let $m \dfn \qr(\varphi)$ and $p(n) \geq n^m$.

The idea of the proof is to show that $\eta^{\vv{x}}_\varphi$ and $\zeta^{\vv{x},p}_\varphi$ agree on $(\calA,\calJ)$ for all "sufficiently sparse" $\calJ$ (cf.\ Theorem~\ref{thm:tl-to-so}).

Formally, let $\ell \leq m$ and let $(\calA,\calJ)$ be a second-order interpretation such that $\size{\calJ(R)}\leq \size{T}\cdot \size{\calA}^{\ell}$ for all relations $R \in \dom \calJ$.
Then we prove for all $\varphi \in \FOND$ with $\qr(\varphi) \leq m - \ell$ and tuples $\vv{x} \supseteq \Fr(\varphi)$ that it holds $(\calA,\calJ) \vDash \eta^{\vv{x}}_\varphi \Leftrightarrow (\calA,\calJ) \vDash \zeta^{\vv{x},p}_\varphi$ .

For $\ell = 0$, this yields the theorem, since $\size{\vval{x}{T}} \leq \size{T} \cdot \size{\calA}^0$.

The proof is by induction on $\varphi$.
We distinguish the following cases.
\begin{itemize}
\item If $\varphi \in \FO$, then $\eta^{\vv{x}}_\varphi = \zeta^{\vv{x},p}_\varphi$.
\item If $\varphi = \negg \psi$ or $\varphi = \psi \land \theta$, then the inductive step is clear.
\item If $\varphi = A_i\vv{t}$ for a $k$-ary dependency $\delta_i \in \calD$, then
$\zeta^{\vv{x},p}_\varphi(R) = \exists^p S \,\rho(R,S)$ and $\eta^{\vv{x}}_\varphi(R) = \exists S\, \rho(R,S)$, where $\rho(R,S) = \forall \vv{z} \, (S\vv{z} \leftrightarrow (\exists \vv{x}\, R\vv{x} \land \vv{t} = \vv{z}\,)) \land \delta_i(S)$.
We show that $\calA \vDash \eta^{\vv{x}}_\varphi(R)$ implies $\calA \vDash \zeta^{\vv{x},p}(R)$.
The other direction is trivial.

On that account, suppose $\calA \vDash \rho(R,S)$ for some $S \subseteq \calA^k$.
We construct an injection $f \colon S  \to R$, which implies $\size{S} \leq \size{R}$.
Consequently, $\calA \vDash \exists^p S\,\rho(R,S)$, as by assumption, $\size{R} \leq \size{T}\cdot \size{\calA}^\ell \leq \size{T}\cdot \size{\calA}^m \leq p(\size{\calA})$.

For every $\vv{a} \in S$, define $f(\vv{a})$ as some $\vv{b} \in R$ such that $\vv{t}\langle\{\vv{x}\mapsto \vv{b}\}\rangle = \vv{a}$.
By $\rho(R,S)$, such $\vv{b}$ must exist.
Clearly, $f$ is injective.
\item If $\varphi = \psi \lor \theta$, then $\zeta^{\vv{x},p}_\varphi(R) = \exists^p S\, \exists^p U \, \rho$ and $\eta^{\vv{x}}_\varphi(R) = \exists S \,\exists U \, \rho'$, where
\begin{align*}
\rho(R,S,U) &= \forall \vec{x} (R\vv{x} \leftrightarrow (S\vv{x} \lor U\vv{x}))\land \zeta^{\vv{x},p}_\psi(S) \land \zeta^{\vv{x},p}_\theta(U)\text{,}\\
\rho'(R,S,U) &= \forall \vec{x} (R\vv{x} \leftrightarrow (S\vv{x} \lor U\vv{x}))\land \eta^{\vv{x}}_\psi(S) \land \eta^{\vv{x}}_\theta(U)\text{.}
\end{align*}
Suppose $\size{R} \leq \size{T}\cdot\size{\calA}^\ell$ and $\qr(\varphi) \leq m - \ell$.
Clearly $\qr(\psi) = \qr(\theta) = \qr(\varphi)$.

Let $\calA \vDash \eta^{\vv{x}}_\varphi(R)$, \ie, $\calA \vDash \rho'(R,S,U)$ for some $S,U \subseteq \calA^{\size{\vv{x}}}$.
But $\rho'$ enforces that $\size{S},\size{U} \leq \size{R}$.
Since $\size{R} \leq \size{T}\cdot\size{\calA}^\ell$ by assumption, we can apply the induction hypothesis to $\eta^{\vv{x}}_\psi(S)$ and $\eta^{\vv{x}}_\theta(U)$ and obtain $\calA \vDash \rho(R,S,U)$.
Since in particular $\size{S},\size{U} \leq p(\size{\calA})$, we conclude $\calA \vDash \exists^p S\, \exists^p U\,\rho$.
The other direction is similar.
\smallskip

\item If $\varphi = \exists y\, \psi$, then $\zeta^{\vv{x},p}_\varphi(R) = \exists^p S\,\rho(R,S)$ and $\eta^{\vv{x}}_\varphi(R) = \exists S \,\rho'(R,S)$, where
\begin{align*}
\rho(R,S) &= \forall \vec{x} ((\exists y R\vec{x}) \leftrightarrow (\exists y\, S\vec{x}{;}y)) \land \zeta_\psi^{\vec{x}{;}y,p}(S)\text{,}\\
\rho'(R,S) &= \forall \vec{x} ((\exists y R\vec{x}) \leftrightarrow (\exists y\, S\vec{x}{;}y)) \land \eta_\psi^{\vec{x}{;}y}(S)\text{.}
\end{align*}
Suppose $\size{R} \leq \size{T}\cdot\size{\calA}^\ell$ and $\qr(\varphi) \leq m - \ell$.
We show that $\calA \vDash \eta^{\vv{x}}_\varphi(R)$ implies $\calA \vDash \zeta^{\vv{x},p}_{\varphi}(R)$.
The other direction is again similar.

Assuming $\calA \vDash \eta^{\vv{x}}_\varphi(R)$, there exists $S \subseteq \calA^{\size{\vv{x};y}}$ such that $\calA \vDash \rho'(R,S)$.
As a first step, we erase unnecessary elements from $S$.
Note that $S$ occurs in $\rho'$ only in atomic formulas $S\vv{x}{;}y$, \ie, with a fixed argument tuple $\vv{x}{;}y$.
Let $(v_1,\ldots,v_r) \dfn \vv{x}{;}y$.
If now $v_i = v_j$ for some $1 \leq i < j \leq r$, then every tuple $(a_1,\ldots,a_r)$ with $a_i \neq a_j$ can be safely deleted from $S$.
Formally, if $S^* \dfn \vv{x}{;}y\langle V \rangle \cap S$, where $V = \{ s\colon \vv{x}\cup\{y\} \to \calA \}$, then $\calA \vDash \rho'(R,S)$ if and only if $\calA \vDash \rho'(R,S^*)$, which can be showed by straightforward induction.

\smallskip

Note that $\qr(\psi) = \qr(\varphi) - 1 \leq m - (\ell + 1)$.
Consequently, to apply the induction hypothesis, we prove
$\size{S^*} \leq \size{R} \cdot \size{\calA}\leq \size{T}\cdot \size{\calA}^{\ell+1}$ by presenting some injective $f \colon S^* \to R \times \calA$.

If $y \notin \vv{x}$, let $f$ be the identity, as $\rho'$ ensures that $(\vv{a},b)\in S^*$ implies $\vv{a}\in R$.
However, if $y \in \vv{x}$, then we define $f(\vv{a})$ as follows.
By construction, $\vv{a} \in S^*$ equals $\vv{x}\langle s \rangle$ for some $s \colon \vv{x}\to\calA$.
Again by $\rho'$, there is $\hat{s}\colon \vv{x}\to \calA$ such that $\vv{x}\langle \hat{s} \rangle \in R$ and $s = \hat{s}^y_{s(y)}$.
Let now $f(\vv{a}) \dfn (\vval{x}{\hat{s}}, s(y))$.
Then $f$ is injective.

Hence, by induction hypothesis, we can replace $\eta^{\vv{x}}_\varphi$ by $\zeta^{\vv{x},p}_\varphi$ and obtain $\calA \vDash \rho(R,S^*)$.
Since $\size{S^*} \leq \size{\calA}^{\ell + 1} \leq p(\size{\calA})$, we obtain $\calA \vDash \exists^p S\, \rho(R,S)$.
\item The case $\varphi = \forall y\, \psi$ is proven similarly to $\varphi = \exists y\, \psi$.\qedhere
\end{itemize}
\end{proof}

\begin{proof}[Proof for $p(n) \geq n^{\wi(\varphi)}$]
We can apply the same argument as in the $\exists$-case of the previous proof.
Suppose $S$ is a second-order variable.
Then $S$ appears in $\eta^{\vv{x}}_\varphi$ only in atomic formulas of the form $S\vv{t}$ for a fixed $\vv{t}$.
Accordingly, it suffices to let $\exists S$ range over subsets of $\vv{t}\langle V \rangle$, where $\vv{y} \dfn \Var(\vv{t})$ and $V \dfn \{ s \colon \vv{y} \to \calA \}$.

(We consider terms $\vv{t}$ instead of only variables to account for the translations of dependencies, where $S$ can have terms as arguments.)

Since $\vv{y}$ contains at most $\wi(\varphi)$ distinct variables, $\size{V} \leq \size{\calA}^{\wi(\varphi)} \leq p(\size{\calA})$.
Consequently, every second-order quantifier $\exists S$ can be replaced by $\exists^p S$, which implies $\calA \vDash \eta^{\vv{x}}_\varphi(\vval{x}{T}) \Leftrightarrow \calA \vDash \zeta^{\vv{x},p}_\varphi(\vval{x}{T})$.
\end{proof}

\begin{corollary}\label{cor:mcfo2-upper1}
Let $\calD$ be a p-uniform set of dependencies and $m < \omega$.
$\MC(\FONKD{m}{\omega})$ and $\MC(\FONKD{\omega}{m})$ are then decidable in polynomial time with $\size{\varphi}^\bigO{1}$ alternations, where $(\calA,T,\varphi)$ is the input.
\end{corollary}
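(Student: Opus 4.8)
The plan is to combine the sparse translation of Theorem~\ref{thm:sparse-translation} with the efficient model checking of sparse second-order logic from Corollary~\ref{cor:bounded-so-pspace}, mirroring the reduction already carried out in Corollary~\ref{cor:fo-mc} but now landing inside the polynomially bounded fragment $\SO[p]$ instead of full $\SO$. Concretely, on input $(\calA,T,\varphi)$ I would first compute $\vv{x} \dfn \Fr(\varphi)$, the relation $\vval{x}{T}$, and the formula $\zeta^{\vv{x},p}_\varphi$ in polynomial time; here p-uniformity of $\calD$ guarantees that the dependency formulas $\delta_i$ occurring in the translation of the atoms $A_i\vv{t}$ can be produced in polynomial time and are therefore of polynomial size. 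Then I would accept exactly when $\calA \vDash \zeta^{\vv{x},p}_\varphi(\vval{x}{T})$, which by Theorem~\ref{thm:sparse-translation} is equivalent to $(\calA,T)\vDash\varphi$ provided the sparseness bound $p$ is chosen large enough.

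The choice of $p$ is where the two fragments differ. For $\MC(\FONKD{m}{\omega})$ I would take the fixed polynomial $p(n) \dfn n^m$: since $\wi(\varphi) \leq m$, the width clause $p(n) \geq n^{\wi(\varphi)}$ of Theorem~\ref{thm:sparse-translation} applies. For $\MC(\FONKD{\omega}{m})$ I would instead take $p(n) \dfn \size{T}\cdot n^m$; since $\qr(\varphi) \leq m$, the quantifier-rank clause $p(n) \geq \size{T}\cdot n^{\qr(\varphi)}$ applies. In both cases $p(\size{\calA})$ is bounded by a polynomial in the length of the input $(\calA,T,\varphi)$, so every sparse quantifier ranges over relations of size polynomial in the input; feeding $\zeta^{\vv{x},p}_\varphi(\vval{x}{T})$ to the recursive algorithm of Corollary~\ref{cor:bounded-so-pspace} therefore yields a decision procedure running in polynomial time with $\size{\zeta^{\vv{x},p}_\varphi}$ alternations. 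Since $\zeta^{\vv{x},p}_\varphi$ has the same shape as $\eta^{\vv{x}}_\varphi$ and is of size $\size{\varphi}^{\bigO{1}}$, this gives the claimed bound of $\size{\varphi}^{\bigO{1}}$ alternations.

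The one point that needs genuine care, and which I expect to be the main obstacle, is the quantifier-rank case: there the width is unbounded, so $\size{T}$ cannot be absorbed into $\size{\calA}^{\bigO{1}}$ and the sparseness bound $p$ must genuinely depend on the input team. I would stress that this is harmless, because $\size{T}$ is part of the input and hence $p(\size{\calA}) = \size{T}\cdot\size{\calA}^m$ is still polynomial in the total input size; the recursive algorithm underlying Corollary~\ref{cor:bounded-so-pspace} only requires that each guessed relation fit into polynomial space relative to the input, which remains true. Strictly speaking, $\SO[p]$ in Corollary~\ref{cor:bounded-so-pspace} is phrased for a fixed polynomial $p$, so here I would invoke the underlying algorithm directly with the input-dependent bound $\size{T}\cdot\size{\calA}^m$ rather than the black-box statement. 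The width case presents no such issue, as $p(n)=n^m$ is a fixed polynomial to which Corollary~\ref{cor:bounded-so-pspace} applies verbatim.
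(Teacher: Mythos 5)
Your proposal is correct and follows essentially the same route as the paper: compute $\vv{x} = \Fr(\varphi)$, $\vval{x}{T}$ and $\zeta^{\vv{x},p}_\varphi$ in polynomial time (using p-uniformity of $\calD$), then decide $\calA \vDash \zeta^{\vv{x},p}_\varphi(\vval{x}{T})$ via Theorem~\ref{thm:sparse-translation} and the sparse model checking algorithm. The only difference is cosmetic: for the quantifier-rank case, where you invoke the underlying algorithm directly with the input-dependent bound $\size{T}\cdot\size{\calA}^m$, the paper instead pads $\calA$ with a dummy predicate so that \wloss $\size{T} \leq \size{\calA}$, which makes the single fixed polynomial $p(n) = n^{m+1}$ suffice for both cases and lets Corollary~\ref{cor:bounded-so-pspace} apply verbatim.
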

\begin{proof}
Analogously to Corollary~\ref{cor:fo-mc}, we reduce to $\MC(\SO[p])$ in polynomial time, where $p(n) \dfn n^{m+1}$.
Accordingly, assume an input $(\calA,T,\varphi)$ and that either $\wi(\varphi) \leq m$ or $\qr(\varphi) \leq m$.
Then the input is mapped to $(\calA,\vval{x}{T},\zeta^{p,\vv{x}}_\varphi)$, where $\vv{x} = \Fr(\varphi)$.

Now, if $\wi(\varphi) \leq m$, then $(\calA,T) \vDash \varphi$ if and only if $\calA \vDash \zeta^{p,\vv{x}}_\varphi(\vval{x}{T})$ by Theorem~\ref{thm:sparse-translation}.

Otherwise, $\qr(\varphi) \leq m$.
In that case, \wloss $\size{T} \leq \size{\calA}$ (if necessary, pad $\calA$ with a dummy predicate).
Then $\size{T}\cdot \size{\calA}^m \leq \size{\calA}^{m+1} = p(\size{\calA})$.
As a consequence, we can again apply Theorem~\ref{thm:sparse-translation} and conclude that $(\calA,T) \vDash \varphi$ if and only if $\calA \vDash \zeta^{p,\vv{x}}_\varphi(\vval{x}{T})$.
\end{proof}
 
\section{From FO$^2$($\negg$) to FO$^2$: Upper bounds for satisfiability}

\label{sec:upper-sat}

In this section, we prove that the satisfiability problem of $\FONN$ is decidable, and in fact, complete for $\TOWERPOLY$.

Our approach is to establish a finite model property for $\FONN$.
However, instead of attacking $\FONN$ directly, we reduce the logic to $\FO^2$, which has the exponential model property \cite{fo2}.
As a first step, we expand $\FONN$-formulas into disjunctive normal form with respect to $\land$ and $\negg$, using the laws depicted in the next lemma.

We use the abbreviations $\varphi \ovee \psi \dfn \negg(\negg \varphi \land \negg \psi)$ (Boolean disjunction) and $\E\beta \dfn \negg \neg \beta$ ("at least one assignment in the team satisfies $\beta$").

\begin{lemma}\label{lem:axioms}
The following laws hold for $\FON$:
\begin{alignat}{2}
&\alpha \land \bigwedge_{i=1}^n \E\beta_i &&\equiv \bigvee_{i=1}^n(\alpha \land \E\beta_i) \label{eq:flat-dist1} \\
&\bigvee_{i=1}^n(\alpha_i \land \E \beta_i) &&\equiv \Big(\bigvee_{i=1}^n \alpha_i\Big) \land \bigwedge_{i=1}^n \E(\alpha_i\land \beta_i) \label{eq:flat-dist2}\\
&(\vartheta_1 \ovee \vartheta_2) \lor \vartheta_3 &&\equiv (\vartheta_1 \lor \vartheta_3) \ovee (\vartheta_2 \lor \vartheta_3) \label{eq:ovee-lor-dist1}\\
&\vartheta_1 \lor (\vartheta_2 \ovee \vartheta_3) &&\equiv (\vartheta_1 \lor \vartheta_2) \ovee (\vartheta_1 \lor \vartheta_3) \label{eq:ovee-lor-dist2}\\
&\exists x\, (\vartheta_1 \ovee \vartheta_2) &&\equiv (\exists x\,\vartheta_1) \ovee (\exists x\, \vartheta_2) \label{eq:ovee-ex-dist}\\
&\exists x\, (\vartheta_1 \vee \vartheta_2) &&\equiv (\exists x\,\vartheta_1) \vee (\exists x\, \vartheta_2) \label{eq:lor-ex-dist}\\
&\exists x\, (\alpha \land \E \beta) &&\equiv (\exists x\, \alpha) \land \E\,\exists x\,(\alpha \land \beta) \label{eq:land-ex-dist}\\
&\forall x\, (\vartheta_1 \land \vartheta_2) &&\equiv (\forall x \, \vartheta_1) \land (\forall x \, \vartheta_2) \label{eq:land-all-dist}\\
&\forall x \, \negg \vartheta &&\equiv \negg \forall x \, \vartheta \label{eq:neg-all-dist}
\end{alignat}
\end{lemma}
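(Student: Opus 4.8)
The plan is to verify each of the nine equivalences directly from the team semantics, after first recording the derived truth conditions of the two abbreviations. Unfolding the definitions shows that $(\calA,T)\vDash\vartheta_1\ovee\vartheta_2$ holds iff $(\calA,T)\vDash\vartheta_1$ or $(\calA,T)\vDash\vartheta_2$, so $\ovee$ is ordinary Boolean disjunction at the level of teams, and that $(\calA,T)\vDash\E\beta$ holds iff some $s\in T$ satisfies $\beta$ in Tarski semantics. I also record the two flatness facts I will use repeatedly: for $\alpha\in\FO$, $(\calA,T)\vDash\alpha$ iff every $s\in T$ satisfies $\alpha$, and consequently a splitting disjunction $\bigvee_i\alpha_i$ of first-order formulas holds at $T$ iff every $s\in T$ satisfies some $\alpha_i$. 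Together these reduce every law to an elementary manipulation of teams as sets of assignments.

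The two universal laws \eqref{eq:land-all-dist} and \eqref{eq:neg-all-dist} are immediate: $\forall x$ evaluates its argument on the single, deterministically chosen supplement $T^x_{\dom\calA}$, so it commutes with $\land$ and with $\negg$ with no choices to reconcile. The three laws distributing $\lor$ or $\exists$ over $\ovee$, namely \eqref{eq:ovee-lor-dist1}, \eqref{eq:ovee-lor-dist2} and \eqref{eq:ovee-ex-dist}, follow once $\ovee$ is read as team-level disjunction: a witness for the left-hand side (a split $T=S\cup U$ for $\lor$, or a supplementing function for $\exists$) immediately makes one of the two $\ovee$-disjuncts on the right true, and conversely any witness for a right-hand disjunct is a witness for the left-hand side; commutativity of $\lor$ lets \eqref{eq:ovee-lor-dist2} be read off from \eqref{eq:ovee-lor-dist1}.

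For the flat laws \eqref{eq:flat-dist1} and \eqref{eq:flat-dist2} the content is that an $\E\beta_i$ conjunct only requires one witnessing assignment, while the universal $\alpha$-part is preserved under arbitrary subsets and unions. In the forward direction of \eqref{eq:flat-dist1} I would place a chosen $\beta_i$-witness into the $i$-th block of the split and dump all remaining assignments into the first block; flatness of $\alpha$ makes every block satisfy $\alpha$, and the converse is trivial since each block is a subset of $T$. Law \eqref{eq:flat-dist2} is handled by the (possibly overlapping) split $T_i\dfn\{\,s\in T\mid(\calA,s)\vDash\alpha_i\,\}$, whose union is $T$ exactly when every assignment satisfies some $\alpha_i$, \ie when the first conjunct on the right holds. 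The existential law \eqref{eq:land-ex-dist} I would prove with the supplement $f(s)\dfn\{\,a\mid(\calA,s^x_a)\vDash\alpha\,\}$, which is nonempty for every $s$ precisely when $\exists x\,\alpha$ holds pointwise, and which captures an $\alpha\land\beta$-witness in $T^x_f$ precisely when $\E\,\exists x(\alpha\land\beta)$ holds.

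The hard part will be \eqref{eq:lor-ex-dist}, the distribution of the existential quantifier over splitting disjunction, because here the split and the supplement interact. For the backward direction I would merge two supplementing functions by setting $f(s)\dfn f'(s)\cup f''(s)$ on the union $T=T'\cup T''$, so that $T^x_f$ decomposes as the union of the two supplemented subteams and each satisfies its disjunct. For the forward direction, given $f$ with $T^x_f=S\cup U$, I would recover a split of $T$ by taking $T'\dfn\{\,s\in T\mid s^x_a\in S\text{ for some }a\in f(s)\,\}$ with the restricted supplement $f'(s)\dfn\{\,a\in f(s)\mid s^x_a\in S\,\}$, and symmetrically for $U$, so that $(T')^x_{f'}=S$ and $T=T'\cup T''$. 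Proposition~\ref{prop:def-of-supplement} is the tool that legitimizes passing between a supplemented team and a pair consisting of a subteam and a supplementing function, and it is exactly what makes the argument go through in the delicate case $x\in\vv{x}$, where supplementation overwrites a variable already present. The only recurring point requiring care is keeping every supplementing function nonempty-valued, which the definition of $\E$ and the splitting condition $T=S\cup U$ guarantee.
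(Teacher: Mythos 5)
Your proof is correct, but it takes a genuinely different route from the paper's: the paper proves only law \eqref{eq:land-ex-dist} itself and discharges the other eight by citation---\eqref{eq:flat-dist1}, \eqref{eq:flat-dist2} and \eqref{eq:lor-ex-dist} to \cite[Lemma 4.13, 4.14 and D.1]{axiom-fo}, \eqref{eq:ovee-lor-dist1}--\eqref{eq:ovee-ex-dist} to \cite[Proposition 5]{abramsky_strongly_2016}, and \eqref{eq:land-all-dist}--\eqref{eq:neg-all-dist} to \cite[Chapter 8]{vaananen_dependence_2007}---whereas you verify all nine directly from the team semantics. Your constructions check out: reading $\ovee$ as team-level Boolean disjunction reduces \eqref{eq:ovee-lor-dist1}--\eqref{eq:ovee-ex-dist} to distributing an existential witness (a split, resp.\ a supplementing function) over a disjunction of statements; the overlapping split $T_i \dfn \{s \in T \mid (\calA,s)\vDash \alpha_i\}$ plus flatness handles \eqref{eq:flat-dist2}, where the $\E$-witness lands in $T_i$ precisely because the right-hand side uses $\E(\alpha_i \land \beta_i)$ rather than $\E\beta_i$; and in \eqref{eq:lor-ex-dist} the merge $f(s) \dfn f'(s) \cup f''(s)$ and the restriction $f'(s) \dfn \{a \in f(s) \mid s^x_a \in S\}$ yield exactly $T^x_f = (T')^x_{f'} \cup (T'')^x_{f''}$, respectively $(T')^x_{f'} = T^x_f \cap S = S$ (using $S \subseteq T^x_f$), with nonemptiness of all supplementing functions guaranteed as you note. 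For \eqref{eq:land-ex-dist}---the one law the paper proves in full---your maximal supplement $f(s) \dfn \{a \mid (\calA,s^x_a) \vDash \alpha\}$ is a slightly slicker alternative to the paper's argument, which takes an arbitrary witness $f$ for $\exists x\,\alpha$ and enlarges it at a single assignment $\hat{s}$ by the $\beta$-witness $b$ via $g(\hat{s}) = f(\hat{s}) \cup \{b\}$; your version exploits flatness of $\alpha$ to absorb any $\alpha \land \beta$-witness automatically. One small remark: your appeal to Proposition~\ref{prop:def-of-supplement} in the \eqref{eq:lor-ex-dist} case is inessential, since the displayed set identities already go through verbatim when $x \in \dom T$ (overwriting an existing variable does not disturb them). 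What the paper's route buys is brevity; what yours buys is a self-contained proof independent of three external sources, at the cost of routine case analysis.
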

\begin{proof}
For (\ref{eq:flat-dist1}), (\ref{eq:flat-dist2}) and (\ref{eq:lor-ex-dist}), see \cite[Lemma 4.13, 4.14 and D.1]{axiom-fo}, respectively.
For (\ref{eq:ovee-lor-dist1})--(\ref{eq:ovee-ex-dist}), see \cite[Proposition 5]{abramsky_strongly_2016}.
For (\ref{eq:land-all-dist})--(\ref{eq:neg-all-dist}), see \cite[Chapter 8]{vaananen_dependence_2007}.

For (\ref{eq:land-ex-dist}), the direction "$\vDash$" is clear, as $\alpha\land \E\beta$ implies both $\alpha$ and $\E(\alpha \land \beta)$.
For the converse direction, suppose $(\calA,T) \vDash \exists x \alpha$ and $(\calA,\hat{s}) \vDash \exists x (\alpha \land \beta)$ for some $\hat{s} \in T$.
Then there are $f \colon T \to \pow{\calA}\setminus \{\emptyset\}$ and $b \in \calA$ such that $(\calA,T^x_f) \vDash \alpha$ and $(\calA,\hat{s}^x_b) \vDash \alpha \land \beta$.
Define $g(\hat{s}) = f(\hat{s}) \cup \{b\}$ and $g(s) = f(s)$ for $s \in T \setminus \{\hat{s}\}$.
Then $T^x_g = T^x_f \cup \{ s^x_b \}$.
Consequently, $(\calA,T^x_g) \vDash \alpha \land \E\beta$, hence $(\calA,T) \vDash \exists x\, (\alpha \land \E \beta)$.
\end{proof}

\begin{lemma}\label{lem:expansion}
Every $\tau\text{-}\FONK{n}{k}$-formula $\varphi$ is equivalent to a formula of the form
\[
\psi \dfn \bigovee_{i = 1}^n  \left( \alpha_i \land \bigwedge_{j = 1}^{m_i} \E \beta_{i,j}\right)
\]
such that $\{\alpha_1,\ldots,\alpha_n,\beta_{1,1},\ldots,\beta_{n,m_n}\} \subseteq \tau\text{-}\FO^n_k$ and $\size{\psi} \leq \exp_{\bigO{\size{\varphi}}}(1)$.
\end{lemma}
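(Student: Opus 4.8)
The plan is to prove the statement by structural induction on $\varphi$, following the grammar of $\FON$ (first-order atoms, $\negg$, $\land$, $\lor$, $\exists$, $\forall$; there are no dependency atoms since $\calD=\emptyset$). The induction hypothesis is the desired normal form itself: every subformula is equivalent to a Boolean disjunction $\bigovee_i\chi_i$ of \emph{basic} formulas $\chi_i = \alpha_i\land\bigwedge_j\E\beta_{i,j}$ whose first-order components lie in $\tau\text{-}\FO^n_k$. The engine of the proof is Lemma~\ref{lem:axioms} together with the observation that $\negg$, $\land$, and $\ovee$ are genuinely Boolean on teams (a pair $(\calA,T)$ either satisfies a formula or not), so that $\land$ and $\ovee$ obey all the usual Boolean laws; in particular, a conjunction of basic formulas is again basic after merging the first-order conjuncts and concatenating the lists of $\E$-atoms. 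Before the induction I would record the two flattening identities $\negg\alpha\equiv\E\neg\alpha$ and $\negg\E\beta\equiv\neg\beta$ for first-order $\alpha,\beta$, both immediate from $\E\beta\dfn\negg\neg\beta$, involutivity of $\negg$, and flatness of $\FO$.

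The Boolean cases are routine. A first-order formula is already basic (with an empty $\E$-list). For $\negg\varphi$ I would apply De Morgan to rewrite $\negg\bigovee_i\chi_i$ as $\bigwedge_i\negg\chi_i$, expand each $\negg\chi_i = \E\neg\alpha_i\ovee\bigovee_j\neg\beta_{i,j}$ using the two flattening identities, and then distribute $\land$ over $\ovee$ and merge, recovering a Boolean disjunction of basic formulas. For $\varphi\land\psi$ I would simply distribute $\land$ over the two outer $\ovee$'s and merge pairwise. These are the steps where the size grows the most.

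The splitjunction $\varphi\lor\psi$ is the crux and the step I expect to be the main obstacle, since it genuinely mixes the team-semantic $\lor$ with the Boolean layer. Using \eqref{eq:ovee-lor-dist1} and \eqref{eq:ovee-lor-dist2} I would pull $\lor$ inside both Boolean disjunctions, reducing to the splitjunction $\chi\lor\chi'$ of two basic formulas. I would then flatten each basic formula into a splitjunction of \emph{simple} terms $\alpha\land\E\beta$ via \eqref{eq:flat-dist1}, concatenate the two splitjunctions (commutativity and associativity of $\lor$), and collapse the single large splitjunction back to one basic formula with \eqref{eq:flat-dist2}, using that the team-disjunction of flat formulas coincides with their classical disjunction (so the flat part becomes $\alpha\lor\alpha'$). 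The purely first-order disjuncts (empty $\E$-list) must be handled directly, as \eqref{eq:flat-dist1} presupposes at least one $\E$-atom; a short direct computation shows that they contribute their flat part to the disjunction without adding an $\E$-atom, yielding the uniform identity $(\alpha\land\bigwedge_j\E\beta_j)\lor(\alpha'\land\bigwedge_{j'}\E\beta'_{j'})\equiv(\alpha\lor\alpha')\land\bigwedge_j\E(\alpha\land\beta_j)\land\bigwedge_{j'}\E(\alpha'\land\beta'_{j'})$. The quantifier cases run in the same spirit: for $\exists x$ I would push through $\ovee$ by \eqref{eq:ovee-ex-dist}, flatten by \eqref{eq:flat-dist1}, distribute over $\lor$ by \eqref{eq:lor-ex-dist}, apply \eqref{eq:land-ex-dist} to each simple term, and re-collapse by \eqref{eq:flat-dist2}; for $\forall x$ I would first derive distributivity of $\forall x$ over $\ovee$ from \eqref{eq:neg-all-dist} and \eqref{eq:land-all-dist} via the definition of $\ovee$, then push $\forall x$ into each basic formula with \eqref{eq:land-all-dist} and simplify each conjunct using $\forall x\,\E\beta\equiv\E\,\exists x\,\beta$ (again from \eqref{eq:neg-all-dist} and flatness).

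Finally, two bookkeeping points remain. Throughout, I would check case by case that every newly formed first-order component---such as $\alpha\lor\alpha'$, $\exists x(\alpha\land\beta_j)$, or $\forall x\,\alpha$---stays within width $n$ and quantifier rank $k$; this holds because the construction only ever combines first-order subformulas of $\varphi$ and faithfully mirrors its quantifier nesting, so neither the variable set nor the quantifier rank ever exceeds that of $\varphi$. For the size bound I would track the blowup per step: conjunction, splitjunction, and both quantifiers only multiply or polynomially enlarge the number of basic disjuncts and their size, whereas negation can cause a single-exponential blowup (the distribution of $\land$ over $\ovee$ after De Morgan). Since negation is the only super-polynomial step and the nesting depth of operations is $\bigO{\size{\varphi}}$, the cumulative size is bounded by a tower of height $\bigO{\size{\varphi}}$, that is $\exp_{\bigO{\size{\varphi}}}(1)$.
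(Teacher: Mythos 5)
Your proposal is correct and follows essentially the same route as the paper: induction on $\varphi$, using the identities of Lemma~\ref{lem:axioms} to flatten via (\ref{eq:flat-dist1}), distribute over $\ovee$ and $\lor$, push quantifiers through with (\ref{eq:land-ex-dist})--(\ref{eq:neg-all-dist}), and re-collapse with (\ref{eq:flat-dist2}), with the same size accounting (exponential only at the Boolean/$\negg$ steps, polynomial elsewhere, hence a tower of height $\bigO{\size{\varphi}}$). You are in fact somewhat more explicit than the paper on points it glosses over---the $\negg$ and $\land$ cases over DNFs via $\negg\alpha\equiv\E\neg\alpha$ and $\negg\E\beta\equiv\neg\beta$, the empty-$\E$-list disjuncts for which (\ref{eq:flat-dist1}) is inapplicable, and the identity $\forall x\,\E\beta\equiv\E\,\exists x\,\beta$---but these are refinements of the same argument, not a different one.
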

In the following, \emph{disjunctive normal form} (DNF) refers to formulas in the above form.
\begin{proof}
Proof by induction on $\varphi$.
\begin{itemize}
\item If $\varphi$ is a Boolean combination of $\FO^n_k$-formulas (over $\negg$ and $\land$), then we obtain a DNF of size $\leq \size{\varphi}\cdot 2^{\size{\varphi}}$ similarly as for ordinary propositional logic.
\item If $\varphi = \vartheta_1 \lor \vartheta_2$ for $\vartheta_1$, $\vartheta_2$ in DNF, then
\begin{align*}
\varphi = \;&\bigovee_{i = 1}^n
\left( \alpha_i \land \bigwedge_{j =1}^{m_i} \E \beta_{i,j}\right) \;
\lor
\;\bigovee_{i = 1}^{k} \left( \gamma_i \land \bigwedge_{j = 1}^{\ell_i} \E \delta_{i,j} \right) \\
\equiv \;&\bigovee_{i = 1}^n
\bigvee_{j=1}^{m_i} (\alpha_i \land \E \beta_{i,j}) \;
\lor
\;\bigovee_{i = 1}^{k}  \bigvee_{j=1}^{\ell_i}( \gamma_i \land \E \delta_{i,j} )\tag{Lemma~\ref{lem:axioms}, (\ref{eq:flat-dist1})}\\
\equiv \;& \bigovee_{\substack{1 \leq i_1 \leq n\\1\leq i_2 \leq k}} \;\; \bigvee_{j=1}^{m_{i_1}} (\alpha_{i_1} \land  \E \beta_{i_1,j}) \lor \bigvee_{j=1}^{\ell_{i_2}} (\gamma_{i_2} \land \E \delta_{i_2,j} ) \tag{Lemma~\ref{lem:axioms}, (\ref{eq:ovee-lor-dist1}) and  (\ref{eq:ovee-lor-dist2})}\\
\equiv \;& \bigovee_{i = 1}^{n\cdot k}  \; \bigvee_{j=1}^{o_i} (\mu_{i,j} \land  \E \nu_{i,j}) \tag{where $\mu_{i,j},\nu_{i,j} \in \FO^n_k$}\\
\equiv \;& \bigovee_{i = 1}^{n\cdot k} \Big(\bigvee_{j=1}^{o_i} \mu_{i,j}\Big) \land \bigwedge_{j = 1}^{o_i} \E(\mu_{i,j} \land \nu_{i,j})\text{,}\tag{Lemma~\ref{lem:axioms}, (\ref{eq:flat-dist2})}
\end{align*}
where the final DNF has size polynomial in $\size{\vartheta_1} + \size{\vartheta_2} \leq \size{\varphi}$.
\item If $\varphi = \exists x \, \vartheta$ for $\vartheta$ in DNF, then
\begin{align*}
\varphi \equiv\;& \exists x \,\bigovee_{i = 1}^n \bigvee_{j=1}^{m_i} (\alpha_i \land  \E \beta_{i,j}) \tag{Lemma~\ref{lem:axioms}, (\ref{eq:flat-dist1})}\\
\equiv\; & \bigovee_{i = 1}^n \bigvee_{j=1}^{m_i} \exists x \,(\alpha_i \land  \E \beta_{i,j}) \tag{Lemma~\ref{lem:axioms}, (\ref{eq:ovee-ex-dist}) and (\ref{eq:lor-ex-dist})}\\
\equiv\; & \bigovee_{i = 1}^n \bigvee_{j=1}^{m_i} \Big( (\exists x \,\alpha_i) \land  \E\exists x \,(\alpha_i \land \beta_{i,j}) \Big) \tag{Lemma~\ref{lem:axioms}, (\ref{eq:land-ex-dist})}\\
\equiv \; & \bigovee_{i = 1}^n \Big(\bigvee_{j=1}^{m_i} \exists x\, \alpha_i \Big) \land \bigwedge_{j=1}^{m_i} \E\exists x\, (\alpha_i\land \beta_{i,j}))\text{,}\tag{Lemma~\ref{lem:axioms}, (\ref{eq:flat-dist2})}
\end{align*}
which is again a DNF of polynomial size.
\item Finally, the $\forall$ case is due to (\ref{eq:land-all-dist}) and (\ref{eq:neg-all-dist}) of Lemma~\ref{lem:axioms}.\qedhere
\end{itemize}
\end{proof}

\begin{theorem}\label{thm:fo2-modelsize}
If $\tau$ is a relational vocabulary, then every satisfiable $\varphi \in \tau\text{-}\FONN$ has a model of size $\exp_{\bigO{\size{\varphi}}}(1)$.
\end{theorem}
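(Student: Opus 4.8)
The plan is to use the disjunctive normal form of Lemma~\ref{lem:expansion} to reduce the team-semantic satisfiability of $\varphi$ to the ordinary (Tarski) satisfiability of a single two-variable sentence, and then to invoke the exponential model property of $\FO^2$ \cite{fo2}. This is precisely where the relational hypothesis on $\tau$ enters: $\FO^2$ enjoys the exponential model property only over relational vocabularies, and the sentence I construct will stay in $\tau\text{-}\FO^2$.

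First I would apply Lemma~\ref{lem:expansion} with width $2$ to obtain an equivalent DNF $\psi = \bigovee_{i=1}^{N}\bigl(\alpha_i \land \bigwedge_{j=1}^{m_i}\E\beta_{i,j}\bigr)$ with all $\alpha_i,\beta_{i,j}\in\tau\text{-}\FO^2$ and $\size{\psi}\leq\exp_{\bigO{\size{\varphi}}}(1)$. Since $\ovee$ is Boolean disjunction, a pair $(\calA,T)$ satisfies $\psi$ iff it satisfies one of the disjuncts $\chi\dfn\alpha\land\bigwedge_{j=1}^{m}\E\beta_j$; hence if $\varphi$ is satisfiable, then some such $\chi$ is satisfiable in team semantics, and it suffices to exhibit a small model for that single $\chi$.

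The crux is the observation that $\chi$ is satisfiable in team semantics iff the $\FO^2$ sentence $\hat\chi\dfn\bigwedge_{j=1}^{m}\exists\vv{v}\,(\alpha\land\beta_j)$ is satisfiable in ordinary semantics, where $\vv{v}$ lists the (at most two) free variables of $\chi$. Unfolding the semantics of $\E$ and of the flat formula $\alpha$, a pair $(\calA,T)\vDash\chi$ means that every $s\in T$ satisfies $\alpha$ and, for each $j$, some $s\in T$ satisfies $\beta_j$; such a witness then satisfies $\alpha\land\beta_j$, giving $\calA\vDash\hat\chi$. Conversely, from witnesses $s_1,\dots,s_m$ for the conjuncts of $\hat\chi$ in a model $\calA\vDash\hat\chi$ I would assemble the team $T=\{s_1,\dots,s_m\}$, extended arbitrarily to domain $\Fr(\varphi)$ using locality (Proposition~\ref{prop:free-vars}): every member satisfies $\alpha$ and each $\beta_j$ is witnessed, so $(\calA,T)\vDash\chi$. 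The degenerate case $m=0$, where $\chi=\alpha$, is trivial, as the empty team over any one-element structure satisfies $\alpha$ vacuously.

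Finally I would apply the exponential model property of $\FO^2$ to $\hat\chi$. This is a $\tau\text{-}\FO^2$ sentence of size polynomial in $\size{\chi}\leq\size{\psi}\leq\exp_{\bigO{\size{\varphi}}}(1)$, so \cite{fo2} yields a model $\calA$ of size $2^{\bigO{\size{\hat\chi}}}=\exp_{\bigO{\size{\varphi}}}(1)$, the extra exponential being absorbed into the height of the tower; converting back as above produces a team $T$ with $(\calA,T)\vDash\varphi$ over a structure of the required size. I expect the only genuine work to lie in the equisatisfiability argument of the third paragraph—tracking which assignment witnesses which $\E\beta_j$ while keeping everything within two variables—whereas the DNF blow-up and the size arithmetic are routine, with the non-elementary bound inherited entirely from Lemma~\ref{lem:expansion}.
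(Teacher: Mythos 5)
Your proposal is correct and takes essentially the same route as the paper's own proof: the same expansion into disjunctive normal form via Lemma~\ref{lem:expansion}, the same reduction of a satisfiable disjunct $\alpha \land \bigwedge_{j}\E\beta_j$ to the classical $\FO^2$ sentence $\bigwedge_{j}\exists x\,\exists y\,(\alpha\land\beta_j)$ (the paper's $\gamma$), and the same appeal to the exponential model property of \cite{fo2} followed by reassembling the Tarski witnesses $\hat{s}_1,\ldots,\hat{s}_m$ into a team. The only cosmetic differences are that you phrase the key step as an explicit equisatisfiability and treat the degenerate case $m=0$ separately, which the paper leaves implicit.
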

\begin{proof}
Let $\varphi \in \tau\text{-}\FONN$ be satisfiable.
By Lemma~\ref{lem:expansion}, $\varphi$ is equivalent to a DNF of size $\exp_{\bigO{\size{\varphi}}}(1)$, which then has at least one satisfiable disjunct
\[
\psi = \alpha \land \bigwedge_{i=1}^m \E \beta_i\text{,}
\]
where $\{\alpha,\beta_1,\ldots,\beta_m\} \subseteq \tau\text{-}\FO^2$ and \wloss $\Var(\psi) \subseteq \{x,y\}$.
Let $(\calA, T)$ be a model of $\psi$.
We show that $\psi$, and hence $\varphi$, has a model $(\calB,T')$ of size $2^{\bigO{\size{\psi}^2}}$.

Observe that $\calA$ satisfies the classical $\FO^2$-sentence
\[
\gamma \dfn \bigwedge_{i=1}^m \exists x\,\exists y\, \alpha \land \beta_i\text{,}
\]
since for every $i$, $(\calA,T)\vDash \alpha \land\E \beta_i$, which implies that there is $s \in T$ such that $(\calA,s) \vDash \alpha \land \beta_i$, \ie, $\calA \vDash \exists x \exists y \alpha \land \beta_i$.

By \citet{fo2}, if $\delta$ is a satisfiable $\FO^2$-formula over a relational vocabulary, then $\delta$ has a model of size $2^{\bigO{\size{\delta}}}$ (in Tarski semantics).
Let $\calB$ the corresponding model of $\gamma$ of size $2^{\bigO{\size{\gamma}}} \subseteq 2^{\bigO{\size{\psi}^2}}$.
Then for every $i$ there exists $\hat{s}_i \colon \{x,y\} \to \calB$ such that $(\calB,\hat{s}_i) \vDash \alpha \land \beta_i$.
Consequently, $(\calB,\{\hat{s}_1,\ldots,\hat{s}_m\}) \vDash \psi$.
\end{proof}

\begin{corollary}\label{cor:fo2-upperbound}
If $\tau$ is a relational vocabulary, then $\SAT(\tau\text{-}\FONN)$ and $\VAL(\tau\text{-}\FONN)$ are in $\TOWERPOLY$.
\end{corollary}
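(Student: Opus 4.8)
\emph{Proof proposal.} The plan is to combine the finite model property of Theorem~\ref{thm:fo2-modelsize} with the model-checking algorithm of Corollary~\ref{cor:mcfo2-upper1} into a brute-force decision procedure for satisfiability, and then to obtain validity by complementation. Write $n \dfn \size{\varphi}$ for the input $\varphi \in \tau\text{-}\FONN$. Since $\FONN$ has width $2$, we may assume $\Var(\varphi) \subseteq \{x,y\}$, and by Locality (Proposition~\ref{prop:free-vars}) it suffices to consider teams whose domain is $\Fr(\varphi) \subseteq \{x,y\}$. By Theorem~\ref{thm:fo2-modelsize}, $\varphi$ is satisfiable if and only if it has a model $(\calA,T)$ with $\size{\calA} \leq N$, where $N \dfn \exp_{\bigO{n}}(1)$.

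First I would spell out the procedure for $\SAT$: enumerate every relational $\tau$-structure $\calA$ on the universe $\{1,\ldots,N\}$ (it suffices to interpret the finitely many symbols occurring in $\varphi$, whose arities are bounded by $n$) together with every team $T$ of assignments $\{x,y\} \to \calA$, and for each pair invoke the model checker to test whether $(\calA,T) \vDash \varphi$. The procedure accepts iff some pair passes the test; correctness is immediate from the small model property.

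Next I would bound the cost. Since there are at most $n$ predicates, each of arity at most $n$, a structure of size $N$ is described by at most $n \cdot N^n$ bits, so there are at most $2^{n \cdot N^n}$ structures, and a team is one of at most $2^{N^2}$ subsets of the $N^2$ assignments $\{x,y\}\to\calA$. Each of these counts is a tower of exponentials of height $\bigO{n}$. For each fixed pair the model-checking input has size polynomial in $N$, and by Corollary~\ref{cor:mcfo2-upper1} (applied with width $m=2$ and $\calD = \emptyset$, which is trivially p-uniform) the test runs in alternating polynomial time, hence in deterministic time $2^{\mathrm{poly}(N)}$, again a tower of height $\bigO{n}$. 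As we perform only a bounded number of such stages, and the product of two towers of height $\bigO{n}$ is again a tower of height $\bigO{n}$, the whole procedure runs in deterministic time $\exp_{p(n)}(1)$ for a suitable polynomial $p$, witnessing $\SAT(\tau\text{-}\FONN) \in \TOWERPOLY$.

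Finally, for validity I would observe that $\varphi$ is valid iff $\negg\varphi$ is unsatisfiable, and that $\negg\varphi \in \tau\text{-}\FONN$ with $\size{\negg\varphi} = n+1$, so $\varphi \mapsto \negg\varphi$ is a polynomial-time reduction from $\VAL$ to the complement of $\SAT$. As $\TOWERPOLY$ is a deterministic time class it coincides with its complement class, and it is closed under $\leqlogm$-reductions (noted after its definition), whence $\VAL(\tau\text{-}\FONN) \in \TOWERPOLY$ as well. The only point requiring genuine care is the accounting of the third paragraph: one must verify that each stage adds at most a constant number of exponentials on top of the height-$\bigO{n}$ tower coming from the model-size bound, so that the overall tower height stays linear in $n$; this is the crux, and it follows directly from the stated closure properties of $\TOWERPOLY$.
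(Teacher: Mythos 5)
Your proposal is correct and follows essentially the same route as the paper: combine the finite model property of Theorem~\ref{thm:fo2-modelsize} with the alternating-polynomial-time (hence deterministic exponential-time) model checker of Corollary~\ref{cor:mcfo2-upper1}, and brute-force over all interpretations $(\calA,T)$ of size $\exp_{\bigO{\size{\varphi}}}(1)$. You additionally spell out the validity case via $\varphi \mapsto \negg\varphi$ and closure of $\TOWERPOLY$ under complement, and the tower-height bookkeeping, both of which the paper leaves implicit; these details are fine as written.
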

\begin{proof}
By Corollary~\ref{cor:mcfo2-upper1}, model checking for $\FONN$ is possible in alternating polynomial time, and hence in deterministic exponential time.
In order to obtain a $\TOWERPOLY$-algorithm for $\SAT(\tau\text{-}\FONN)$, given a formula $\varphi \in \tau\text{-}\FONN$, we can now iterate over all interpretations $(\calA,T)$ of size $\exp_{\bigO{\size{\varphi}}}(1)$ and perform model checking on $(\calA,T,\varphi)$.
\end{proof}
 \section{From MTL to FO$^2$($\negg$): Lower bounds}

\label{sec:lower}

For the lower bounds, we reduce from the model checking and the satisfiability problem of $\MTL$ and its fragments.
As a part of the reduction, in this section we present a semantics-preserving translation from $\MTL$ to $\FONN$.

The \emph{standard translation} embeds modal logic $\ML$ into $\FO^2$ with the relational vocabulary $\tau_M = (R,P_1,P_2,\ldots)$, where $\arity{R} = 2$ and $\arity{P_i} = 1$.
The standard translation of an $\ML$-formula is denoted by $\stx(\varphi)$ resp.\ $\sty(\varphi)$ and defined recursively:
\begin{align*}
\stx(p_i)&\dfn P_ix \;\text{ for }p_i \in \PS & 
\stx(\neg \alpha) &\dfn \neg \stx(\alpha)\\
\stx(\Box \alpha) & \dfn \forall y \, Rxy \imp \sty(\alpha) &
\stx(\alpha \land \beta)& \dfn \stx(\alpha) \land \stx(\beta)\\
\stx(\Diamond \alpha) & \dfn \exists y \, Rxy \land \sty(\alpha) & 
\stx(\alpha \lor \beta)& \dfn \stx(\alpha) \lor \stx(\beta)\text{,}
\end{align*}
with $\sty(\alpha)$ defined symmetrically via $\stx(\alpha)$.
The corresponding \emph{first-order interpretation} of a Kripke structure $\calK = (W,\calR,V)$ is the $\tau_M$-structure $\calA(\calK)$ such that $\dom \calA(\calK) = W$, $R^{\calA(\calK)} = \calR$ and $P_i^{\calA(\calK)} = V(p_i)$.
For a world $w$, define $w^x \colon \{ x \} \to W$ by $w^x(x) = w$.

The next theorem is standard (see, \eg, \citet{blackburn_modal_2001}):

\begin{theorem}\label{thm:st-trans-ml}
Let $(\calK,w)$ be a pointed Kripke structure and $\alpha \in \ML$.
Then
$(\calK,w) \vDash \alpha$ if and only if $(\calA(\calK),w^x) \vDash \stx(\alpha)$.
\end{theorem}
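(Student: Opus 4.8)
The plan is to prove the statement by induction on the structure of $\alpha \in \ML$. Because the two modal operators force a swap between the two variables, I would not prove the claim for $\stx$ in isolation, but instead strengthen it to a simultaneous statement covering both translations: for every $\alpha \in \ML$ and every world $w$, I claim that $(\calK,w) \vDash \alpha$ holds if and only if $(\calA(\calK),s) \vDash \stx(\alpha)$ for every assignment $s$ with $s(x) = w$, and symmetrically if and only if $(\calA(\calK),s) \vDash \sty(\alpha)$ for every assignment $s$ with $s(y) = w$. Taking $s = w^x$ then yields the theorem as a special case.

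Before entering the main induction I would record the locality fact that $\Fr(\stx(\alpha)) \subseteq \{x\}$ and $\Fr(\sty(\alpha)) \subseteq \{y\}$, established by a quick sub-induction on the translation: the only atom introducing a free occurrence of the companion variable is $Rxy$, and it always appears directly beneath a quantifier that binds that companion variable. This fact justifies the universal quantification over assignments in the strengthened claim, since it guarantees that the truth of $\stx(\alpha)$ depends only on $s(x)$, and dually for $\sty(\alpha)$ on $s(y)$; hence re-binding the other variable at a modal step is harmless.

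The base case $\alpha = p_i$ is immediate from $P_i^{\calA(\calK)} = V(p_i)$, and the Boolean cases $\neg,\land,\lor$ follow at once, as the translation commutes with the connectives and both semantics interpret them identically. The heart of the argument is the modal step, say $\alpha = \Diamond\beta$, where $\stx(\Diamond\beta) = \exists y\,(Rxy \land \sty(\beta))$. Here $(\calA(\calK),w^x) \vDash \stx(\Diamond\beta)$ holds iff there is $v \in W$ with $(\calA(\calK),(w^x)^y_v) \vDash Rxy \land \sty(\beta)$; since $R^{\calA(\calK)} = \calR$, this means exactly that $\calR wv$ and, by the symmetric induction hypothesis applied to $\beta$ (whose distinguished variable is now $y$, mapped to $v$), that $(\calK,v) \vDash \beta$. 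This is precisely $(\calK,w) \vDash \Diamond\beta$. The case $\alpha = \Box\beta$ is dual, replacing $\exists y$ by $\forall y$ and the conjunction by the implication $Rxy \imp \sty(\beta)$.

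The main obstacle is the bookkeeping for the reuse of the two variables: a modal formula of depth $k$ is translated using only $x$ and $y$ alternately, so at each modal step the roles of the two variables are exchanged and the previously active variable is re-bound. The device that makes this rigorous is exactly the pairing of the two translations $\stx$ and $\sty$ in a single induction hypothesis, combined with the locality fact of the second paragraph, which ensures that re-binding the companion variable never corrupts the truth value carried up from the subformula.
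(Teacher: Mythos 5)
Your proof is correct, and it coincides with the standard argument: the paper offers no proof of its own for this theorem, citing it as classical (Blackburn et al.), and your simultaneous induction on the paired translations $\stx$ and $\sty$, justified by the locality observation $\Fr(\stx(\alpha)) \subseteq \{x\}$ so that re-binding the companion variable at each modal step is harmless, is exactly the textbook proof being referenced.
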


In this section, we lift this result to team semantics and translate $\MTL$ to $\FONN$.
On the model side, the first-order interpretation of a team $T$ in a Kripke structure becomes $T^x \dfn \Set{ w^x | w \in T}$.

The standard translation for $\MTL$ now extends that of $\ML$ by the final $\negg$-case and also implements the $\Box$-case with  $\hook$ instead of $\imp$:
\begin{align*}
\stx(p_i)&\dfn P_ix \;\text{ for }p_i \in \PS & 
\stx(\neg \varphi) &\dfn \neg \stx(\varphi)\\
\stx(\Box \varphi) & \dfn \forall y \, Rxy \hook \sty(\varphi) &
\stx(\varphi \land \psi)& \dfn \stx(\varphi) \land \stx(\psi)\\
\stx(\Diamond \varphi) & \dfn \exists y \, Rxy \land \sty(\varphi) & 
\stx(\varphi \lor \psi)& \dfn \stx(\varphi) \lor \stx(\psi)\\
\stx(\negg \varphi) & \dfn \negg \stx(\varphi)\text{,} & &
\end{align*}
with $\sty(\varphi)$ again defined symmetrically.
It is easy to see that the "classical" translation of $\Box\varphi$ to $\forall y \, Rxy \to \sty(\varphi) = \forall y \, (\neg Rxy \lor \sty(\varphi))$ is unsound under team semantics, and using $\forall y\, Rxy \hook \sty(\varphi) = \forall y\, (\neg Rxy \lor Rxy \land \sty(\varphi))$ is indeed required.

\begin{theorem}\label{thm:translation}
For all Kripke structures with team $(\calK,T)$, all $\varphi \in \MTL$ and any first-order variable $x$ it holds $(\calK,T) \vDash \varphi$ if and only if $(\calA(\calK),T^x)\vDash \stx(\varphi)$.
\end{theorem}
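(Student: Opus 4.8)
The plan is to prove the equivalence by induction on the structure of $\varphi \in \MTL$, carried out simultaneously for the two base variables $x$ and $y$. This simultaneity is essential, because the clauses for $\Box\psi$ and $\Diamond\psi$ translate the subformula $\psi$ using the \emph{opposite} variable (via $\sty$ resp.\ $\stx$), so the induction hypothesis must already be available for both. For the base case $\varphi = \alpha \in \ML$, I would observe that $\stx(\alpha)$ contains no Boolean negation $\negg$ and is therefore an ordinary $\FO$-formula; consequently both $(\calK,T) \vDash \alpha$ and $(\calA(\calK),T^x) \vDash \stx(\alpha)$ reduce to pointwise (flat) evaluation over the assignments $w^x$, $w \in T$. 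On a single assignment the connective $\hook$ agrees with $\imp$, so $\stx(\alpha)$ is Tarski-equivalent to the classical standard translation of $\alpha$; hence Theorem~\ref{thm:st-trans-ml} applied worldwise gives $(\calK,w)\vDash\alpha \Leftrightarrow (\calA(\calK),w^x)\vDash\stx(\alpha)$, and flatness lifts this to the whole team.

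The Boolean inductive cases are routine. For $\varphi = \negg\psi$ and $\varphi = \psi\land\theta$ the claim is immediate from the induction hypothesis and the literally matching semantic clauses on both sides. For $\varphi = \psi\lor\theta$ the point is that $w\mapsto w^x$ is a bijection between $T$ and $T^x$, so the splits $T = S\cup U$ of the world team correspond exactly to the splits $T^x = S^x\cup U^x$ of the assignment team; this is an instance of the lattice isomorphism of Proposition~\ref{prop:iso}. Applying the induction hypothesis to each component then yields $(\calK,T)\vDash\psi\lor\theta$ iff $(\calA(\calK),T^x)\vDash\stx(\psi)\lor\stx(\theta)$.

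The modal cases are the heart of the argument and rest on two bridges: Proposition~\ref{prop:free-vars} (locality), which lets me pass between a team over $\{x,y\}$ and its restriction to the single free variable of $\sty(\psi)$, and the induction hypothesis stated for $\sty$. For $\varphi = \Diamond\psi$ I would match successor teams with supplementing functions: given a successor team $S$ of $T$ with $(\calK,S)\vDash\psi$, the function $f(w^x)\dfn S\cap Rw$ is nonempty for every $w\in T$ (because $T\subseteq R^{-1}S$), forces $Rxy$ on all of $(T^x)^y_f$, and has $y$-restriction $S^y$; conversely, any witnessing $f$ induces the successor team $\bigcup_{w\in T} f(w^x)$, where $S\subseteq RT$ is exactly what the guard $Rxy$ enforces and nonemptiness of $f$ gives $T\subseteq R^{-1}S$. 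Via locality and the $\sty$-induction hypothesis this identifies $(\calK,T)\vDash\Diamond\psi$ with $(\calA(\calK),T^x)\vDash\exists y\,(Rxy\land\sty(\psi))$. For $\varphi = \Box\psi$ I would invoke the reduction $(\calA,U)\vDash\alpha\hook\vartheta \Leftrightarrow (\calA,U_\alpha)\vDash\vartheta$: the universal supplement $(T^x)^y_W$, where $W = \dom\calA(\calK)$, when restricted by $Rxy$ has $y$-projection exactly the image team $RT$, so locality and the $\sty$-induction hypothesis turn $(\calA(\calK),T^x)\vDash\forall y\,(Rxy\hook\sty(\psi))$ into $(\calK,RT)\vDash\psi$, which is precisely the semantics of $\Box\psi$.

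The main obstacle is the $\Box$-case, and more specifically the reason $\hook$ cannot be weakened to $\imp$. Under the splitting semantics of $\lor$, the disjunct $\neg Rxy$ of the unguarded formula $Rxy\imp\sty(\psi) = \neg Rxy\lor\sty(\psi)$ can only absorb non-successor pairs, but the \emph{existential} choice of split also allows non-successor pairs to be routed instead into the unguarded $\sty(\psi)$-disjunct; hence $\sty(\psi)$ would be evaluated on some $y$-team lying anywhere between $RT$ and $W$, rather than on $RT$ itself, and the equivalence with $(\calK,RT)\vDash\psi$ would fail. The guard $Rxy\land(\cdot)$ inside $\hook$ is exactly what pins the relevant subteam to the successor pairs, via the $U_\alpha$ characterization. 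The second delicate point is the $\Diamond$-case bookkeeping, where one must verify soundness of the passage between supplementing functions and successor teams in both directions; this becomes routine once locality is used to discard the now-irrelevant $x$-coordinate before applying the induction hypothesis for $\sty$.
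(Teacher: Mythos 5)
Your proof is correct and takes essentially the same route as the paper's: induction on $\varphi$ using flatness together with Theorem~\ref{thm:st-trans-ml} for the $\ML$ base case, the bijection $w \mapsto w^x$ between splits of $T$ and of $T^x$ for $\lor$, and the two-way correspondence between successor teams and supplementing functions (with locality, Proposition~\ref{prop:free-vars}, to drop the $x$-coordinate) for $\Diamond$. The only cosmetic differences are that in the $\Box$-case you invoke the $T_\alpha$-characterization of $\hook$ where the paper instead re-derives by hand that the split of $(T^x)^y_W$ into an $Rxy$-part and a $\neg Rxy$-part is unique, and that you are slightly more careful than the paper in the base case by noting explicitly that $\hook$ and $\imp$ are Tarski-equivalent on single assignments, so the $\MTL$-translation of a classical formula is pointwise equivalent to the classical standard translation.
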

\begin{proof}
Proof by induction on $\varphi$.
Here, we omit $\calK$ and $\calA(\calK)$ and write, \eg, $T \vDash \varphi$.

\begin{itemize}
\item $\varphi \in \ML$: We have $(\calK,T)\vDash \varphi$ iff $\forall w\in T\colon (\calK,w)\vDash \varphi$ by definition of the semantics of $\MTL$, which by Theorem~\ref{thm:st-trans-ml} is equivalent to $\forall w^x \in T^x\colon (\calA(\calK),w^x)\vDash \stx(\varphi)$.
However, as $\stx(\varphi) \in \FO$, the latter is equivalent to $(\calA(\calK),T^x)\vDash \stx(\varphi)$ by the semantics of $\FON$.

\item $\varphi = \psi \land \vartheta$ and $\varphi = \negg \psi$ are clear.
\item $\varphi = \psi \lor \theta$:
Suppose $T \vDash  \psi \lor \theta$.
Then $T = S \cup U$ such that $S \vDash \psi$ and $U \vDash \theta$.
By induction hypothesis, $S^x \vDash \stx(\psi)$ and $U^x \vDash \stx(\theta)$.
As $S \cup U = T$, clearly $S^x \cup U^x = T^x$.
As a consequence, $T^x \vDash \stx(\psi) \lor \stx(\theta) = \stx(\psi \lor \theta)$.

\smallskip

For the other direction, suppose $T^x \vDash \stx(\psi\lor\theta) = \stx(\psi) \lor \stx(\theta)$ by the means of some subteams $S' \cup U' = T^x$ such that $S' \vDash \stx(\psi)$ and $U' \vDash \stx(\theta)$.
Then $S' = S^x$ and $U' = U^x$ for some suitably chosen $S,U \subseteq T$.
By induction hypothesis, $S\vDash \psi$ and $U\vDash \theta$.
In order to prove $T\vDash \psi\lor\theta$, it remains to show $T \subseteq S \cup U$.
For this purpose, let $w \in T$.
As then $w^x \in T^x$, as least one of $w^x \in S'$ or $w^x \in U'$ holds.
But then $w \in S$ or $w \in U$.
\item $\varphi = \Box\psi$:
We define subteams $S$ and $U$ of the duplicating team $(T^x)^y_W$ as follows: $S$ contains all "outgoing edges": $S \dfn \{ s \in (T^x)^y_W \mid s(y) \in Rs(x) \}$.
On the other hand, $U$ contains all "non-edges":
$U \dfn \{ s \in (T^x)^y_W \mid s(y) \notin Rs(x) \}$.
Then clearly $(T^x)^y_W = S \cup U$, $S \vDash Rxy$ and $U \vDash \neg Rxy$.
Moreover, the above division of $(T^x)^y_W$ into $S$ and $U$ is the \emph{only} possible splitting of $(T^x)^y_W$ such that $S \vDash Rxy$ and $U \vDash \neg Rxy$.

By induction hypothesis, clearly $T\vDash \Box\psi \Leftrightarrow (RT)^y \vDash \sty(\psi)$.
Moreover, by the above argument, $T^x \vDash \stx(\Box \psi) \Leftrightarrow S \vDash \sty(\psi)$.
Consequently, it suffices to show that $(RT)^y$ and $S$ agree on $\sty(\psi)$.
This follows from Proposition~\ref{prop:free-vars}, since
\begin{align*}
(RT)^y &= \{ s\colon \{y\} \to W \mid \exists w \in T : s(y) \in Rw  \}\\
&= \{ \rest{s}{y} \mid s \in (T^x)^y_W \text{ and }s(y) \in Rs(x) \} = \rest{S}{y}\text{.}
\end{align*}
\item $\varphi = \Diamond\psi$:
Suppose $T \vDash \Diamond \psi$, \ie, $S \vDash \psi$ for some successor team $S$ of $T$.
By induction hypothesis, $S^y \vDash \sty(\psi)$.
In order to prove $T^x \vDash \exists y \, Rxy \land \sty(\psi)$, we define a supplementing function $f : T^x \to \pow{W}\setminus \{\emptyset\}$ such that $(T^x)^y_f \vDash Rxy \land \sty(\psi)$.

Let $f(w^x) \dfn Rw \cap S$.
Then $f(w^x)$ is non-empty for each $w$, as $S$ is a successor team.
Moreover, $(T^x)^y_f \vDash Rxy$.
It remains to show that $(T^x)^y_f\vDash \sty(\psi)$ follows from $S^y \vDash \sty(\psi)$.
Here, we combine Proposition~\ref{prop:rel} and \ref{prop:free-vars}, since
\[
y\langle S^y \rangle = S = \bigcup_{w \in T} Rw \cap S = \bigcup_{w^x \in T^x} f(w^x) = \{ s(y) \mid s \in (T^x)^y_f\} = y\langle (T^x)^y_f\rangle\text{.}
\]

For the other direction, suppose $T^x \vDash \exists y\, Rxy \land \sty(\psi)$ by the means of a supplementing function $f : T^x \to \pow{W}\setminus \{\emptyset\}$ such that $(T^x)^y_f \vDash Rxy \land \sty(\psi)$.

We define $S \dfn \bigcup_{w \in T} f(w^x)$, and first prove that it is a successor team of $T$, \ie, that every $v\in S$ has a predecessor in $T$ and that every $w \in T$ has a successor in $S$.

Let $v \in S$.
Then there exists $w \in T$ such that $v \in f(w^x)$.
As a consequence, the assignment $s$ given by $s(x) = w$ and $s(y) = v$ is in $(T^x)^y_f$, and hence satisfies $Rxy$.
In other words, $v$ has a predecessor in $T$.
Conversely, if $w \in T$, then $f(w^x)$ is non-empty, \ie, contains an element $v$.
As before, $v$ is a successor of $w$.
Since $v \in f(w^x)$, $v \in S$, so $w$ has a successor in $S$.

By a similar argument as above, $y\langle (T^x)^y_f\rangle = S = y\langle S^y\rangle$, hence $S^y \vDash \sty(\psi)$, and consequently $S \vDash \psi$ by induction hypothesis.
\qedhere
\end{itemize}
\end{proof}
 
Next, we prove that the standard translation carries several complexity lower bounds into the first-order setting.

\begin{lemma}\label{lem:fo-mc-hardness-nk}
$\MC(\tau\text{-}\FONK{1}{0})$ is $\PSPACE$-hard if $\tau$ contains infinitely many predicates.
\end{lemma}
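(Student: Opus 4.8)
The plan is to give a logspace reduction from $\MC(\PTL)$, which is $\PSPACE$-complete under logspace reductions by Theorem~\ref{thm:mc-ptl}, to $\MC(\tau\text{-}\FONK{1}{0})$, using the standard translation of this section restricted to modal depth $0$. Recall that $\PTL$ is identified with $\MTL_0$, so an instance is a Kripke structure with team $(\calK,T)$ together with a modality-free formula $\varphi$. I would map such an instance to $(\calA(\calK),T^x,\stx(\varphi))$, where $\calA(\calK)$ is the first-order interpretation and $T^x = \Set{w^x | w \in T}$.

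The key observation is that the standard translation of a modal-depth-$0$ formula already lands in $\FONK{1}{0}$. New variables and quantifiers are introduced only in the $\Box$- and $\Diamond$-cases of $\stx$; since $\varphi$ is modality-free these cases never fire, $\sty$ is never invoked, and the result mentions only the single variable $x$ and no quantifier. Hence $\wi(\stx(\varphi)) \leq 1$ and $\qr(\stx(\varphi)) = 0$, so $\stx(\varphi) \in \tau\text{-}\FONK{1}{0}$. Correctness is then immediate from Theorem~\ref{thm:translation}, which gives $(\calK,T) \vDash \varphi$ if and only if $(\calA(\calK),T^x) \vDash \stx(\varphi)$.

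It remains to check that $(\calK,T,\varphi) \mapsto (\calA(\calK),T^x,\stx(\varphi))$ is computable in logarithmic space. The interpretation $\calA(\calK)$ merely reuses the worlds as domain elements and the valuation $V(p_i)$ as the predicate $P_i^{\calA(\calK)}$, while $T^x$ renames each world $w$ to $w^x$; both are trivial to produce. The translation $\stx$ is a structural recursion over the formula tree that replaces each $p_i$ by $P_i x$ and commutes with $\neg$, $\negg$, $\land$ and $\lor$, so it too is logspace-computable. Composing with the logspace $\PSPACE$-hardness of $\MC(\PTL)$ and using transitivity of $\leqlogm$ yields the claim.

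The hypothesis that $\tau$ contains infinitely many predicates is exactly what makes the reduction total: a $\PTL$-instance may mention arbitrarily many propositional symbols $p_i$, and the translation must send each to a distinct predicate of $\tau$, applied to the single variable $x$ (repeating $x$ as arguments if the chosen predicate has arity greater than one, setting $P_i^{\calA(\calK)}$ accordingly on the diagonal). I do not anticipate a genuine obstacle; the only points needing care are the bookkeeping of $\wi$ and $\qr$ to confirm that the target really is the $\FONK{1}{0}$-fragment, and fixing once and for all an injective assignment of the symbols $p_i$ to predicates of $\tau$ so that the reduction is uniform and logspace.
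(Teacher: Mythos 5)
Your proposal is correct and coincides with the paper's own proof: the same reduction $(\calK,T,\varphi)\mapsto(\calA(\calK),T^x,\stx(\varphi))$ from $\MC(\PTL)$, the same observation that $\stx(\varphi)$ is quantifier-free in the single variable $x$ for modality-free $\varphi$, and even the same remark about simulating unary predicates on the diagonal of higher-arity ones. Your additional bookkeeping on logspace computability only makes explicit what the paper leaves implicit.
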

\begin{proof}
We reduce from $\MC(\PTL)$ (see Theorem~\ref{thm:mc-ptl}) via $(\calK,T,\varphi) \mapsto (\calA(\calK),T^x,\stx(\varphi))$.
(\Wloss $\tau$ contains unary predicates $P_0, P_1, \ldots$; otherwise they are easily simulated by predicates of higher arity).

It is easy to see that $\stx(\varphi)$ is quantifier-free and contains only the variable $x$.
Moreover, by Theorem~\ref{thm:translation}, $(\calK,T) \vDash \varphi\Leftrightarrow (\calA(\calK),T^x) \vDash \stx(\varphi)$.
\end{proof}

\begin{lemma}\label{lem:fo-mc-hardness}
$\MC(\tau\text{-}\FONK{\omega}{\omega})$ is $\AEXPPOLY$-hard for all vocabularies $\tau$, even on sentences and for a fixed $\tau$-structure $\calA$ with domain $\{0,1\}$ and a fixed team $\{\emptyset\}$.
\end{lemma}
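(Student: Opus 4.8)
The plan is to reduce from $\SAT(\PTL)$, which is $\AEXPPOLY$-complete by Theorem~\ref{thm:sat-ptl}, via a $\leqlogm$-reduction. The guiding idea is that first-order team quantification lets us internalise the existential quantification over teams that is implicit in satisfiability. Starting from the one-assignment team $\{\emptyset\}$ over the empty domain and a two-element universe $\{0,1\}$, a block $\exists x_1 \cdots \exists x_n$ produces exactly the nonempty teams over $\{x_1,\ldots,x_n\}$; and such a team, read over $\{0,1\}$, is nothing but a nonempty set of propositional assignments. The crucial combinatorial fact I would isolate first is this \emph{reachability} claim: $(\calA,\{\emptyset\}) \vDash \exists x_1\cdots\exists x_n\,\chi$ if and only if $(\calA,T)\vDash\chi$ for some nonempty team $T$ with domain $\{x_1,\ldots,x_n\}$. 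It follows by induction on $n$, using that every supplementing function takes only nonempty values (so only nonempty teams arise) together with the projection $\rest{T}{\{x_1,\ldots,x_{n-1}\}}$, which witnesses that conversely every nonempty $T$ is reachable.

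Concretely, fix $\calA$ with $\dom\calA = \{0,1\}$. If $\tau$ contains a predicate $P$, say of arity $k$, set $P^{\calA} \dfn \{(1,\ldots,1)\}$, interpret all remaining symbols arbitrarily, and let $\mathsf{true}(x) \dfn Px\cdots x$; over $\calA$ this holds of $s$ iff $s(x)=1$. For $\varphi \in \PTL$ with $\Prop(\varphi)=\{p_1,\ldots,p_n\}$, I would define $\mathsf{tr}(\varphi)$ by $\mathsf{tr}(p_i)\dfn\mathsf{true}(x_i)$ and letting $\mathsf{tr}$ commute with $\neg,\land,\lor,\negg$; classical subformulas translate to (flat) $\FO$-formulas, on which the splitting $\lor$ and classical $\lor$ coincide. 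Then I set $\psi \dfn \exists x_1\cdots\exists x_n\,\mathsf{tr}(\varphi)$, a $\tau\text{-}\FONK{\omega}{\omega}$-sentence, and output $(\calA,\{\emptyset\},\psi)$; everything here is computable in logarithmic space, and $\psi$ has unbounded width and quantifier rank, as required.

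Correctness then amounts to the chain: $(\calA,\{\emptyset\})\vDash\psi$ iff there is a nonempty $T$ over $\{x_1,\ldots,x_n\}$ with $(\calA,T)\vDash\mathsf{tr}(\varphi)$ iff the propositional team $\{\bar s : s\in T\}$, where $\bar s(p_i)=1$ iff $s(x_i)=1$, satisfies $\varphi$, iff $\varphi$ is satisfiable by a nonempty team. The first equivalence is the reachability claim applied to $\chi=\mathsf{tr}(\varphi)$. The second is a standard-translation-style induction on $\varphi$ (base case $\mathsf{true}(x_i)$; the connectives $\neg,\land,\lor,\negg$ commute with the pointwise map $s\mapsto\bar s$, exactly as in the proof of Theorem~\ref{thm:translation}, since team union and complementation are preserved).

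Two points require care, and I expect them rather than any single step to be the real obstacle. First, the construction reaches only nonempty teams, whereas $\SAT(\PTL)$ also counts the empty team; since $(\calA,\emptyset)\vDash\varphi$ is decidable by a linear two-valued recursion on $\varphi$, I would test this first and output a fixed positive instance when it holds, reducing as above otherwise. Second, when $\tau$ contains no predicate then $=\in\tau$, and $\mathsf{true}$ cannot name the element $1$; here I would instead use a fresh variable $z$, put $\mathsf{true}_z(x)\dfn(x=z)$, and set $\psi\dfn\exists z\,\exists x_1\cdots\exists x_n\,\mathsf{tr}(\varphi)$. The forward direction realises $z$ as the constant $1$, while the backward direction still goes through because the map $\bar s(p_i)=1 \text{ iff } s(x_i)=s(z)$ is well defined per assignment even when $z$ varies across $T$. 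Pinning down the ``true'' element uniformly over all $\tau$, and the empty-team bookkeeping, are the only genuinely fiddly pieces; the reachability and translation facts are routine inductions.
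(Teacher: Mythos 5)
Your proposal is correct, and it follows the paper's overall strategy---a $\leqlogm$-reduction from $\SAT(\PTL)$ to model checking over the fixed two-element structure, simulating propositional teams by first-order teams over $\{0,1\}$, with the same case split between a predicate-based encoding of "true" and an equality-based one---but your gadget for quantifying over teams differs from the paper's in two instructive ways. First, the paper generates the candidate teams with $\exists z\,\forall x_1\cdots\forall x_n\,(\top\lor\varphi^*)$: the universal block builds the full team $V_0\cup V_1$ and the lax split $\top\lor\varphi^*$ then carves out an \emph{arbitrary} subteam $U$, which may be empty, so the empty propositional team is covered for free; you instead use a pure existential block, whose reachable teams are exactly the \emph{nonempty} teams over $\{x_1,\ldots,x_n\}$, and you correctly notice that this leaves a genuine gap ($\PTL$-formulas such as $p\land\neg p$ are satisfiable \emph{only} by the empty team) and patch it with a logspace pre-evaluation of $\emptyset\vDash\varphi$, which is a legitimate two-valued recursion ($\negg$ flips, and both $\land$ and the split $\lor$ act as conjunction on the empty team). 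Second, in the equality case the paper forces the normalization $s(z)=1$ by flipping all values of assignments in $U\cap V_0$ and arguing (via Lemma~\ref{lem:expansion}) that this preserves the quantifier-free $\varphi^*$; you avoid any normalization by reading $p_i$ as agreement $s(x_i)=s(z)$ \emph{per assignment}, which is sound precisely because equality tests agreement rather than an absolute value, and your translation induction survives the resulting non-injectivity of $s\mapsto\bar{s}$ (preimages of a split $\bar{T}=P\cup Q$ pull back to a split of $T$ with images exactly $P$ and $Q$). Net assessment: your route trades the paper's uniform $\top\lor$ trick for a cleaner equality case plus explicit empty-team bookkeeping; both yield sentences over the fixed $\tau$-structure $\calA$ with domain $\{0,1\}$ and team $\{\emptyset\}$, and both are logspace-computable, so the proof is complete as proposed.
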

\begin{proof}
We reduce from $\SAT(\PTL)$ (see Theorem~\ref{thm:sat-ptl}).
Given $\varphi \in \PTL$, suppose $\Prop(\varphi) = \{p_1,\ldots,p_n\}$.
The idea is that a team of worlds, or rather its induced set of Boolean assignments to $p_1,\ldots,p_n$, is simulated by a corresponding team of first-order assignments $s \colon X \to B$, where $X = \{z,x_1,\ldots,x_n\}$ and $B \dfn \{0,1\}$.
Here, the auxiliary variable $z$ plays the role of the constant $1$.
Let $V_b = (\{\emptyset\}^{z}_{\{b\}})^{x_1}_B\cdots^{x_n}_B$ for $b \in B$, \ie, the team $\{\emptyset\}$ after $n+1$ supplementations with $\{b\}$ resp.\ $B$.

\smallskip

By definition of vocabulary, either ${=} \in \tau$ or $\tau$ contains a predicate.
First, we consider the case ${=} \in \tau$.
We map $\varphi$ to $(\calA,\{\emptyset\},\psi)$, where $\dom \calA = B$,
$\psi \dfn \exists z\, \forall x_1 \cdots \forall x_n\, \top \lor \varphi^*$, and $\varphi^*$ is obtained from $\varphi$ by replacing each $p_i$ by $x_i = z$.
We prove that the reduction is correct.

To begin with, observe that
\begin{align*}
(\exists U \subseteq V_1 : (\calA,U) \vDash \varphi^*) \Leftrightarrow (\calA,V_1) \vDash \top \lor \varphi^* \Leftrightarrow (\calA,\{\emptyset\})\vDash \psi\text{.}\tag{$\star$}
\end{align*}
Here, "$\Rightarrow$" is clear.
We prove "$\Leftarrow$":
If $(\calA,\{\emptyset\})\vDash \psi$, then $(\calA,U) \vDash \varphi^*$ for some $U \subseteq V_0 \cup V_1$, but for each $s \in U \cap V_0$ we can simply flip all values of $s$.
It is easily proven with Lemma~\ref{lem:expansion} that this does not change the truth of the quantifier-free formula $\varphi^*$.

\smallskip

Next, assume that $\varphi$ is satisfiable, \ie, $(\calK,T) \vDash \varphi$ for some Kripke structure with team $(\calK,T)$.
For each world $w\in T$, define $s_w \colon X \to B$ by $s_w(z) = 1$ and $s_w(x_i) = 1 \Leftrightarrow (\calK,w) \vDash p_i$.
Then $(\calK,w) \vDash p_i \Leftrightarrow (\calA,s_w) \vDash x_i = z$, and by induction on the syntax of $\varphi$ we obtain $(\calA,U) \vDash \varphi^*$, where $U \dfn \{ s_w  \mid w \in T \}$.
Since $U \subseteq V_1$, $(\star)$ yields $(\calA,\{\emptyset\}) \vDash \psi$.
The other direction is similar.

\smallskip

Next, consider the case where ${=} \notin \tau$; then $\tau$ contains a predicate $P$.
We define $\calA$ as above, but let $P^\calA \dfn \{ (1,\ldots,1) \}$.
Furthermore, $\psi \dfn \forall x_1 \cdots \forall x_n\, \top \lor \varphi^*$, and $\varphi^*$ is now $\varphi$ with $p_i$ replaced by $P(x_i,\ldots,x_i)$.
The remaining proof is similar to the previous one.
\end{proof}

Clearly, the standard translation of satisfiable formulas is itself satisfiable.
A converse result holds as well.
Loosely speaking, from a first-order structure (with team) satisfying $\stx(\varphi)$ we can reconstruct a Kripke model (with team) for $\varphi$.

\begin{lemma}\label{lem:reverse-trans}
If $\varphi \in \MTL$, then $\varphi$ is satisfiable if and only if $\stx(\varphi)$ is satisfiable.
\end{lemma}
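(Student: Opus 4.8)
The forward direction is immediate and is the "clearly" remark preceding the lemma: if $(\calK,T) \vDash \varphi$ for some Kripke structure with team, then $(\calA(\calK),T^x) \vDash \stx(\varphi)$ by Theorem~\ref{thm:translation}, so $\stx(\varphi)$ is satisfiable. All the content is in the converse, where the plan is to reconstruct a Kripke model from an arbitrary first-order model of $\stx(\varphi)$, exploiting that the interpretation map $\calK \mapsto \calA(\calK)$ hits every relevant $\tau_M$-structure.

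Concretely, suppose $(\calA,T') \vDash \stx(\varphi)$ for some $\tau_M$-structure $\calA$ and team $T'$ whose domain contains $\Fr(\stx(\varphi)) = \{x\}$. Since $\tau_M$ consists of one binary symbol $R$ together with the unary symbols $P_1,P_2,\ldots$, and $\stx(\varphi)$ mentions only $R$ and the finitely many $P_i$ with $p_i \in \Prop(\varphi)$, I would read $\calA$ off as the first-order interpretation $\calA(\calK)$ of the Kripke structure $\calK \dfn (\dom\calA, R^\calA, V)$ over $X \dfn \Prop(\varphi)$, where $V(p_i) \dfn P_i^\calA$. By construction $\calA$ and $\calA(\calK)$ agree on every symbol occurring in $\stx(\varphi)$, so $(\calA,T^x)$ and $(\calA(\calK),T^x)$ satisfy the same instances of $\stx(\varphi)$.

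For the team side I would use locality. By Proposition~\ref{prop:free-vars} applied with $X = \{x\}$, we have $(\calA,\rest{T'}{\{x\}}) \vDash \stx(\varphi)$. Setting $T \dfn \{ s(x) \mid s \in T' \}$, every assignment in $\rest{T'}{\{x\}}$ has domain $\{x\}$ and is therefore of the form $w^x$ for exactly one $w \in T$; hence $\rest{T'}{\{x\}} = T^x$. Now Theorem~\ref{thm:translation} applies directly: $(\calA(\calK),T^x) \vDash \stx(\varphi)$ yields $(\calK,T) \vDash \varphi$, witnessing that $\varphi$ is satisfiable.

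There is no real obstacle here; the two points deserving care are (i) the observation that the vocabulary $\tau_M$ was chosen precisely so that every model of $\stx(\varphi)$ is (up to the symbols it uses) the first-order interpretation of some Kripke structure, which is what makes the reconstruction possible at all, and (ii) the alignment of the given team's domain with the single free variable $x$ via locality, so that the hypothesis of Theorem~\ref{thm:translation} is met. Both steps are routine once stated, and together they reduce the lemma to the already-established translation theorem.
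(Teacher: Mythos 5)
Your proof is correct and follows essentially the same route as the paper: the forward direction by Theorem~\ref{thm:translation}, and the converse by reconstructing $\calK = (\dom\calA, R^\calA, V)$ with $V(p_i) \dfn P_i^\calA$, using Proposition~\ref{prop:free-vars} to normalize the team to domain $\{x\}$ so that it has the form $T^x$, and then applying Theorem~\ref{thm:translation} again. Your extra remark about $\calA$ and $\calA(\calK)$ agreeing on the symbols occurring in $\stx(\varphi)$ is a harmless elaboration of the paper's observation that $\calA(\calK) = \calB$.
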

\begin{proof}
As Theorem~\ref{thm:translation} proves "$\Rightarrow$", we consider "$\Leftarrow$".
Suppose $(\calB,S) \vDash \stx(\varphi)$.
Then $\calB$ interprets the binary predicate $R$ and unary predicates $P_1,P_2,\ldots$.
By Proposition~\ref{prop:free-vars}, \wloss $S$ has domain $\{x\}$, \ie, $S = (x \langle S \rangle)^x$.

Define now the Kripke structure $\calK = (\dom \calB, R^\calB, V)$ such that $V(p_i) \dfn P_i^\calB$.
Then clearly $\calA(\calK) = \calB$.
By Theorem~\ref{thm:translation}, $(\calK,x\langle S \rangle) \vDash \varphi$.
\end{proof}

\subsection*{Completeness results}

\begin{theorem}\label{thm:mc-completeness}
Let $\calD$ be a p-uniform set of dependencies and $\tau$ a vocabulary.
\begin{itemize}
\item $\MC(\tau\text{-}\FONKD{\omega}{\omega})$ is $\AEXPPOLY$-complete, with hardness also on sentences and for a fixed $\tau$-structure $\calA$ with domain $\{0,1\}$ and a fixed team $\{\emptyset\}$.
\item If $\tau$ contains infinitely many relations and at least one of $k \geq 0,n\geq 1$ is finite, then $\MC(\tau\text{-}\FONKD{n}{k})$ is $\PSPACE$-complete.
\end{itemize}
\end{theorem}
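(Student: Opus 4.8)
The plan is to assemble both the upper and the lower bounds from results already established in Sections~\ref{sec:upper} and~\ref{sec:lower}, so that the argument reduces to matching the machine resource bounds to the named complexity classes and checking a few fragment inclusions. The upper bounds come from the second-order translation, and the lower bounds from the standard translation applied to the hardness of propositional team logic.

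For the $\AEXPPOLY$-completeness of $\MC(\tau\text{-}\FONKD{\omega}{\omega})$, membership is immediate from Corollary~\ref{cor:fo-mc}: since $\calD$ is p-uniform, model checking is decidable in time $2^{n^{\bigO{1}}}$ with $\size{\varphi}^{\bigO{1}}$ alternations, which is exactly alternating exponential time with polynomially many alternations, \ie{} $\AEXPPOLY$. For hardness I would invoke Lemma~\ref{lem:fo-mc-hardness}, which already gives $\AEXPPOLY$-hardness of $\MC(\tau\text{-}\FONK{\omega}{\omega})$ on sentences and for the fixed structure with domain $\{0,1\}$ and team $\{\emptyset\}$. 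As $\FONK{\omega}{\omega}$ is the special case $\calD = \emptyset$ of $\FONKD{\omega}{\omega}$, we have $\FONK{\omega}{\omega} \subseteq \FONKD{\omega}{\omega}$ as a syntactic fragment, so the identity map is a trivial reduction transferring both the hardness and the refinements on the instance.

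For the $\PSPACE$-completeness in the bounded case, assume $\tau$ has infinitely many relations and at least one of $n, k$ is finite. Membership follows from Corollary~\ref{cor:mcfo2-upper1}: if $n$ is finite every formula has width at most $n$, and if $k$ is finite every formula has quantifier rank at most $k$, so in either case $\tau\text{-}\FONKD{n}{k}$ is contained in one of the fragments covered by that corollary, whence model checking runs in polynomial time with polynomially many alternations; since $\APTIME = \PSPACE$, the problem lies in $\PSPACE$. For hardness I would use Lemma~\ref{lem:fo-mc-hardness-nk}, giving $\PSPACE$-hardness of $\MC(\tau\text{-}\FONK{1}{0})$ whenever $\tau$ has infinitely many predicates. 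As $\FONK{1}{0}$ has width $1$ and quantifier rank $0$, we have $\FONK{1}{0} \subseteq \tau\text{-}\FONKD{n}{k}$ for every $n \geq 1$ and $k \geq 0$, so the lower bound transfers by the same trivial reduction.

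I expect no serious obstacle here, as the real work has been done in the earlier sections; the only points requiring care are the standard identifications of resource bounds with the complexity classes (notably $\APTIME = \PSPACE$ and the unfolding of $\AEXPPOLY$ as alternating exponential time with polynomially many alternations) and verifying the fragment inclusions above, which ensure that the hardness of the $\negg$-only logics carries over to an arbitrary p-uniform $\calD$.
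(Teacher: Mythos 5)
Your proposal is correct and follows essentially the same route as the paper, which likewise derives the upper bounds from Corollaries~\ref{cor:fo-mc} and~\ref{cor:mcfo2-upper1} (using $\APTIME = \PSPACE$) and the lower bounds from Lemmas~\ref{lem:fo-mc-hardness-nk} and~\ref{lem:fo-mc-hardness}. The additional details you spell out---the syntactic inclusions $\FONK{1}{0} \subseteq \tau\text{-}\FONKD{n}{k}$ and $\FONK{\omega}{\omega} \subseteq \FONKD{\omega}{\omega}$ transferring hardness for arbitrary p-uniform $\calD$, and the case split on which of $n,k$ is finite---are exactly the implicit checks the paper's terse proof relies on.
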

\begin{proof}
The upper bounds are due to Corollary~\ref{cor:fo-mc} and~\ref{cor:mcfo2-upper1}, since alternating polynomial time coincides with $\PSPACE$ \cite{alternation}.
The lower bounds are due to Lemma~\ref{lem:fo-mc-hardness-nk} and~\ref{lem:fo-mc-hardness}.
\end{proof}

In particular, the model checking problem for first-order team logic, independence logic, inclusion and exclusion logic with Boolean negation is $\AEXPPOLY$-complete.
Likewise, their two-variable fragments are $\PSPACE$-complete.
This implies a straightforward generalization of a result of \citet[Proposition 33]{Lohrey12}:

\begin{corollary}
$\MC(\tau\text{-}\SO)$ is $\AEXPPOLY$-complete for all vocabularies $\tau$, with hardness also on sentences and with a fixed $\tau$-interpretation $\calA$ with $\dom \calA = \{0,1\}$.
\end{corollary}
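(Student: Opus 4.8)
The plan is to read off both directions from results already in hand, so the corollary is genuinely a \emph{corollary}. For the upper bound, Proposition~\ref{prop:so-mc} states that $\MC(\SO)$ is decidable in time $2^{n^{\bigO{1}}}$ with $\size{\alpha}$ alternations on input $(\calA,\calJ,\alpha)$. That is exactly an alternating machine running in exponential time with polynomially many alternations, hence $\MC(\tau\text{-}\SO) \in \AEXPPOLY$ for every $\tau$; for the model-checking of sentences one simply takes the empty assignment $\calJ$, so the upper bound is immediate.

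For the lower bound I would transport the hardness of $\MC(\tau\text{-}\FONK{\omega}{\omega})$ from Lemma~\ref{lem:fo-mc-hardness} (equivalently, the first item of Theorem~\ref{thm:mc-completeness} specialized to $\calD = \emptyset$) through the second-order translation $\eta$ of Theorem~\ref{thm:tl-to-so}. Crucially, that hardness already holds for $\FO(\negg)$-\emph{sentences} evaluated on the fixed structure $\calA$ with $\dom\calA = \{0,1\}$ and the fixed team $\{\emptyset\}$. Given such a sentence $\varphi$, I would compute $\eta^{\vv{x}}_\varphi(R)$ with $\vv{x} \dfn \Fr(\varphi)$. Since $\varphi$ is a sentence, $\vv{x}$ is the empty tuple, the free relation $R$ is nullary, and the team $\{\emptyset\}$ corresponds under $\vval{x}{\cdot}$ to the nullary relation $\{()\}$, i.e.\ to the truth value \enquote{true}. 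Substituting this value for $R$ yields a genuine $\SO$-\emph{sentence} $\eta^*_\varphi$, and Theorem~\ref{thm:tl-to-so} gives $(\calA,\{\emptyset\}) \vDash \varphi$ iff $\calA \vDash \eta^*_\varphi$. As $\calA$ is fixed and the translation is a purely syntactic, structurally recursive transformation of polynomial size, the map $\varphi \mapsto (\calA,\eta^*_\varphi)$ is computable in logarithmic space, hence a $\leqlogm$-reduction witnessing $\AEXPPOLY$-hardness on sentences over the fixed binary-domain interpretation $\calA$.

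The only real bookkeeping will be the handling of the nullary $R$: one must check that instantiating it with \enquote{true} is compatible with the defining clauses of $\eta$ (in particular the first-order base case $\forall\vv{x}(R\vv{x}\imp\varphi)$ collapses to $\varphi$ once $\vv{x}$ is empty and $R$ holds) and that the result is a well-formed $\SO$-sentence over $\tau$. Everything else is routine: the dependency-atom clauses of $\eta$ never arise because $\calD=\emptyset$, and the closure of $\AEXPPOLY$ under $\leqlogm$-reductions (noted after the definitions) legitimizes the transfer. Combining the two bounds gives $\AEXPPOLY$-completeness with the stated restriction to sentences and the fixed interpretation of domain $\{0,1\}$, which is the claimed generalization of \citet[Proposition 33]{Lohrey12}.
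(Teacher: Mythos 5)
Your proposal is correct and takes essentially the same route as the paper: the upper bound via Proposition~\ref{prop:so-mc}, and hardness by reducing the sentence/fixed-structure/fixed-team case of $\MC(\tau\text{-}\FON)$ (Lemma~\ref{lem:fo-mc-hardness}) through the translation $\eta$ of Theorem~\ref{thm:tl-to-so}, noting that for a sentence the free relation variable $R$ is nullary and the team $\{\emptyset\}$ corresponds to the nullary relation "true". The only (cosmetic) difference is in handling the nullary $R$: where you substitute a tautology for $R$ and must check the bookkeeping, the paper instead maps $(\calA,\{\emptyset\},\varphi)$ to $(\calA,\emptyset,\exists R\,\eta^{\emptyset}_\varphi(R)\land R)$, letting the conjunct $R$ force the intended interpretation and sidestepping the substitution argument entirely.
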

\begin{proof}
By Proposition~\ref{prop:so-mc}, and reduction from $\MC(\tau\text{-}\FO(\negg))$.
Let $R$ be a $0$-ary relation variable.
In the spirit of Corollary~\ref{cor:fo-mc}, we map $(\calA,\{\emptyset\},\varphi)$ to $(\calA,\emptyset,\exists R \; \eta^{\emptyset}_\varphi(R) \land R)$.
\end{proof}

The next theorem states the main result of this paper.

\begin{theorem}\label{thm:main-sat}
Let $\tau$ contain at least one binary predicate, infinitely many unary predicates, and no functions.
\begin{itemize}
\item $\SAT(\tau\text{-}\FONN)$ and $\VAL(\tau\text{-}\FONN)$ are $\TOWERPOLY$-complete.
\item $\SAT(\tau\text{-}\FONK{2}{k})$ and $\VAL(\tau\text{-}\FONK{2}{k})$ are $\KAXX{k+1}$-hard if $k \geq 0$.
\end{itemize}
\end{theorem}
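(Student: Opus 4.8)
The plan is to read the upper bound of the first item off the machinery already assembled and to obtain every lower bound by transporting the complexity of modal team logic through the standard translation $\stx$. For the $\TOWERPOLY$ upper bound of $\SAT(\tau\text{-}\FONN)$ and $\VAL(\tau\text{-}\FONN)$ I would invoke Corollary~\ref{cor:fo2-upperbound}; this is legitimate precisely because $\tau$ is relational, as guaranteed by the hypothesis that $\tau$ has no functions. No further work is needed for the upper bound.

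For the matching lower bounds I would reduce from $\SAT(\MTL)$ and $\VAL(\MTL)$, both $\TOWERPOLY$-hard by Theorem~\ref{thm:mtl-sat}, along the map $\varphi \mapsto \stx(\varphi)$. First I would record that $\stx(\varphi)$ is computable in logarithmic space, mentions only the variables $x$ and $y$, and uses only the binary symbol $R$ together with finitely many unary symbols $P_i$; since $\tau$ supplies a binary predicate and infinitely many unary predicates, the image lies in $\tau\text{-}\FONN$ and the map is a $\leqlogm$-reduction. Satisfiability transfers in both directions by Lemma~\ref{lem:reverse-trans}. For validity I would use that $\varphi$ is valid iff $\negg\varphi$ is unsatisfiable, together with $\stx(\negg\varphi) = \negg\stx(\varphi)$; applying Lemma~\ref{lem:reverse-trans} to $\negg\varphi$ then gives that $\varphi$ is valid iff $\stx(\varphi)$ is valid.

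For the second item I would run the same reduction restricted to $\MTL_k$, the one genuinely new ingredient being a quantifier-rank bound on the translation. By induction on $\varphi$ one checks $\qr(\stx(\varphi)) = \md(\varphi)$: the Boolean connectives leave both measures unchanged, and each modality contributes exactly one quantifier, since the $\hook$ in the $\Box$-case introduces no quantifier beyond the governing $\forall y$. Hence $\stx$ maps $\MTL_k$ into $\FONK{2}{k}$, so the $\KAXX{k+1}$-hardness of $\SAT(\MTL_k)$ and $\VAL(\MTL_k)$ from Theorem~\ref{thm:mtl-sat} carries over.

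The bookkeeping (the logspace bound and the count of variables and symbols) is routine; the step deserving the most care is the validity transfer, where one must be sure that the backward direction of Lemma~\ref{lem:reverse-trans} genuinely re-reads every first-order model of $\stx(\varphi)$ as a Kripke model of $\varphi$, so that the translation introduces no spurious first-order counterexamples that would spoil validity.
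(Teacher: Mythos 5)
Your proposal is correct and takes essentially the same approach as the paper: the upper bound is read off Corollary~\ref{cor:fo2-upperbound}, and all lower bounds come from the $\leqlogm$-reduction $\varphi \mapsto \stx(\varphi)$ from $\SAT(\MTL)$ resp.\ $\SAT(\MTL_k)$ via Theorem~\ref{thm:mtl-sat} and Lemma~\ref{lem:reverse-trans}. Your explicit validity transfer (using $\stx(\negg\varphi) = \negg\stx(\varphi)$ and applying Lemma~\ref{lem:reverse-trans} to $\negg\varphi$) and the inductive check that $\qr(\stx(\varphi)) = \md(\varphi)$, so that $\stx$ maps $\MTL_k$ into $\FONK{2}{k}$, merely spell out bookkeeping that the paper compresses into the remark that the validity case is similar.
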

\begin{proof}
The upper bound for $\tau\text{-}\FONN$ is by Corollary~\ref{cor:fo2-upperbound}.
For the lower bounds, the mapping $\varphi \mapsto \stx(\varphi)$ is a reduction from $\SAT(\MTL)$ resp.\ $\SAT(\MTL_k)$ (see Theorem~\ref{thm:mtl-sat} and Lemma~\ref{lem:reverse-trans}).
The validity case is similar.
\end{proof}

\section{Conclusion}

In this paper, we proved that the logic $\FONN$ is complete for the class $\TOWERPOLY$ and hence decidable (Theorem~\ref{thm:main-sat}).
In particular, it has the finite model property (Theorem~\ref{thm:fo2-modelsize}), but exhibits non-elementary succinctness compared to classical $\FO^2$, which enjoys an exponential model property~\cite{fo2}.

Since validity is undecidable for two-variable dependence logic $\D^2$ \cite{il2} but not for $\FONN$, we conclude that team semantics with the dependence atom $\depop$ leads to undecidability, but with the Boolean negation $\negg$ it leads only to non-elementary blowup.

\smallskip

For $\FONKD{n}{k}$, where $n \geq 1$ and $k \geq 0$, we proved a dichotomy regarding its model checking complexity (Theorem~\ref{thm:mc-completeness}):
It is $\AEXPPOLY$-complete if $n = k = \omega$, and otherwise $\PSPACE$-complete.
However, both results hold for arbitrary sets of dependence atoms that are (polynomial time uniformly) definable in $\FO$.

In particular, $\MC(\TL)$ is $\AEXPPOLY$-complete, which naturally extends the result of \citet{mc-games} that $\MC(\D)$ is $\NEXPTIME$-complete.
However, the former model checking problem stays $\AEXPPOLY$-complete even without the dependence atom, while ordinary $\MC(\FO)$ becomes $\PSPACE$-complete for both classical and team semantics \cite{mc-games}.

\smallskip

In order to prove some of the lower bounds, we introduced a "standard translation" from $\MTL$ to $\FONN$ that extends the usual translation from $\ML$ to $\FO^2$, and also established $\KAXX{k+1}$-hardness of the satisfiability problem of $\FONK{2}{k}$.
Here, the upper bound is still missing.
In the modal setting, every satisfiable $\MTL_k$-formula has a $(k+1)$-fold exponential model.
It would be interesting to learn whether $\FONK{2}{k}$ has a similar small model property.
Together with Corollary~\ref{cor:mcfo2-upper1}, a positive answer would immediately yield a matching $\KAXX{k+1}$ upper bound.
 
\printbibliography

\clearpage

\appendix

\markboth{Appendix}{M. Lück}

\section{Proof details}

\begin{repproposition}{prop:rel}
Let $\calA$ be a structure, $\vv{t}$ a tuple of terms, $X \supseteq \Fr(\vv{t}\,)$.
For $i \in \{1,2\}$, let $T_i$ be a team in $\calA$ with domain $X_i \supseteq X$.
Then $\rest{T_1}{X} = \rest{T_2}{X}$ implies $\vv{t}\langle T_1 \rangle = \vv{t}\langle T_2 \rangle$.
Furthermore, for any tuple $\vv{x} \subseteq X$ of variables, $\vval{x}{T_1} = \vval{x}{T_2}$ iff $\rest{T_1}{\vv{x}} = \rest{T_2}{\vv{x}}$.
\end{repproposition}
\begin{proof}
For the first part of the proposition, we assume $\rest{T_1}{X} = \rest{T_2}{X}$, and (exploiting symmetry) show that $\vec{t}\langle T_1 \rangle \subseteq \vec{t}\langle T_2 \rangle$.
Let $\vec{a}  \in \vv{t}\langle T_1 \rangle$ be arbitrary.
Then $\vv{a} = \vv{t}\langle s\rangle$ for some $s \in T_1$.
By assumption, there is $s' \in T_2$ such that $\rest{s}{X} = \rest{s'}{X}$.
Since $\Fr(\vv{t}) \subseteq X$, clearly $\vv{t}\langle s \rangle = \vv{t}\langle s' \rangle$.
Consequently, $\vv{a} \in \vv{t}\langle T_2 \rangle$.

For the second part, suppose $\vval{x}{T_1} = \vval{x}{T_2}$ and let $s \in \rest{T_1}{\vv{x}}$ be arbitrary.
We show $s \in \rest{T_2}{\vv{x}}$, which suffices due to symmetry.
First, $s = \rest{s'}{\vv{x}}$ for some $s' \in T_1$.
Then $\vval{x}{s} = \vval{x}{s'} \in \vval{x}{T_1} = \vval{x}{T_2}$, and consequently, $\vval{x}{s} \in \vval{x}{T_2}$.
But then $\vval{x}{s} = \vval{x}{s''}$ for some $s'' \in T_2$, which implies $s = \rest{s''}{\vv{x}}$, and accordingly $s \in \rest{T_2}{\vv{x}}$.
\end{proof}

\begin{repproposition}{prop:iso}
Let $\calA$ be a structure, $\vv{x}$ a tuple of variables, and $V \dfn \{ s \colon \vv{x} \to \calA \}$.
Then $\pow{V}$ is the set of all teams in $\calA$ with domain $\vv{x}$, and the mapping $S \mapsto \vv{x}\langle S\rangle$ is an isomorphism
between the Boolean lattices $(\pow{V},\subseteq)$ and $(\pow{\calA^{\size{\vv{x}}}},\subseteq)$.
\end{repproposition}
\begin{proof}
Let $n \dfn \size{\vv{x}}$.
Clearly, every team with domain $\vv{x}$ is in $\pow{V}$, with $\pow{V}$ being a Boolean lattice \wrt $\subseteq$.
It is easy to show that $r$ is surjective:
Given $A \subseteq (\dom \calA)^n$, define the team $S \dfn \{ s \in V \mid \vv{x}\langle s\rangle \in A\}$.
Then $r(S) = \{ \vv{x}\langle s\rangle \mid s \in V \text{ and } \vv{x}\langle s\rangle \in A\} = A$.

Moreover, $r$ preserves $\subseteq$ in both directions:
Suppose $S \subseteq S'$ and let $\vv{a} = (a_1,\ldots,a_n) \in r(S)$ be arbitrary.
We show $\vv{a} \in r(S')$, as then $r(S) \subseteq r(S')$.
Since $\vv{a} \in r(S) = \vv{x}\langle S\rangle$, there exists $s \in S$ such that $\vv{x}\langle s\rangle = \vv{a}$.
As $s \in S'$, consequently, $\vv{a} \in \vv{x}\langle S'\rangle = r(S')$.

Conversely, suppose $r(S) \subseteq r(S')$ and let $s \in S$ be arbitrary.
As $\vv{x}\langle s\rangle \in r(S) \subseteq r(S') = \vv{x}\langle S'\rangle$, there exists an assignment $s' \in S'$ such that $\vv{x}\langle s\rangle = \vv{x}\langle s'\rangle$.
However, as $\dom s = \dom s' = \vv{x}$, necessarily $s = s'$, \ie, $s \in S'$.
As $S \subseteq S' \Leftrightarrow r(S) \subseteq r(S')$, and $r$ is surjective, we conclude that $r$ is also injective and hence an isomorphism.
\end{proof}

\begin{repproposition}{prop:def-of-supplement}
Let $T$ be a team with domain $X$ and $S$ a team with domain $X \cup \{x\}$ (with possibly $x \in X$), and let $X' \dfn X \setminus \{x\}$.
Then $\rest{S}{X'} = \rest{T}{X'}$ if and only if there is a supplementing function $f$ such that $S = T^x_f$.
\end{repproposition}
\begin{proof}
Let $\calA$ be the underlying structure.

"$\Rightarrow$":
Suppose $\rest{S}{X'} = \rest{T}{X'}$.
First, we show that for every $s' \in S$ there is $s \in T$ such that $s' = s^x_a$ for some $a$.
By assumption, $\rest{s'}{X'} = \rest{s}{X'}$ for some $s \in T$.
But then $s' = s^x_{s'(x)}$.

Next, we define the function $f(s) \dfn \Set{ a \in \calA | s^x_a \in S }$.
Then $T^x_f = \{ s^x_a \mid s \in T, a \in f(s) \} = \{ s^x_a \mid s \in T, s^x_a \in S \}$, which equals $S$ by the above argument.
In order to show that $f$ is a supplementing function, it remains to show that $f(s) \neq \emptyset$ for all $s \in T$, \ie, that for every $s\in T$ there exists $a \in \calA$ such that $s^x_a \in S$.
This follows again by $\rest{S}{X'} = \rest{T}{X'}$.

\smallskip

"$\Leftarrow$":
First, we show "$\subseteq$", \ie, that $s \in \rest{T}{X'}$ for arbitrary
$s \in \rest{S}{X'}$.
By definition, for such $s$ we have $s = \rest{s'}{X'}$ for some $s' \in S$.
Since $S = T^x_f$, there exists $s'' \in T$ and $a \in \calA$ such that $s' = (s'')^x_a$.
As $y \notin X'$, we have $s = \rest{s'}{X'} = \rest{s''}{X'} \in \rest{T}{X'}$.

For the other direction, "$\supseteq$", let $s \in \rest{T}{X'}$ be arbitrary.
Then $s = \rest{s'}{X'}$ for some $s' \in T$.
As $S = T^x_f$, there exists some $s'' \in S$ and $a \in \calA$ such that $s'' = (s')^x_a$.
Again we have $s = \rest{s'}{X'} = \rest{s''}{X'}$, \ie, $s \in \rest{S}{X'}$.
\end{proof}

\begin{algorithm}[t]
\DontPrintSemicolon
\SetKwInOut{Input}{Input}
\SetKwInOut{Output}{Output}

\nonl\textbf{Algorithm:} $\mathsf{check}(\alpha,\calA,\calJ)$ for $\alpha\in \tau\text{-}\SO$ in negation normal form, a $\tau$-structure $\calA$, and a second-order interpretation $\calJ$ of $\Fr(\alpha)$.

\vspace{5pt}

\If{$\alpha$ is an atomic formula or the negation of an atomic formula}{
\Return{\emph{\texttt{true}} if $(\calA,\calJ) \vDash \alpha$ and \emph{\texttt{false}} otherwise};
}
\lElseIf{$\alpha = \gamma_1 \lor \gamma_2$}{
existentially choose $i \in \{1,2\}$ and let $\alpha \leftarrow \gamma_i$
}
\lElseIf{$\alpha = \gamma_1 \land \gamma_2$}{
universally choose $i \in \{1,2\}$ and let $\alpha \leftarrow \gamma_i$
}
\ElseIf{$\alpha = \Game X \gamma$ for $\Game \in \{\exists, \forall\}$ and $X \in \Var(\alpha)$}{
$\alpha \leftarrow \gamma$\\
\If{$X \in \Fr(\gamma)$}{
\leIf{$\Game = \exists$}{switch to existential branching}{switch to universal branching}
\If{$X$ is a first-order variable}{
non-deterministically choose $a \in \size{\calA}$ and let $\calJ \leftarrow \calJ^X_a$
}
\ElseIf{$X$ is a function variable}{
non-deterministically choose $F \colon \size{\calA}^{\arity{X}} \to \size{\calA}$ and let $\calJ \leftarrow \calJ^X_F$
}
\ElseIf{$X$ is a relation variable}{
non-deterministically choose $B \subseteq \size{\calA}^{\arity{X}}$ and let $\calJ \leftarrow \calJ^X_B$
}
}
}
\Return{$\mathsf{check}(\alpha,\calA,\rest{\calJ}{\Fr(\alpha)})$}
\medskip
\caption{Algorithm for $\MC(\SO)$\label{alg:mc-so}}
\end{algorithm}

\begin{repproposition}{prop:so-mc}
Model checking of $\SO$ is decidable in time $2^{n^\bigO{1}}$ and with at most $\qr(\alpha)$ alternations, where $(\calA,\calJ,\alpha)$ is the input.
\end{repproposition}
\begin{proof}
\Wloss $\alpha$ is in negation normal form, \ie, $\neg$ appears only in front of atomic formulas.
Let $A \dfn \dom \calA$.
We abbreviate
\[
\size{\calJ} \dfn \sum_{\substack{X \in \dom \calJ\\X \text{ second-order}}} \size{\calJ(X)}\text{,}
\]
\ie, the sum of the sizes of functions and relations in $\calJ$.

We replace $\calJ$ by $\calJ' \dfn \rest{\calJ}{\Fr(\alpha)}$, which can be done in time polynomial in $\size{\calJ}$ and $\size{\alpha}$.

Since for second-order $X$ we then have $\size{\calJ'(X)} \leq \size{A}^{\arity{X}} \leq \size{A}^{\size{\alpha}}$, and also $\size{\dom \calJ'} = \size{\Fr(\alpha)} \leq \size{\alpha}$, the sum $\size{\calJ'}$ is then bounded from above by $\size{\alpha} \cdot \size{A}^{\size{\alpha}}$.

Now we run Algorithm~\ref{alg:mc-so}, which performs at most $\size{\alpha}$ recursive calls, and clearly at most $\size{\alpha}$ alternations.
Furthermore, the $i$-th recursive call is of the form $\mathsf{check}(\alpha_i,\calA,\calJ_i)$ with $\size{\alpha_i} \leq \size{\alpha}$ and, by the same argument as before, $\size{\calJ_i} \leq \size{\alpha} \cdot \size{A}^\size{\alpha}$.

For this reason, it is easy to see that the overall runtime is polynomial in $\size{\calJ}$ and $\size{A}^{\size{\alpha}}$.
\end{proof}

\begin{replemma}{lem:supp-as-rest}
Let $T$ have domain $\vv{x}$ and $S$ have domain $\vv{x} \cup \{y\}$ (with possibly $y \in X$), and let $X' \dfn \vv{x} \setminus \{x\}$.
Then $\rest{T}{X'} = \rest{S}{X'}$ if and only if $\calA \vDash \pi(\vv{x}\langle T\rangle,\vv{x};y\langle S\rangle)$, where $\pi(T,S) \dfn  \forall \vv{x}((\exists y T \vv{x}) \leftrightarrow (\exists y S \vv{x};y))$.
\end{replemma}
\begin{proof}
First, let us consider the case $y \notin X$, \ie, $X' = X$.
Then:
\begin{align*}
&\rest{T}{X'} = \rest{S}{X'}\\
\Leftrightarrow\;& \vval{x}{T} = \vval{x}{S}\tag{by Proposition~\ref{prop:rel}}\\
\Leftrightarrow\;& \forall \vv{a} : (\vv{a} \in \vval{x}{T} \Leftrightarrow \exists b : (\vv{a},b)\in (\vv{x};y)\langle S \rangle)\tag{since $T$ has domain $\vv{x}$}\\
\Leftrightarrow\;& \calA \vDash \psi(\vv{x}\langle T\rangle,(\vv{x};y)\langle S\rangle)\tag{since $\exists y T\vv{x} \equiv T\vv{x}$}
\end{align*}
Next, assume $y \in X$ and \wloss $y = x_n$.
Then $\vv{x};y = \vec{x}$ and $X' = \{x_1,\ldots,x_{n-1}\}$.
Let $\vv{x}' = (x_1,\ldots,x_{n-1})$.
Analogously as before, we have:
\begin{align*}
&\rest{T}{X'} = \rest{S}{X'}\\
\Leftrightarrow\;& \vv{x}'\langle T\rangle = \vv{x}'\langle S\rangle\\
\Leftrightarrow\;& \forall \vv{a} : \big((\exists b : (\vv{a},b) \in \vval{x}{T}) \Leftrightarrow (\exists b : (\vv{a},b)\in (\vv{x};y)\langle S \rangle)\big)\\
\Leftrightarrow\;& \calA \vDash \psi(\vv{x}\langle T\rangle,(\vv{x};y)\langle S\rangle)\qedhere
\end{align*}
\end{proof}

\end{document}